\def\eE{\mathbb E}
\newtheorem{theorem}{Theorem}%[section]
\newtheorem{lemma}{Lemma}
\newtheorem{claim}{Claim}
\newtheorem{prop}{Proposition}
\newtheorem{defn}{Definition}
\theoremstyle{definition}\newtheorem{remark}{Remark}
\theoremstyle{definition}\newtheorem{example}{Example}
\newcommand{\be}{\begin{equation}}
\newcommand{\ee}{\end{equation}}
\newcommand{\ben}{\begin{equation*}}
\newcommand{\een}{\end{equation*}}
\newcommand{\ba}{\begin{eqnarray}}
\newcommand{\ea}{\end{eqnarray}}
\newcommand{\blue}{\color{black}}
\def \arxiv {1} %toggle between ArXiV (1) and ISIT (0) versions
\begin{document}
\title{Multiple Access Channel Simulation}
\author{Gowtham R. Kurri, \IEEEmembership{Member,~IEEE}, Viswanathan~Ramachandran, Sibi~Raj~B.~Pillai, \IEEEmembership{Member,~IEEE}, Vinod~M.~Prabhakaran, \IEEEmembership{Member,~IEEE}\thanks{This work
was supported by the Department of Atomic Energy, Government of India,
under Project RTI4001. The work of Viswanathan Ramachandran was supported in part by Bharti Centre for Communication, IIT Bombay. The work of Sibi~Raj~B.~Pillai was supported by DST-Imprint Project No. 2018/001280. The work of Vinod M. Prabhakaran was supported by the Science and Engineering Research Board, India under MTR/2020/000308. This article was presented in part at the 2021 IEEE International Symposium on Information Theory (ISIT)

Gowtham R. Kurri was with the Tata Institute of Fundamental Research,
Mumbai 400005, India. He is now with the School of Electrical, Computer and Energy Engineering, Arizona State University, Tempe, AZ 85287 USA
(email: \protect\url{gowthamkurri@gmail.com}). 

Viswanathan Ramachandran was with the Department of Electrical Engineering, Indian Institute of Technology Bombay, Mumbai 400076, India. He is now with the EE Department, TU Eindhoven, 5612 AZ Eindhoven, The Netherlands
(email: \protect\url{v.ramachandran@tue.nl}).

Sibi~Raj~B.~Pillai is with the Department of Electrical Engineering, Indian Institute of Technology Bombay, Mumbai 400076, India 
(email: \protect\url{bsraj@ee.iitb.ac.in}).

 Vinod M. Prabhakaran is with the School of Technology and Computer Science, Tata Institute of Fundamental Research, Mumbai 400005, India 
(email: \protect\url{vinodmp@tifr.res.in}). }}
\maketitle

\begin{abstract}
We study the problem of simulating a two-user multiple-access channel (MAC) over a multiple access network of noiseless links. Two encoders observe independent and identically distributed (i.i.d.) copies of a source random variable each, while a decoder observes i.i.d. copies of a side-information random variable. {\blue There are rate-limited noiseless communication links between each encoder and the decoder,  and there is independent pairwise shared randomness between all the three possible pairs of nodes}. The decoder has to output approximately i.i.d. copies of another random variable jointly distributed with the two sources and the side information. We are interested in the rate tuples which permit this simulation. This setting can be thought of as a multi-terminal  generalization of the point-to-point channel simulation problem studied by Bennett et al. (2002) and Cuff (2013). {\blue When the pairwise shared randomness between the encoders is absent, the setting reduces to a special case of MAC simulation using another MAC studied by Haddadpour et al.~(2013). We establish that the presence of encoder shared randomness can strictly improve the communication rate requirements. We first show that the inner bound derived from Haddadpour et al.~(2013) is tight when the sources at the encoders are conditionally independent given the side-information at the decoder. This result recovers the existing results on point-to-point channel simulation and function computation over such multi-terminal networks. We then explicitly compute the communication rate regions for an example both with and without the encoder shared randomness and demonstrate that its presence strictly reduces the communication rates. Inner and outer bounds for the general case are also obtained. }
\end{abstract}
\begin{IEEEkeywords}
Channel simulation, strong coordination, pairwise shared randomness, multiple access channel, random binning.
\end{IEEEkeywords}

\section{Introduction}

What is the minimum amount of communication required to create correlation remotely? The \emph{channel simulation} problem seeks to answer this fundamental question. In the point-to-point formulation, an encoder observing an independent and identically distributed (i.i.d.) source $X^n$ with distribution $q_X$ sends a message through a noiseless link to a decoder. The decoder has to output $Y^n$ such that the total variation distance between the joint distribution on $(X^n,Y^n)$ and the i.i.d. joint distribution induced by passing the source $X^n$ through a discrete memoryless channel $q_{Y|X}$ vanishes asymptotically. This requirement that the synthesized joint distribution be \emph{close} to the desired i.i.d. joint distribution in total variation distance has been termed as \emph{strong coordination}~\cite{cuff2010coordination}, which is also the focus of this paper. 
A source of common randomness accessible to both the encoder and the decoder may assist them in the aforementioned task. This framework was first investigated by Bennett et al.~\cite{BennettSST02} assuming unlimited common randomness, where they established a `reverse Shannon theorem' to synthesize a noisy channel from a noiseless channel\footnote{Referring to Shannon's channel coding theorem as the simulation of a noiseless channel using a noisy channel.}. It was shown that the minimum communication rate is nothing but the mutual information $I(X;Y)$ of the joint distribution. Harsha et al.~\cite{Harsha2010} studied the non-asymptotic version of this problem. Winter~\cite{winter2002compression} studied the setting with limited common randomness, albeit only for a certain extremal operating point. Cuff~\cite{cuff2013distributed} and Bennett et al.~\cite{6757002} independently determined the entire optimal trade-off between communication and shared randomness rates. Later, Wilde et al.~\cite{wilde2012information} obtained a similar trade-off in the quantum information-theoretic setting, generalizing the above result. Yassaee~et~al.~\cite{YassaeeGA15} established a similar trade-off for channel simulation in a point-to-point network with side-information at the decoder. {\blue Simulation of a channel using another channel (instead of the noiseless communication link) was studied by Haddadpour et al.~\cite{HaddadpourYBGAA17} and Cervia et al.~\cite{CerviaLLB20}.} A weaker form of coordination, namely, \emph{empirical coordination}, where only the empirical distribution of the sequence of samples is required to be close to the desired distribution, has also been studied in point-to-point networks~\cite{cuff2010coordination,YassaeeGA15,CerviaLBT16,Treust17,ChouBK18,MylonakisSS19}.

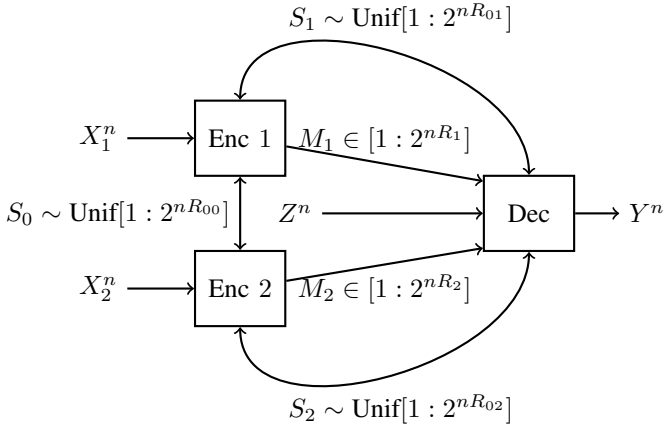
\begin{figure}[h]
\centering
\begin{tikzpicture}[thick]
\node (d1) at (-2,1) [rectangle, draw, right, minimum height=1.0cm, minimum width = 1.2cm]{Enc $1$};
\node (d2) at (1.84,0) [rectangle, draw, right, minimum height=1.0cm, minimum width = 1.2cm]{Dec};
\draw[->] (d1) -- (d2.145) node[midway,above] {$M_1 \in [1:2^{nR_1}]$};
\draw[->] (d2) --++(1.2,0) node[right]{$Y^n$};
\node (c1) at (0.75,2.6) {$S_1 \sim\textup{Unif}[1:2^{nR_{01}}]$};
\node (c1) at (0.75,-2.6) {$S_2 \sim\textup{Unif} [1:2^{nR_{02}}]$};
\draw[<-] (d1) --++(-1.5,0) node[left]{$X_1^n$};
\draw [<->] (d1.north) to [out=90,in=90] (d2.north);
\node (e1) at (-2,-1) [rectangle, draw, right, minimum height=1.0cm, minimum width = 1.2cm]{Enc $2$};
\draw[<-] (e1) --++(-1.5,0) node[left]{$X_2^n$};
\draw[->] (e1) -- (d2.-143) node[midway,below] {$M_2 \in [1:2^{nR_2}]$};
\draw [<->] (e1.south) to [out=-90,in=-90] (d2.south);
\draw[<-] (d2) --++(-2.75,0) node[left]{$Z^n$};
\draw[<->] (d1) -- (e1) node[midway,left] {$S_0\sim\textup{Unif}[1:2^{nR_{00}}]$};
\end{tikzpicture}
\caption{Strong coordination over a multiple-access network. Encoder $j\in \{1,2\}$ on observing the source $X_j^n$ and shared randomness $S_0,S_j$ sends a message $M_j$ over a noiseless link to the decoder, which has side-information $Z^n$. Here $(X_{1i},X_{2i},Z_i)$, $i=1,\dots,n$, are i.i.d. with $q_{X_1X_2Z}$. Also $(X_1^n,X_2^n,Z^n),S_0,S_1$, and $S_2$ are mutually independent. The decoder has to output $Y^n$ such that $(X_{1i},X_{2i},Z_i,Y_i)$, $i=1,\dots,n$, are approximately i.i.d. with $q_{X_1X_2ZY}$.} \label{fig:encSR}
\end{figure}
Channel simulation problems may also be thought of as distributed computation of randomized functions. Distributed computation of (deterministic) functions has received much attention in the computer science literature (see, e.g., \cite{kushilevitz_nisan_1996,BravermanR14} and references therein) and the information theory literature~\cite{KornerM79,Yamamoto82,Kaspi85,HanK87,AlonO96,OrlitskyR2001,nazer2007computation,NazerG07b,Gastpar08,ma2011some,AppuswamyFNZ11,KowshikK12,KuzuokaW15,SefidgaranGA15,SefidgaranT16,Watanabe20}. Two terminal interactive function computation was studied in \cite{Yamamoto82,Kaspi85,AlonO96,OrlitskyR2001,ma2011some}. Distributed multi-terminal function computation in a multiple-access network of noiseless links was studied by K\"{o}rner and Marton~\cite{KornerM79}, Han and Kobayashi~\cite{HanK87}, Kuzuoka and Watanabe~\cite{KuzuokaW15}, Watanabe~\cite{Watanabe20}, and Sefidgaran et al.~\cite{SefidgaranGA15}. Function computation in more general graph networks where a single node seeks to compute a function of the inputs at the other nodes was studied by Appuswamy et al.~\cite{AppuswamyFNZ11}, Kowshik and Kumar~\cite{KowshikK12}, and Sefidgaran and Tchamkerten~\cite{SefidgaranT16}. A related line of work is function computation over multiple-access channels studied by  Nazer and Gastpar~\cite{nazer2007computation,NazerG07b}, and Gastpar~\cite{Gastpar08}, which, in contrast to the above works, exploit the computation performed by the channel itself. This is fundamentally different from computation over a network of noiseless links where the communication channel does not perform any computation.

 There are relatively few conclusive results on channel simulation in multi-terminal networks where possibly randomized functions of the inputs need to be computed. A natural generalization of the point-to-point network to a multi-terminal setting is the cascade network~\cite{SatpathyC16,vellambi2017strong}. Satpathy and Cuff~\cite{SatpathyC16} considered a cascade network under a secrecy requirement and obtained the optimal trade-off between communication and common randomness rates. Vellambi et al.~\cite{vellambi2017strong} obtained the optimal rates for specific settings, e.g., when the communication topology matches the coordination structure. Empirical coordination in multi-terminal networks has been studied in~\cite{cuff2010coordination,BereyhiBMA13,MylonakisSS19a}.

In this paper, we study multiple access channel simulation (randomized function computation) over a multiple access network of noiseless links. In particular, there are two encoders who share noiseless communication links with a decoder -- see Figure~\ref{fig:encSR}. The two encoders and the decoder observe i.i.d. copies of the sources $X_1$ and $X_2$, and the side-information $Z$, respectively, that are generated according to a given distribution $q_{X_1X_2Z}$. Each encoder has access to a resource of pairwise shared randomness with the decoder. {\blue There is another resource of pairwise shared randomness between the encoders. The three pairwise shared randomness resources are independent of each other and also jointly independent of the sources and side information.} In addition, the encoders and the decoder may privately randomize. The encoders transmit messages through noiseless links to the decoder, whose output together with the input sources and decoder side-information should be approximately i.i.d. with $q_{X_1X_2ZY}$. 

{\blue The work which is closest to ours is by Haddadpour et al.~\cite[Section~IV]{HaddadpourYAG13}, who studied simulation of a multiple-access channel from another multiple-access channel, a more general resource than the multiple-access network of noiseless links considered here. The main difference is the presence of shared randomness between the encoders, i.e., $S_0$, in our model.
Haddadpour et al.~\cite{HaddadpourYAG13} obtained an inner bound to the rate-region for their setting. In this work, we investigate the role of this additional shared randomness resource and specifically ask the question: Can this additional pairwise shared randomness resource between the encoders (that is not available to the decoder) strictly improve the communication rate requirements for channel simulation? We answer this question in the affirmative. This complements the results of \cite{cuff2013distributed,HaddadpourYAG13} which established that shared randomness between encoder(s) and decoder is a useful resource for channel simulation. The configuration of a shared randomness resource between the encoders is reminiscent of the setting of multiple access channels with \emph{partially cooperating encoders}~\cite{willems1983discrete} -- see also \cite{bross2012dirty,noorzad2017unbounded}. However, in the present setting, the encoders are not allowed to cooperate after observing the sources.}    

The problem of finding the optimal communication rates for computing possibly randomized functions at the decoder in this multiple-access network of noiseless links remains largely open even for the case of independent sources. Sefidgaran and Tchamkerten~\cite{SefidgaranT16} determined the optimal communication rates for computing deterministic functions\footnote{It turns out that shared randomness does not aid in deterministic function computation -- see Remark~\ref{remark:CR}.} when the sources at the encoders are conditionally independent given the side-information at the decoder, i.e., $I(X_1;X_2|Z)=0$. Note that Sefidgaran and Tchamkerten~\cite{SefidgaranT16} in fact studied a more general setting consisting of multiple terminals over a rooted multi-level directed tree, where the multiple-access network of noiseless links is a special case. Atif et al.~\cite{AtifPP20} studied multiple access channel simulation in the presence of three-way common randomness instead of pairwise shared randomness as above and obtained an achievable inner bound. Atif et al.~\cite{atif2019faithful} obtained a similar inner bound in the quantum information-theoretic setting. {\blue After the submission of this work, the authors learnt about a concurrent work by Atif et al.~\cite{atif2020source}, which, like \cite{HaddadpourYAG13}, considered pairwise shared randomness between each encoder and the decoder (i.e., no encoder shared randomness), and derived inner and outer bounds on the rate region. Achievable schemes using algebraic-structured codes exploiting the specific structure of the function to be computed were further explored by Krithivasan and Pradhan~\cite{krithivasan2011distributed} and Atif and Pradhan~\cite{atifsandeep}, where the latter considered the quantum setting, in the spirit of K\"{o}rner and Marton~\cite{KornerM79}.}

{\blue\noindent \textbf{Main Contributions.}
We analyze the role of pairwise shared randomness in multiple-access channel simulation and establish that the presence of encoder shared randomness can strictly improve the communication rates required. 
\begin{itemize}
\item First, for the case when the encoder shared randomness is absent, we obtain the optimal trade-off between the communication rates and one of the two remaining shared randomness rates (under sufficiently large other shared randomness rate) when $I(X_1;X_2|Z)=0$ (Theorem~\ref{thm:indep}). This recovers the results on point-to-point channel simulation~\cite{cuff2013distributed}, and also shows that the inner bound of Haddadpour et al.~\cite[Theorem~3]{HaddadpourYAG13} is tight for the special case under consideration. When specialized to deterministic function computation, our Theorem~\ref{thm:indep} also recovers the result of Sefidgaran and Tchamkerten~\cite{SefidgaranT16} restricted to multiple-access network of noiseless links. However,  the techniques used there do not readily generalize to randomized function computation (see the discussion below Theorem~\ref{thm:indep}). 
\item Next, we explicitly compute the communication rate regions for an example both with and without the encoder shared randomness and show that its presence strictly reduces the communication rates required for coordination (see Example~\ref{example} and Section~\ref{section:exampleoptimality}).
\item We also derive general inner and outer bounds on the rate region with all the three pairwise shared randomness resources present (Theorems~\ref{thm:encsideIBsr}, \ref{thm:encsideOBsr}, and \ref{Theorem:outbndindp}). Our inner bound proofs are mostly in the spirit of Haddadpour et al.~\cite[Theorem~3]{HaddadpourYAG13}. 
\end{itemize}
}

{\blue The remainder of this paper is organized as follows. We present our system model in Section~\ref{UPDATEDsec:SMR}, and give the results for the case when the encoder shared randomness is absent in Section~\ref{secton:mainresults1}. In Section~\ref{sec:ex1}, we show through an example that the presence of encoder shared randomness can lead to a strict improvement of the communication rate region. General inner and outer bounds on the rate region are presented in Section~\ref{section:encSR}. The proofs of our main results are given in Section~\ref{proofs}. 
}

\section{System Model} \label{UPDATEDsec:SMR}
{\blue We study the problem of strong coordination of signals in a three-node multiple-access network. There are two encoders with inputs $X_1^n$ and $X_2^n$, respectively, and a decoder with side-information $Z^n$, where $(X_{1i},X_{2i},Z_i)$, $i=1,\dots,n$, are independent and identically distributed (i.i.d.) with distribution $q_{X_1X_2Z}$, with $X_1,X_2,$ and $Z$ taking values in finite alphabets $\mathcal{X}_1,\mathcal{X}_2$, and $\mathcal{Z}$, respectively. 
%For $j=1,2$, there is a pairwise shared randomness $S_j\in[1:2^{nR_{0j}}]$ between encoder $j$ and the decoder. 
For $j=1,2,$ encoder $j$ and the decoder have access to a pairwise shared randomness $S_j$ uniformly distributed on $[1:2^{nR_{0j}}]$. There is another pairwise shared randomness $S_0$ between the two encoders that is uniformly distributed on $[1:2^{nR_{00}}]$. The random variables $S_0, S_1$ and $S_2$ are independent and also jointly independent of $(X_1^n,X_2^n,Z^n)$. Encoder $j \in \{1,2\}$ observes $X_j^n$ and shared randomness $(S_0,S_j)$, and sends a message $M_j\in[1:2^{nR_j}]$ over a noiseless communication link to the decoder. The decoder observes $M_1,M_2$,  in addition to the shared randomness $S_1,S_2,$ and side-information $Z^n$. The goal is to output $Y^n$ (where $Y_i$, $i=1,\dots,n,$ takes values in a finite alphabet $\mathcal{Y}$) which along with the input sources and decoder side-information is approximately distributed according to $q^{(n)}_{X_{1}X_{2}Z Y}(x_1^n,x_2^n,z^n,y^n):=\prod_{i=1}^nq_{X_1X_2ZY}(x_{1i},x_{2i},z_i,y_i)$ (see Figure~\ref{fig:encSR}).

\begin{defn}\label{defn:code}
A $(2^{nR_1}, 2^{nR_2}, 2^{nR_{00}}, 2^{nR_{01}}, 2^{nR_{02}}, n)$ \emph{code} consists of two randomized encoders $p^{\emph{E}_1}(m_1|s_0,s_1,x_1^n)$ and $p^{\emph{E}_2}(m_2|s_0,s_2,x_2^n)$ and a randomized decoder $p^{\emph{D}}(y^n|s_1,s_2,m_1,m_2,z^n)$, where $s_0\in[1:2^{nR_{00}}]$ and $s_j\in[1:2^{nR_{0j}}]$, $m_j\in[1:2^{nR_j}]$, $j=1,2$.
\end{defn}
The joint distribution of $(S_0,S_1,S_2,X_1^n,X_2^n,Z^n,M_1,M_2,Y^n)$ and the resulting induced joint distribution on $(X_1^n,X_2^n,Z^n,Y^n)$ are respectively given by
\begin{align*}
&p(s_0,s_1,s_2,x_1^n,x_2^n,z^n,m_1,m_2,y^n) \notag\\
&\hspace{12pt}=\frac{1}{2^{n(R_{00}+R_{01}+R_{02})}}p(x_1^n,x_2^n,z^n)\prod_{j=1}^2p^{\text{E}_j}(m_j|s_0,s_j,x_j^n) \\
& \hspace{24pt}\times p^{\text{D}}(y^n|s_1,s_2,m_1,m_2,z^n),
\end{align*}
and
\begin{align*}
&p^{\text{ind}}(x_1^n,x_2^n,z^n,y^n)\\
&\hspace{12pt}=\sum_{s_0,s_1,s_2,m_1,m_2}p(s_0,s_1,s_2,x_1^n,x_2^n,z^n,m_1,m_2,y^n).
\end{align*}

Recall that the total variation between two p.m.f.'s $p_X$ and $q_X$ on the same alphabet $\mathcal{X}$ is defined as
\begin{align*}
||p_X-q_X||_1 \triangleq \frac{1}{2} \sum_{x\in\mathcal{X}} |p_X(x)-q_X(x)|.
\end{align*}
\begin{defn} \label{def:ach}
A rate tuple $(R_1, R_2, R_{00}, R_{01}, R_{02})$ is said to be \emph{achievable for a distribution} $q_{X_1X_2ZY}$ if there exists a sequence of $(2^{nR_1},2^{nR_2},2^{R_{00}},2^{nR_{01}},2^{nR_{02}},n)$ codes such that
\end{defn} 
\begin{align}\label{eqn:correctness}
\lim_{n \to \infty} ||p^{\text{ind}}_{X_1^n,X_2^n,Z^n,Y^n}-q^{(n)}_{X_1X_2ZY}||_{1}=0,
\end{align}
where $q^{(n)}_{X_1X_2ZY}$ is the product distribution given by
\begin{align*}
q^{(n)}_{X_1X_2ZY}(x_1^n,x_2^n,z^n,y^n):=\prod_{i=1}^n q_{X_1X_2ZY}(x_{1i},x_{2i},z_i,y_i).
\end{align*} 
\begin{remark}\label{remark:CR}
If $q_{X_1X_2ZY}$ is such that $Y$ is a deterministic function of $(X_1,X_2,Z)$, then the pairwise shared randomness and the private randomness at the encoders and the decoder do not have any effect on the communication rates. In fact, more generally, common randomness available to both the encoders and the decoder  does not help to improve communication rates in this case.
\if \arxiv 0
This follows from standard probabilistic method arguments. 
\fi
\if \arxiv 1%
This follows from standard probabilistic method arguments\footnote{To see this, notice that if $Y=f(X_1,X_2,Z)$, then \eqref{eqn:correctness} reduces to $\lim\limits_{n\rightarrow \infty}P(Y^n\neq f(X_1^n,X_2^n,Z^n))=0$. Let $W$ denote the random variable corresponding to the common randomness. A simple application of the law of total probability implies that there exists a realization $w^*$ of $W$ such that $P(Y^n\neq f(X_1^n,X_2^n,Z^n)|W=w^*)\leq P(Y^n\neq f(X_1^n,X_2^n,Z^n))$. Therefore, new deterministic encoding and decoding functions can be defined by fixing $W=w^*$ and at the same time not increasing the probability of error.}.
\fi
\end{remark}
\begin{defn}\label{defn:new}
The \emph{rate region} $\mathcal{R}_{\textup{MAC-coord}}$ is the closure of the set of all achievable rate tuples $(R_1,R_2,R_{00},R_{01},R_{02})$. 
\end{defn}
Let $\mathcal{R}_{\textup{MAC-coord, UL-$(S_0,S_1,S_2)$}}$ be the rate region when all the three pairwise shared randomness are unlimited, i.e., 
\begin{align}&\mathcal{R}_{\textup{MAC-coord, UL-$(S_0,S_1,S_2)$}}=\{(R_1,R_2) : \exists \ R_{00}, R_{01}\ \text{and}\ R_{02} \nonumber\\
&\text{s.t.}\ (R_1,R_2,R_{00},R_{01},R_{02}) \in \mathcal{R}_{\textup{MAC-coord}}\}.
\end{align}
For purposes of comparison, we separately consider a special case when the encoder shared randomness $S_0$ is absent. A code, an achievable rate tuple, and the rate region can be defined analogously. In particular, the code and an achievable rate tuple can be defined similar to Definitions~\ref{defn:code} and \ref{def:ach} by removing the respective coordinates containing $R_{00}$ and by treating $S_0=\emptyset$. The \emph{rate region} $\mathcal{R}_{\textup{MAC-coord}}^{\text{NO-$S_0$}}$ is the closure of the set of all achievable rate tuples $(R_1,R_2,R_{01},R_{02})$ when $S_0$ is absent. Let $\mathcal{R}_{\textup{MAC-coord, UL-$S_2$}}^{\textup{NO-$S_0$}}$ be the rate region when the shared randomness $S_2$ is unlimited, i.e., 
\begin{align}
\mathcal{R}_{\textup{MAC-coord, UL-$S_2$}}^{\text{NO-$S_0$}}=\{(&R_1,R_2,R_{01}) : \exists \ R_{02} \ \text{s.t.}\nonumber\\
& (R_1,R_2,R_{01},R_{02}) \in \mathcal{R}_{\textup{MAC-coord}}^{\textup{NO-$S_0$}}\}.
\end{align}
The \emph{communication rate region} $\mathcal{R}_{\textup{MAC-coord, UL-$(S_1,S_2)$}}^{\textup{NO-$S_0$}}$ is given by $\{(R_1,R_2) : \exists \ R_{01}\  \text{and}\ R_{02} \ \text{s.t.}\ (R_1,R_2,R_{01},R_{02}) \in \mathcal{R}_{\textup{MAC-coord}}^{\textup{NO-$S_0$}}\}$. The communication rate region can be thought of as the trade-off between the communication rates $R_1$ and $R_2$ under sufficiently large pairwise shared randomness rates $R_{01}$ and $R_{02}$. 
}

\section{\blue Rate Region with No Encoder Shared Randomness}  \label{secton:mainresults1}
{\blue In this section, we present our results for the special case when the encoder shared randomness $S_0$ is absent (see Figure~\ref{fig:noiseless}). Our main result here is a complete characterization of the rate region $\mathcal{R}_{\textup{MAC-coord, UL-$S_2$}}^{\text{NO-$S_0$}}$ when the sources are conditionally independent given the side-information.
We first present an inner bound to the rate region $\mathcal{R}_{\textup{MAC-coord}}^{\text{NO-$S_0$}}$ that follows from Haddadpour~et al.~\cite[Theorem 3]{HaddadpourYAG13} which studies multiple-access channel simulation using another multiple-access channel as a resource (instead of a multiple-access network of noiseless links as in this work).} 
\begin{theorem}[Inner Bound with No Encoder Shared Randomness] \label{thm:encsideIB}
Given a p.m.f. $q_{X_1X_2ZY}$, the rate tuple $(R_1,R_2,R_{01},R_{02})$ is in {\blue$\mathcal{R}_{\textup{MAC-coord}}^{\emph{NO-$S_0$}}$} if
\begin{align*} 
R_1 &\geq I(U_1;X_1|U_2,Z,T) \\
R_2 &\geq I(U_2;X_2|U_1,Z,T) \\
R_1+R_2 &\geq I(U_1,U_2;X_1,X_2|Z,T) \\
R_1+R_{01} &\geq I(U_1;X_1,X_2,Y|Z,T) \notag\\
&\hspace{24pt}-I(U_1;U_2|Z,T) \\
R_2+R_{02} &\geq I(U_2;X_1,X_2,Y|Z,T) \notag\\
&\hspace{24pt}-I(U_1;U_2|Z,T) \\
R_1+R_2+R_{01} &\geq I(U_1;X_1,X_2,Y|Z,T)\notag\\
&\hspace{24pt}+I(U_2;X_2|U_1,Z,T) \\
R_1+R_2+R_{02} &\geq I(U_2;X_1,X_2,Y|Z,T)\notag\\
&\hspace{24pt}+I(U_1;X_1|U_2,Z,T) \\
R_1+R_2+R_{01}+R_{02} &\geq I(U_1,U_2;X_1,X_2,Y|Z,T),
\end{align*}
for some p.m.f. 
\begin{align}
p(x_1,&x_2,z,t,u_1,u_2,y)=\nonumber\\
&\hspace{12pt}p(x_1,x_2,z)p(t)p(u_1|x_1,t)p(u_2|x_2,t)p(y|u_1,u_2,z,t)\label{eq:pmfthm1}
\end{align}
such that 
\begin{align*}
\sum\limits_{u_1,u_2}p(x_1,x_2,z,u_1,u_2,y|t) &= q(x_1,x_2,z,y),\ \text{for all}\ t.
\end{align*}
\end{theorem}

\begin{remark}
The inner bound in Theorem~\ref{thm:encsideIB} without side-information $Z$ follows as a corollary of Haddadpour~et al.~\cite[Theorem 3]{HaddadpourYAG13}. In particular, let the resource mutiple access channel in \cite[Theorem 3]{HaddadpourYAG13} consists of two independent channels which can be converted to two noiseless links by operating at the rates of respective channel capacities. The details are analogous to how the inner bound of point-to-point channel simulation using a noiseless link \cite{cuff2013distributed} can be recovered from that of point-to-point channel simulation using another channel \cite[Theorem~1]{HaddadpourYAG13}  (see \cite[Remark~2]{HaddadpourYAG13}). However, for completeness, we present a proof of Theorem~\ref{thm:encsideIB} incorporating the side-information $Z$ with minor differences to that of \cite[Theorem~3]{HaddadpourYAG13} in Appendix~\ref{app:pfThm1}.
\end{remark}

{\blue The intuition behind the auxiliary random variables $U_1$ and $U_2$ is analogous to the auxiliary random variable in the point-to-point channel simulation setting~\cite{cuff2013distributed,6757002}. In particular, $U_1$ and $U_2$ may be thought of as quantized versions of the observations $X_1$ and $X_2$ respectively (with respect to the corresponding shared random variables). The Markov conditions on $U_1$ and $U_2$ in \eqref{eq:pmfthm1} arises naturally due to the information structure of the problem. We note that the inner bound in Theorem~\ref{thm:encsideIB} with $Z=\emptyset$ also appears in the concurrent work by Atif et al.~\cite[Theorem~1]{atif2020source}.}
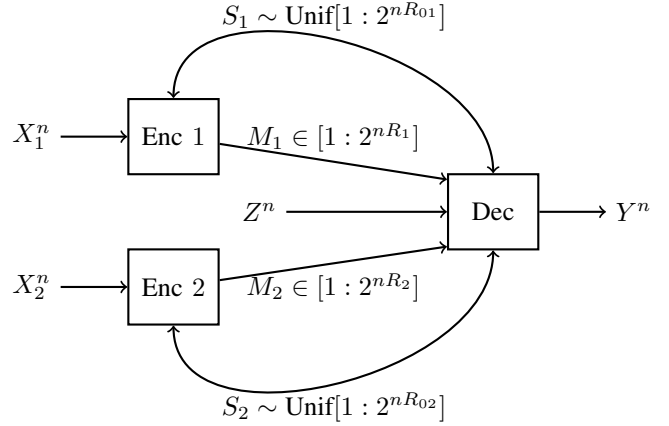
\begin{figure}[t]
\centering
\begin{tikzpicture}[thick]
\node (d1) at (-2,1) [rectangle, draw, right, minimum height=1.0cm, minimum width = 1.2cm]{Enc $1$};
\node (d2) at (2.25,0) [rectangle, draw, right, minimum height=1.0cm, minimum width = 1.2cm]{Dec};
\draw[->] (d1) -- (d2.145) node[midway,above] {$M_1 \in [1:2^{nR_1}]$};
\draw[->] (d2) --++(1.5,0) node[right]{$Y^n$};
\node (c1) at (0.75,2.6) {$S_1 \sim \textup{Unif}[1:2^{nR_{01}}]$};
\node (c1) at (0.75,-2.6) {$S_2 \sim \textup{Unif}[1:2^{nR_{02}}]$};
\draw[<-] (d1) --++(-1.5,0) node[left]{$X_1^n$};
\draw [<->] (d1.north) to [out=90,in=90] (d2.north);
\node (e1) at (-2,-1) [rectangle, draw, right, minimum height=1.0cm, minimum width = 1.2cm]{Enc $2$};
\draw[<-] (e1) --++(-1.5,0) node[left]{$X_2^n$};
\draw[->] (e1) -- (d2.-143) node[midway,below] {$M_2 \in [1:2^{nR_2}]$};
\draw [<->] (e1.south) to [out=-90,in=-90] (d2.south);
\draw[<-] (d2) --++(-2.75,0) node[left]{$Z^n$};
\end{tikzpicture}
\caption{Strong coordination over a multiple-access network with no encoder shared randomness. Encoder $j\in \{1,2\}$ on observing the source $X_j^n$ and shared randomness $S_j$ sends a message $M_j$ over a noiseless link to the decoder, which has side-information $Z^n$. Here $(X_{1i},X_{2i},Z_i)$, $i=1,\dots,n$, are i.i.d. with $q_{X_1X_2Z}$. Also $(X_1^n,X_2^n,Z^n),S_1$, and $S_2$ are mutually independent. The decoder has to output $Y^n$ such that $(X_{1i},X_{2i},Z_i,Y_i)$, $i=1,\dots,n$, are approximately i.i.d. with $q_{X_1X_2ZY}$.} \label{fig:noiseless}
\end{figure}
We now provide an outer bound to the region {\blue$\mathcal{R}_{\textup{MAC-coord}}^{\text{NO-$S_0$}}$}.
\begin{theorem}[Outer Bound with No Encoder Shared Randomness] \label{thm:encsideOB}
Given a p.m.f. $q_{X_1X_2ZY}$, any rate tuple $(R_1,R_2,R_{01},R_{02})$ in {\blue$\mathcal{R}_{\textup{MAC-coord}}^{\emph{NO-$S_0$}}$} satisfies, {\blue for every $\epsilon \in (0,\frac{1}{4}]$},
\begin{align*}
R_1 &\geq \max\{I(U_1;X_1|Z,T)\\
&\hspace{24pt}I(U_1;X_1|U_2,X_2,Z,T)\}\\
R_2 &\geq \max\{I(U_2;X_2|Z,T)\\
&\hspace{24pt}I(U_2;X_2|U_1,X_1,Z,T)\}\\
R_1+R_2 &\geq I(U_1,U_2;X_1,X_2|Z,T)\\
R_1+R_{01} &\geq I(U_1;X_1,X_2,Y|Z,T){\blue-g(\epsilon)}\\
R_2+R_{02} &\geq I(U_2;X_1,X_2,Y|Z,T){\blue-g(\epsilon)}\\
R_1+R_2+R_{01}+R_{02} &\geq I(U_1,U_2;X_1,X_2,Y|Z,T){\blue-g(\epsilon)},
\end{align*}
{\blue with $g(\epsilon)=2\sqrt{\epsilon}\left(H_q(X_1,X_2,Y,Z)+\log\frac{(|\mathcal{X}_1||\mathcal{X}_2||\mathcal{Y}||\mathcal{Z}|)}{\epsilon}\right)$ (which tends to $0$ as $\epsilon \to 0$)}, for some p.m.f.  
\begin{align*}
p(x_1,&x_2,z,t,u_1,u_2,y)=\\
&\hspace{12pt}p(x_1,x_2,z)p(t)p(u_1,u_2|x_1,x_2,t)p(y|u_1,u_2,z,t)
\end{align*}
such that 
\begin{align*}
p(u_1|x_1,x_2,z,t)& = p(u_1|x_1,t),\\
p(u_2|x_1,x_2,z,t) &= p(u_2|x_2,t),\\
{\blue||p(x_1,x_2,y,z|t)-q(x_1,x_2,y,z)||_1} &{\blue\leq \epsilon \: \textup{for all} \: t}.
\end{align*} 
\end{theorem}
\if \arxiv 1
Theorem~\ref{thm:encsideOB} is, in fact, a direct consequence of our more general result stated as Theorem~\ref{thm:encsideOBsr}. The details are given in Section~\ref{proof:thm2}. {\blue Notice that the outer bound in Theorem~\ref{thm:encsideOB} is only an epsilon rate region (as in \cite[Section VI-C]{cuff2013distributed}), i.e., the bound holds for every $\epsilon>0$. We do not know if this outer bound is continuous at $\epsilon=0$, i.e., it is unclear whether $\cap_{\epsilon>0}\mathcal{O}_\epsilon=\mathcal{O}_0$, where $\mathcal{O}_\epsilon$ denotes the epsilon rate region in Theorem~\ref{thm:encsideOB} for every $\epsilon\geq 0$ ($g(0):=0$ by continuous extension of the function $g$). Such a continuity argument requires cardinality bounds on both the auxiliary random variables $U_1$ and $U_2$ (in order to ensure the compactness of the  simplex, see \cite[Lemma VI.5]{cuff2013distributed}, \cite[Lemma~6]{YassaeeGA15}) and obtaining such cardinality bounds for the outer bound remains open even for the distributed rate-distortion problem~\cite{tung1979multiterminal,berger1977multiterminal} \footnote{\blue In a preliminary version of this work~\cite{KurriRPP21ISIT}, we incorrectly claimed the continuity of our outer bounds without obtaining cardinality bounds on the auxiliary random variables. We thank Sandeep Pradhan for pointing this out to us.}.}
\fi
\if \arxiv 0
The proof of Theorem~\ref{thm:encsideOB} uses standard arguments and is broadly along the lines of the converses for channel simulation~\cite{cuff2013distributed,YassaeeGA15}. A detailed proof can be found in the extended version~\cite[Appendix D]{KurriRPP21}. 
\fi

{\blue When the random variables $X_1$ and $X_2$ are conditionally independent given $Z$, {\blue and the shared randomness rate $R_{02}$ is unlimited,} we can show that the inner bound of Theorem~\ref{thm:encsideIB} is tight by obtaining an outer bound which is strictly stronger than that in Theorem~\ref{thm:encsideOB} along with cardinality bounds on the auxiliary random variables {\blue that allows us to prove the continuity of the outer bound at $\epsilon=0$}, thereby completely characterizing the rate region {\blue $\mathcal{R}_{\textup{MAC-coord, UL-$S_2$}}^{\text{NO-$S_0$}}$.}}
\begin{theorem}[Rate Region - Conditionally Independent Sources Given the Side Information] \label{thm:indep}
Consider a p.m.f. $q_{X_1X_2ZY}$ such that the random variables $X_1$ and $X_2$ are conditionally independent given $Z$, i.e., $I(X_1;X_2|Z)=0$. Then the rate region {\blue$\mathcal{R}_{\textup{MAC-coord, UL-$S_2$}}^{\emph{NO-$S_0$}}$} is given by the set of all rate tuples {\blue$(R_1,R_2,R_{01})$} such that
\begin{align*}
R_1&\geq I(U_1;X_1|Z,T)\\
R_2&\geq I(U_2;X_2|Z,T)\\
R_1+R_{01}&\geq I(U_1;X_1,Y|X_2,Z,T)\\
%R_2+R_{02}&\geq I(U_2;X_2,Y|X_1,Z,T)\\
%R_1+R_2+R_{01}+R_{02} &\geq I(U_1,U_2;X_1,X_2,Y|Z,T),
\end{align*}
for some p.m.f. 
\begin{align}
&p(x_1,x_2,z,t,u_1,u_2,y)=\nonumber\\
&p(z)p(x_1|z)p(x_2|z)p(t)p(u_1|x_1,t)p(u_2|x_2,t)p(y|u_1,u_2,z,t)
\end{align}
%\begin{align}\label{pmfstructure1}
%p(x_1,x_2,z,t,u_1,u_2,y)=p(z)p(x_1|z)p(x_2|z)p(t)p(u_1|x_1,t)p(u_2|x_2,t)p(y|u_1,u_2,z,t)
%\end{align}
such that $\sum\limits_{u_1,u_2}p(x_1,x_2,z,u_1,u_2,y|t)= q(x_1,x_2,z,y)$, for all $t$, with {\blue$|\mathcal{U}_1| \leq |\mathcal{X}_1||\mathcal{X}_2||\mathcal{Y}||\mathcal{Z}|$, $|\mathcal{U}_2| \leq |\mathcal{U}_1||\mathcal{X}_1||\mathcal{X}_2||\mathcal{Y}||\mathcal{Z}|$, and $|\mathcal{T}| \leq 3$}.
\end{theorem}
In Remark~\ref{remark3} (on page~\pageref{remark3}), we show that the outer bound implicit in Theorem~\ref{thm:indep} is strictly stronger than that of Theorem~\ref{thm:encsideOB} {\blue(even after assuming its continuity at $\epsilon=0$)}. The non-trivial part in the converse of Theorem~\ref{thm:indep} is that we single-letterize the distributed protocol in order to obtain a p.m.f. structure matching that of the inner bound in Theorem~\ref{thm:encsideIB}, particularly leveraging the conditional independence of the sources given the side-information. In general, obtaining single-letter forms matching the inner bound is known to be notoriously difficult for distributed source coding problems~\cite{tung1979multiterminal,berger1977multiterminal}. It is interesting to note that for the case of deterministic function computation when the sources are conditionally independent given the side-information, the inner and outer bounds of Sefidgaran and Tchamkerten~\cite{SefidgaranT16} specialized to the two-user multiple-access network of noiseless links match, analogous to a result of Gastpar~\cite{Gastpar04}. However, for \emph{randomized} function computation, the inner and outer bounds in Theorems \ref{thm:encsideIB} and \ref{thm:encsideOB} {\blue(again after assuming the continuity of the outer bound at $\epsilon=0$ for the sake of comparison)} do not match for sources conditionally independent given side-information and we need a strictly stronger outer bound to show Theorem~\ref{thm:indep}. A detailed proof of Theorem~\ref{thm:indep} is given in Section~\ref{proof:achv-thm3} (achievability) and Section~\ref{proof:conv-thm3} (converse).

\begin{remark}\label{remark:detfn}
{\blue As mentioned earlier (see Remark~\ref{remark:CR}), when $Y$ is a deterministic function of $(X_1,X_2,Z)$, pairwise shared randomness (or common randomness shared by all users) does not have any effect on the communication rates. Indeed, the rate constraints in
Theorem~\ref{thm:indep} involving shared randomness rates become redundant as we show in Appendix~\ref{app:rem5}. We also show in Appendix~\ref{app:rem5} that Theorem~\ref{thm:indep} reduces to the main result of Sefidgaran and Tchamkerten~\cite[Theorem~3]{SefidgaranT16} specialized to the two-user multiple-access network of noiseless links, also reported in \cite[Theorem 3]{SefidgaranT11}. In appendix~\ref{app:rem2}, we show that for deterministic function computation, Theorem~\ref{thm:encsideIB} reduces to Theorem~2 of Sefidgaran and Tchamkerten~\cite{SefidgaranT16} specialized to the two-user multiple-access network of noiseless links, also reported in \cite[Proposition 1]{SefidgaranT11}.}
\end{remark}

{\blue\begin{remark}
Theorem~\ref{thm:indep} recovers the point-to-point channel simulation results~\cite[Theorem~II.1]{cuff2013distributed}, \cite[Theorem~1]{6757002}, and \cite[Theorem~1]{YassaeeGA15} (specialized to a single round of interaction) when $X_2=\emptyset$. 
\end{remark}
}
Sefidgaran and Tchamkerten~\cite{SefidgaranT11} already observed that their inner bound (and hence, our Theorem \ref{thm:encsideIB}, which recovers their inner bound as shown in Appendix~\ref{app:rem2}) is not tight, in general. The deterministic function computation problem of K\"{o}rner and Marton~\cite{KornerM79} illustrates this (see \cite[Example 2]{SefidgaranA11}). For computing the mod-$2$ sum of binary $X_1$ and $X_2$ with symmetric input distribution, K\"{o}rner and Marton~\cite{KornerM79} showed that structured codes can strictly outperform standard random coding schemes. Achievable schemes using algebraic-structured codes exploiting the specific structure of the function to be computed were further explored by Krithivasan and Pradhan~\cite{krithivasan2011distributed} and Atif and Pradhan~\cite{atifsandeep}, where the latter considered the quantum setting.

\section{\blue Encoder Shared Randomness Can Strictly Reduce the Communication Rates} \label{sec:ex1}
In this section, we show that if the encoders share additional independent randomness (see Figure~\ref{fig:encSR}), the communication rates in some cases can be strictly improved, even if the additional randomness is not available to the decoder. This is done via an example for which we first explicitly compute the communication rate region of Theorem \ref{thm:indep} when there is no shared randomness between the encoders, assuming sufficiently large pairwise shared randomness rates. Then we show that a rate pair outside this region is achievable in the presence of shared randomness between the encoders. 
\if \arxiv 1
This motivates the next section where we obtain general inner and outer bounds to the rate coordination region in the presence of shared randomness between the encoders.
\fi

\begin{example}\label{example}
Let $X_1=(X_{11},X_{12})$ be a vector of two independent and uniformly distributed binary random variables. Similarly, let $X_2=(X_{21},X_{22})$ be another vector of two independent and uniformly distributed binary random variables independent of $X_1$. Consider simulating a channel $q_{Y|X_1X_2}$ with $Y=(X_{1J},X_{2J})$, where $J$ is a random variable uniformly distributed on $\{1,2\}$ and independent of $(X_1,X_2)$. For simplicity, we let $Z=\emptyset$, i.e. there is no side information at the decoder. Let us assume unlimited rates $R_{01}$ and $R_{02}$.

When there is no additional shared randomness between the encoders, from Theorem~\ref{thm:indep}, the communication rate region $\mathcal{R}_{\textup{MAC-coord, UL-$(S_1,S_2)$}}^{\textup{NO-$S_0$}}$ is given by the set of all rate pairs $(R_1,R_2)$ such that
\begin{align}
R_1\geq I(U_1;X_1|T), \label{eq:exreg1}\\
R_2\geq I(U_2;X_2|T), \label{eq:exreg2}
\end{align}
for some p.m.f. 
\begin{align}
p(x_1,&x_2,t,u_1,u_2,y)=\nonumber\\ 
&p(x_1)p(x_2)p(t)p(u_1|x_1,t)p(u_2|x_2,t)p(y|u_1,u_2,t) \label{eq:pmfindep}
\end{align}
satisfying
\begin{align} \label{eq:excorrectness}
\sum\limits_{u_1,u_2}p(x_1,x_2,u_1,u_2,y|t)=q(x_1,x_2,y),
\end{align}
for all $t$. 
 The following proposition (proved at the end of this section) explicitly characterizes the communication rate region for this $q_{X_1X_2Y}$.

\begin{prop}\label{prop}
For the joint distribution $q_{X_1X_2Y}$ in Example~\ref{example}, the communication rate region of Theorem~\ref{thm:indep} under sufficiently large shared randomness rates is equal to the region defined by the constraints $R_1\geq 1$, $R_2\geq 1$, and $R_1+R_2\geq 3$.
\end{prop}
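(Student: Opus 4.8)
The plan is to compute the region defined by \eqref{eq:exreg1}--\eqref{eq:excorrectness} by analyzing the constraint \eqref{eq:excorrectness} symbol-by-symbol, which forces a specific structure on the auxiliary random variables $U_1,U_2$. First I would observe that since $Z=\emptyset$ and $X_1,X_2$ are independent uniform pairs of bits, and $Y=(X_{1J},X_{2J})$, the correctness constraint \eqref{eq:excorrectness} says that for every $t$, the decoder's output $Y=(Y_a,Y_b)$ must satisfy: $Y_a$ is one of the two bits of $X_1$ (with the ``which one'' common to $Y_a$ and $Y_b$), $Y_b$ is the correspondingly-indexed bit of $X_2$, and the index $J$ is uniform and independent of $(X_1,X_2)$. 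Crucially, $Y_a$ is a function only of $(U_1,t)$ via the Markov chain $Y-(U_1,U_2,t)$ together with $Y_a-(U_1,t)$... more carefully: since $p(y|u_1,u_2,t)$ and $Y_a$ must equal $X_{1J}$ while being conditionally independent of $X_2$ given $(U_1,U_2,t)$, and $X_1\perp X_2$, one argues $Y_a$ is determined by $(U_1,U_2,t)$ but must carry information about $X_1$ only through $U_1$. I would make this precise: $I(Y_a;X_1|U_1,T)=0$, so $U_1$ must determine enough of $X_1$ to reconstruct $X_{1J}$ for the realized $J$.

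The key structural claim is then: for each $t$, either (i) $U_1$ essentially determines both bits of $X_1$ (the index $J$ can be chosen/shared downstream), giving $I(U_1;X_1|T=t)\ge 2$ on that atom, or (ii) $U_1$ determines only one specific bit $X_{1j}$, but then the decoder has no access to the randomness $J$ jointly with both encoders (since $U_1\perp U_2|T$ and $X_1\perp X_2$), so to make $Y=(X_{1J},X_{2J})$ with a \emph{common} index $J$, that common index must be carried by $T$; in case (ii) $I(U_1;X_1|T=t)\ge 1$ and similarly for $U_2$. I would show that to achieve the desired joint distribution, on the set of $t$ where the index is "pre-agreed" via $T$ we need $R_1\ge 1, R_2\ge 1$, while on any $t$ where $T$ does not pin down the index, we need $R_1\ge 2$ (resp. $R_2\ge 2$); mixing over $T$ via a timesharing/convexity argument and the sum constraint $R_1+R_2$ then yields $R_1\ge 1$, $R_2\ge 1$, and $R_1+R_2\ge 3$. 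For the converse direction of the proposition (that these three inequalities are necessary), I would lower-bound $I(U_1;X_1|T)+I(U_2;X_2|T)$: since between the two encoders they must collectively convey $X_{1J}$ and $X_{2J}$ \emph{and} the index $J$ cannot be manufactured independently by two non-cooperating parties, one of the two must convey its full source (cost $2$) while the other conveys at least one bit (cost $1$), giving $\ge 3$ on every atom of $T$; each individually is $\ge 1$ because $Y_a$ alone is a nondegenerate bit of $X_1$.

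For achievability of the region $\{R_1\ge 1, R_2\ge 1, R_1+R_2\ge 3\}$, I would exhibit explicit choices: the corner point $(2,1)$ is achieved by $T$ uniform on $\{1,2\}$, $U_1=X_1$, $U_2=X_{2T}$, $Y=(U_{1T},U_2)$; this gives $I(U_1;X_1|T)=2$, $I(U_2;X_2|T)=1$, and one checks \eqref{eq:excorrectness} holds with $J=T$. Symmetrically $(1,2)$ is achieved; the point $(3/2,3/2)$ is achieved by timesharing these two, and the rest of the region follows by the monotonicity of the constraints (increasing $R_1$ or $R_2$ only helps) together with taking convex combinations, which is valid since the region \eqref{eq:exreg1}--\eqref{eq:excorrectness} is closed under timesharing by absorbing the timesharing variable into $T$.

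The main obstacle I anticipate is the converse argument that rules out ``splitting the index'' — i.e., rigorously showing that two encoders with independent auxiliaries $U_1\perp U_2\mid T$ cannot jointly realize a \emph{shared} uniform index $J$ unless either $T$ already encodes it (forcing the $R_1+R_2\ge 3$ tradeoff to actually be $\ge 3$ rather than something smaller) or one encoder pays the full cost $2$. This requires carefully using the correctness constraint \eqref{eq:excorrectness} on each atom of $T$ together with the conditional-independence structure of the p.m.f.\ \eqref{eq:pmfindep}, and arguing that $I(U_1;X_1|T=t)$ is, on each atom, either $\ge 2$ or (if $=1$) forces the complementary encoder to have $I(U_2;X_2|T=t)\ge 2$. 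Handling the possibility of \emph{partial} information (e.g. $U_1$ revealing $X_{11}$ fully and $X_{12}$ partially, with the index correlated with how much is revealed) will need an entropy/information inequality rather than a purely combinatorial case check.
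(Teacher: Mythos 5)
Your overall plan (analyze the correctness constraint on each atom of $T$ to force structure on $U_1,U_2$, then combine) matches the paper's strategy, but two concrete pieces of your argument are broken.

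First, the \emph{achievability} construction you give is incorrect. You propose $T$ uniform on $\{1,2\}$, $U_1=X_1$, $U_2=X_{2T}$, $Y=(U_{1T},U_2)$ with ``$J=T$.'' But the correctness constraint \eqref{eq:excorrectness} must hold \emph{for each fixed $t$}: conditioned on $T=1$ your scheme produces $Y=(X_{11},X_{21})$ deterministically, whose conditional law given $(X_1,X_2)$ is not $q(y|x_1,x_2)$ (which requires $J$ uniform). This is exactly the empirical-vs.-strong coordination distinction the paper flags in Remark~\ref{remark:atif}. The paper's working construction for $(2,1)$ keeps $T$ trivial and pushes the index $J$ into the auxiliary: $U_1=X_1$, $U_2=(J,X_{2J})$ with $J$ fresh private randomness independent of $(X_1,X_2)$. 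Then $p(u_2|x_2)$ has the right factorization, $Y$ is a deterministic function of $(U_1,U_2)$, and the per-$t$ constraint holds.

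Second, your \emph{converse} rests on the claim that ``one of the two must convey its full source (cost 2) while the other conveys at least one bit (cost 1).'' This dichotomy is false as stated: it is entirely consistent with the p.m.f.\ structure \eqref{eq:pmfind:A}--\eqref{eq:pmfind:B} that $H(X_1|U_1)>0$ \emph{and} $H(X_2|U_2)>0$ simultaneously (neither conveys its full source). You correctly anticipate at the end that ``partial information... will need an entropy/information inequality,'' but that is not a detail — it is the entire substance of the hard case, and you have not supplied it. The paper's resolution is a structural lemma (Claim~\ref{claim:ex}): if $H(X_1|U_1)>0$ and $H(X_2|U_2)>0$, there is a single $k$ such that for every realization $u_1,u_2$ with positive probability, $U_1$ determines $X_{1k}$ and $U_2$ determines $X_{2k}$. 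This alignment (the same $k$ on both sides, uniformly over realizations) is proved by a contradiction argument using the Markov chain $Y-(U_1,U_2)-(X_1,X_2)$ and independence of $(U_1,X_1)$ and $(U_2,X_2)$. With $p_i:=P(U_i\text{ reveals both bits})$, the correctness constraint forces $1-p_1p_2\le P(J=1)=1/2$, and one derives $I(U_i;X_i)\ge 1+p_i$; minimizing $p_1+p_2$ subject to $p_1p_2\ge 1/2$ gives $I(U_1;X_1)+I(U_2;X_2)\ge 2+\sqrt{2}>3$, so this case is non-binding and the bound $R_1+R_2\ge 3$ comes from the two ``full source'' cases. Without an argument of this kind, your converse does not close; the appeal to ``the index $J$ cannot be manufactured independently by two non-cooperating parties'' is intuition, not a proof, precisely because $J$ need not be explicitly represented by either auxiliary — it can be smeared across the joint behavior of $(U_1,U_2,Y)$ in a way your dichotomy does not capture.
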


Now we show that if there exists an additional source of shared randomness between the encoders, then the rate pair $(R_1,R_2)=(1,1)$ is achievable (see Figure~\ref{region}). In particular, we prove that if this additional shared randomness is of rate at least $1$, then a rate pair $(R_1,R_2)=(1,1)$ is achievable under sufficiently large shared randomness rates $R_{01}$ and $R_{02}$. To see this, notice that both the encoders, using a shared randomness of rate $1$, can sample a sequence $W_1, W_2, \cdots, W_n$ i.i.d. distributed on $\{1,2\}$ such that $p_W(1)=p_W(2)=0.5$ and independent of $(X_1^n,X_2^n)$. Now we invoke Theorem~\ref{thm:encsideIB} with $(X_i,W)$ as the input source to the encoder-$i$, $i=1,2$. This implies that  a rate pair $(R_1,R_2)$ is achievable under sufficiently large shared randomness rates $R_{01}$ and $R_{02}$ if
\begin{align*}
R_1&\geq I(U_1;X_1,W|U_2),\\
R_2&\geq I(U_2;X_2,W|U_1),\\
R_1+R_2&\geq I(U_1,U_2;X_1,X_2,W),
\end{align*}
for some p.m.f.
\begin{align}
p(x_1,&x_2,w,u_1,u_2,y)=\nonumber\\ 
&q(x_1,x_2,w)p(u_1|x_1,w)p(u_2|x_2,w)p(y|u_1,u_2).
\end{align}
\begin{figure}[htbp]
\begin{center}
\includegraphics{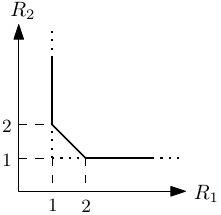}
\end{center}
\caption{The communication rate regions for $X_k=(X_{k1},X_{k2}), k \in \{1,2\}$ ($X_{11},X_{12},X_{21},X_{22}$ are mutually independent, uniform binary random variables) and $Y=(X_{1J},X_{2J})$, where $J$ is uniform on $\{1,2\}$ and independent of $(X_1,X_2)$. The region is defined by the constraints $R_1\geq 1$, $R_2\geq 1$, and $R_1+R_2\geq 3$ (shown via solid line) when the shared randomness between each encoder and the decoder are of sufficiently large rates. When an additional shared randomness is available between the encoders, the region is defined by the constraints $R_1\geq 1$ and $R_2\geq 1$ (shown via dotted line) provided the shared randomness between all the three pairs are of sufficiently large rates\if \arxiv 0 (see the discussion towards the end of this section)\fi \if \arxiv 1 (see Section~\ref{section:exampleoptimality})\fi.}\label{region}
\end{figure}
It is easy to see that $U_1=X_{1W}$ and $U_2=X_{2W}$ satisfy the conditions on the structure of the probability distribution $p(x_1,x_2,w,u_1,u_2,y)$. This gives
\begin{align*}
I(X_{1W};X_1,W|X_{2W})&=1,\\
I(X_{2W};X_2,W|X_{1W})&=1,\\
I(X_{1W},X_{2W};X_1,X_2,W)&=2,
\end{align*}
which imply that  a rate pair $(1,1)$ is achievable, thereby strictly improving over the rate region without encoder shared randomness, defined by the constraints $R_1\geq 1$, $R_2\geq 1$, and $R_1+R_2\geq 3$. 
\if \arxiv 1
This motivates the next section where we analyze the model with shared randomness between the encoders. In Section~\ref{section:exampleoptimality}, we will show that the rate region defined by $R_1\geq 1$, $R_2\geq 1$ is indeed optimal for our example with sufficiently large pairwise shared randomness rates (all three pairs). 
\fi
\end{example}
\begin{remark}\label{remark3}
We remark that the outer bound implicit in Theorem~\ref{thm:indep} is strictly stronger than that of Theorem~\ref{thm:encsideOB} {\blue(even after assuming its continuity at $\epsilon=0$)} for the p.m.f. $q_{X_1X_2ZY}$ in Example~\ref{example}. We first observe that that the communication rate pair $(1,1)$ is contained in the outer bound given by Theorem~\ref{thm:encsideOB} (under unlimited shared randomness rates {\blue and the assumption of continuity at $\epsilon=0$}). To see this, first notice that the choice of $U_1=(X_{1J},J)$, $U_2=(X_{2J},J)$, where $J$ is a random variable uniformly distributed on $\{1,2\}$ and independent of $(X_1,X_2)$, satisfies the conditions on the structure of the p.m.f. in the outer bound of Theorem~\ref{thm:encsideOB}. Now we evaluate the bounds on communication rates in Theorem~\ref{thm:encsideOB} for this choice of auxiliary random variables.
\begin{align*}
\max\{I(X_{1J},J;X_1),I(X_{1J};X_1|X_{2J},J,X_2)\}&=1\\
\max\{I(X_{2J},J;X_2),I(X_{2J};X_2|X_{1J},J,X_1)\}&=1\\
I(X_{1J},X_{2J},J;X_1,X_2)&=2.
\end{align*}  
This implies that a rate pair $(R_1,R_2)=(1,1)$ is contained in the outer bound on the communication rate region implied by Theorem~\ref{thm:encsideOB} (under sufficiently large shared randomness rates {\blue and the assumption of continuity at $\epsilon=0$}). However, by Proposition~\ref{prop}, this rate pair lies outside the communication rate region implied by Theorem~\ref{thm:indep}.
\end{remark}

We conclude the section with a proof of Proposition~\ref{prop}.
\if \arxiv 0
Motivated by this example, we also study the rate region for the problem where encoders have shared randomness that is not available to the decoder. In particular, consider the setting in Section \ref{sec:SMR} augmented with a pairwise shared randomness $S_0$ (uniformly distributed on $[1:2^{nR_{00}}]$) between the two encoders -- see Figure \ref{fig:encSR}. 
\fi%
\begin{proof}[Proof of Proposition~\ref{prop}]
 For the achievability, it suffices to show that the corner points $(1,2)$ and $(2,1)$ are in the region defined by \eqref{eq:exreg1} -- \eqref{eq:excorrectness}. By symmetry, it is enough to show that there exists a p.m.f. $$p(x_1,x_2,u_1,u_2,y)=p(x_1,x_2)p(u_1|x_1)p(u_2|x_2)p(y|u_1,u_2)$$ such that $\sum\limits_{u_1,u_2}p(x_1,x_2,u_1,u_2,y)=q(x_1,x_2,y)$, $I(U_1;X_1)=2,$ and $I(U_2;X_2)=1$.  It is easy to see that $U_1=X_1$ and $U_2=(J,X_{2J})$, where $J$ is a random variable uniformly distributed on $\{1,2\}$ and independent of $(X_1,X_2)$, satisfy the conditions on the structure of the joint probability distribution $p(x_1,x_2,u_1,u_2,y)$. Now $I(U_1;X_1)=H(X_1)=2$ and $I(U_2;X_2)=I(J,X_{2J};X_2)=I(X_{2J};X_2|J)=1$. 

For the converse, it suffices to show that for any p.m.f. $p(x_1,x_2,u_1,u_2,y)= p(x_1,x_2)p(u_1|x_1)p(u_2|x_2)p(y|u_1,u_2)$ such that $\sum\limits_{u_1,u_2}p(x_1,x_2,u_1,u_2,y)=q(x_1,x_2,y)$, we have $I(U_1;X_1)\geq 1$, $I(U_2;X_2)\geq 1$, and $I(U_1;X_1)+I(U_2;X_2)\geq 3$. 
Consider the p.m.f. in \eqref{eq:pmfindep} for a fixed value of $t$, i.e.,
\begin{subequations}\label{eq:pmfind:main}
\begin{equation} 
p(x_1,x_2,u_1,u_2,y)= p(x_1)p(x_2)p(u_1|x_1)p(u_2|x_2)p(y|u_1,u_2) \label{eq:pmfind:A}
\end{equation}
such that 
\begin{equation}
\sum\limits_{u_1,u_2}p(x_1,x_2,u_1,u_2,y)=q(x_1,x_2,y). \label{eq:pmfind:B}
\end{equation}
\end{subequations}
Note that the independence of $X_1$ and $X_2$ along with the long Markov chain $U_1 \to X_1 \to X_2 \to U_2$ implies the independence of $(U_1,X_1)$ and $(U_2,X_2)$. The following three cases now arise based on $H(X_1|U_1)$ and $H(X_2|U_2)$.

\noindent Case $1$: $H(X_1|U_1)=0$,\\
Case $2$: $H(X_2|U_2)=0$,\\
Case $3$: $H(X_1|U_1)>0$ and $H(X_2|U_2)>0$.\\

\noindent\underline{Case $1$} ($H(X_1|U_1)=0$):

We have $I(X_1;U_1)=H(X_1)=2.$
Now we prove that $I(U_2;X_2)\geq 1$. We show this by contradiction. Suppose that $I(U_2;X_2)<1$. Then $H(X_2|U_2)=H(X_2)-I(U_2;X_2)=2-I(U_2;X_2)>1$ and hence there exists a $u_2$ with $P(U_2=u_2)>0$ such that $p_{X_2|U_2=u_2}$ has a support whose size is larger than 2. Notice that the Markov chain $Y\rightarrow (X_1,U_2)\rightarrow X_2$ holds because
\begin{align*}
I(Y;X_2|X_1,U_2)&\leq I(Y,U_1;X_2|X_1,U_2)\\
&=\!I(U_1;X_2|X_1,U_2)+I(Y;X_2|U_1,U_2,X_1)\\
&\leq\!  I(U_1,X_1;U_2,X_2)\!+I(Y;X_1,X_2|U_1,U_2)\\
&=0,
\end{align*}
where the last equality follows because $(U_1,X_1)$ is independent of $(U_2,X_2)$ and the Markov chain $Y\rightarrow (U_1,U_2)\rightarrow (X_1,X_2)$ holds.

Suppose $p_{X_2|U_2=u_2}$ has the support {\blue which is a superset of}  $\{(0,0),(1,0),(0,1)\}$. Consider the induced distribution $p_{Y|X_1=(0,1),U_2=u_2}$. This is well-defined because $P(X_1=(0,1),U_2=u_2)>0$ as $X_1$ is independent of $U_2$ and $P(X_1=(0,1)),P(U_2=u_2)>0$.  Since $P(X_2=(0,1)|X_1=(0,1),U_2=u_2)=P(X_2=(0,1)|U_2=u_2)>0$ and $Y\rightarrow (X_1,U_2)\rightarrow X_2$, we have $P(Y=(1,0)|X_1=(0,1),U_2=u_2)=P(Y=(0,1)|X_1=(0,1),U_2=u_2)=0$.  Since $P(X_2=(1,0)|X_1=(0,1),U_2=u_2)=P(X_2=(1,0)|U_2=u_2)>0$ and $Y\rightarrow (X_1,U_2)\rightarrow X_2$, we have $P(Y=(1,1)|X_1=(0,1),U_2=u_2)=P(Y=(0,0)|X_1=(0,1),U_2=u_2)=0$. This is a contradiction since $p_{Y|X_1=(0,1),U_2=u_2}$ has to be a probability distribution.

Suppose $p_{X_2|U_2=u_2}$ has the support {\blue which is a superset of} $\{(0,0),(1,1),(0,1)\}$. Since $P(X_2=(0,0)|X_1=(0,0),U_2=u_2)=P(X_2=(0,0)|U_2=u_2)>0$ and $Y\rightarrow (X_1,U_2)\rightarrow X_2$, we have $P(Y=(1,0)|X_1=(0,0),U_2=u_2)=P(Y=(0,1)|X_1=(0,0),U_2=u_2)=P(Y=(1,1)|X_1=(0,0),U_2=u_2)=0$. Since $P(X_2=(1,1)|X_1=(0,0),U_2=u_2)=P(X_2=(1,1)|U_2=u_2)>0$ and $Y\rightarrow (X_1,U_2)\rightarrow X_2$, we have $P(Y=(0,0)|X_1=(0,0),U_2=u_2)=0$. This is a contradiction since $p_{Y|X_1=(0,0),U_2=u_2}$ is a probability distribution.

The other supports $\{(1,1),(0,1),(1,0)\}$ and $\{(0,0),(1,1),(1,0)\}$ can be analysed in a similar manner to arrive at a contradiction. Hence, $I(U_2;X_2)\geq 1$.

\noindent\underline{Case $2$} ($H(X_2|U_2)=0$):

By symmetry, the analysis for this case is similar to that of Case~$1$.

\noindent\underline{Case $3$} ($H(X_1|U_1)>0$ and $H(X_2|U_2)>0$):

We prove the following claim in Appendix \ref{apendix:example}.
\begin{claim} \label{claim:ex}
When $H(X_1|U_1)>0$ and $H(X_2|U_2)>0$, there exists a $k\in\{1,2\}$ such that for all $u_1$ and $u_2$ with $P(U_1=u_1)>0$, $P(U_2=u_2)>0$, we have $H(X_{1k}|U_1=u_1)=0$ and $H(X_{2k}|U_2=u_2)=0$.
\end{claim}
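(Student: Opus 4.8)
The plan is to analyze the structure forced on $U_1$ and $U_2$ by the constraint \eqref{eq:pmfind:B} together with the Markov structure in \eqref{eq:pmfind:A}. Recall $X_1=(X_{11},X_{12})$ and $X_2=(X_{21},X_{22})$ are tuples of independent uniform bits, $Y=(X_{1J},X_{2J})$ with $J$ uniform on $\{1,2\}$ independent of everything, and in \eqref{eq:pmfind:A} we have the long Markov chain $U_1\to X_1\to X_2\to U_2$ and $Y\to(U_1,U_2)\to(X_1,X_2)$, hence (as already noted) $(U_1,X_1)\perp(U_2,X_2)$, and also the Markov chain $Y\to(X_1,U_2)\to X_2$ established in the proof of Proposition~\ref{prop}. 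By symmetry one also gets $Y\to(U_1,X_2)\to X_1$. The key observation to exploit: for any realization $u_1$ with $P(U_1=u_1)>0$ and any realization $u_2$ with $P(U_2=u_2)>0$, the pair $(X_1,X_2)$ conditioned on $(U_1=u_1,U_2=u_2)$ has product form $p_{X_1|U_1=u_1}\otimes p_{X_2|U_2=u_2}$, and the decoder must produce $Y=(X_{1J},X_{2J})$ with the correct conditional law given $(X_1,X_2)$ — in particular, for each coordinate $k\in\{1,2\}$, given $(U_1=u_1,U_2=u_2)$, $Y$ must be able to ``point to'' coordinate $k$, revealing $(X_{1k},X_{2k})$.

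The main step is a local argument at a fixed pair $(u_1,u_2)$. Fix such a pair and suppose, for contradiction, that $H(X_{1k}|U_1=u_1)>0$ for \emph{both} $k=1$ and $k=2$ — i.e. conditioned on $U_1=u_1$, neither coordinate of $X_1$ is determined. I claim this cannot happen simultaneously with $P(U_2=u_2)>0$ for some $u_2$ with $H(X_{2k}|U_2=u_2)>0$ for some $k$: the idea is that when $J=k$, the output $Y=(X_{1k},X_{2k})$ must be distributed correctly, and since $X_1\perp X_2$ given $(u_1,u_2)$ and $Y\to(X_1,U_2)\to X_2$ holds, one can run the same case analysis as in the proof of Proposition~\ref{prop} (supports of $p_{X_1|U_1=u_1}$, respectively $p_{X_2|U_2=u_2}$, of size $3$ or $4$ lead to a contradiction with $p_{Y|\cdots}$ being a probability distribution). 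Concretely I would show: if $H(X_1|U_1=u_1)$ is ``large'' (both coordinates undetermined) then for the correctness constraint on $Y$ to hold one needs, for every $u_2$ in the support of $U_2$, that \emph{both} coordinates of $X_2$ are determined given $U_2=u_2$ — but that contradicts $H(X_2|U_2)>0$, which says some $u_2$ has $H(X_2|U_2=u_2)>0$. So at least one of $X_1$, $X_2$ must have a ``determined'' coordinate at the per-symbol level; then one shows the index of that determined coordinate is forced to be the same $k\in\{1,2\}$ across all relevant $u_1$ (and correspondingly for $u_2$), because otherwise some joint realization $(u_1,u_2)$ would leave both coordinates of the corresponding variable undetermined and we are back in the contradictory case.

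More precisely, I would argue: by $H(X_1|U_1)>0$ there exists $u_1^*$ with $H(X_1|U_1=u_1^*)>0$; by $H(X_2|U_2)>0$ there exists $u_2^*$ with $H(X_2|U_2=u_2^*)>0$. Pair each of these with an arbitrary $u_2$ (resp. $u_1$) from the support and apply the local contradiction argument at $(u_1^*,u_2)$ and at $(u_1,u_2^*)$ to deduce that: (i) for every $u_1$ in the support of $U_1$ there is a coordinate $k(u_1)$ with $H(X_{1,k(u_1)}|U_1=u_1)=0$, and (ii) for every $u_2$ there is a coordinate $k'(u_2)$ with $H(X_{2,k'(u_2)}|U_2=u_2)=0$. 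Then I would show $k(u_1)$ is constant in $u_1$: if $k(u_1)=1$ for some $u_1$ and $k(u_1')=2$ for some $u_1'$, consider the distribution of $Y$ given $(U_1=u_1,U_2=u_2^*)$ with $J=1$ versus given $(U_1=u_1',U_2=u_2^*)$ with $J=2$; combined with $H(X_2|U_2=u_2^*)>0$ and the product structure, this forces a contradiction with correctness of $Y$ (the decoder would need to simultaneously resolve $X_{21}$ and $X_{22}$ from $U_2=u_2^*$ alone). A parallel argument gives $k'(u_2)$ constant, and a final cross-check forces $k(u_1)=k'(u_2)=:k$ to be the \emph{same} index — which is exactly the claim.

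The main obstacle I anticipate is the bookkeeping in the local contradiction argument: showing rigorously that ``$H(X_1|U_1=u_1^*)>0$'' (some coordinate undetermined) together with the correctness constraint forces \emph{both} coordinates of $X_2$ to be determined given every $u_2$. The clean way is to enumerate, exactly as in the Proposition~\ref{prop} proof, the possible supports of $p_{X_1|U_1=u_1^*}$ that are not contained in a ``single-coordinate-fixed'' set $\{x_1:x_{1k}=b\}$ — these have size $\geq 3$ — and for each, using $Y\to(X_1,U_2)\to X_2$ and $X_1\perp X_2 \mid (u_1^*,u_2)$ and that $Y=(X_{1J},X_{2J})$ must have full support consistent with $q$, derive that $p_{Y\mid X_1=x_1,U_2=u_2}$ cannot be a valid p.m.f. unless $p_{X_2\mid U_2=u_2}$ is a point mass. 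Since the coordinate alphabets are binary and $J$ is binary, these enumerations are finite and small, so this is routine but must be done carefully; I would relegate the full case check to Appendix~\ref{apendix:example} as the statement already indicates.
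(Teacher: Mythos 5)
Your three-step outline --- establish that each $u_1$ (resp.\ $u_2$) in the support with positive residual entropy reveals exactly one coordinate $k(u_1)$ (resp.\ $k'(u_2)$), then force consistency of these indices across all realizations --- matches the structure of the paper's proof (Steps 1--3 of Appendix~\ref{apendix:example}), and the underlying tools (the product structure of $(U_1,X_1)$ and $(U_2,X_2)$, the constraint that $p(y\mid u_1,u_2)$ must be supported in $\{(x_{11},x_{21}),(x_{12},x_{22})\}$ for every $(x_1,x_2)$ in the conditional support) are the same. However, the ``local contradiction argument'' as you describe it has a genuine gap.

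First, a small but material slip: the supports of $p_{X_1\mid U_1=u_1^*}$ that are not contained in any single-coordinate-fixed set do \emph{not} all have size $\geq 3$; the two diagonal pairs $\{(0,0),(1,1)\}$ and $\{(0,1),(1,0)\}$ have size $2$ and are exactly the minimal patterns Step~1 must rule out. Second, and more substantively, conditioning on $(X_1=x_1,U_2=u_2)$ and invoking $Y\to(X_1,U_2)\to X_2$ to conclude that $p_{Y\mid X_1=x_1,U_2=u_2}$ ``cannot be a valid p.m.f.\ unless $p_{X_2\mid U_2=u_2}$ is a point mass'' is simply false in the case you need. Take $x_1=(0,1)$ and $p_{X_2\mid U_2=u_2}$ supported on $\{(0,1),(1,1)\}$ (one coordinate of $X_2$ already fixed, so $H(X_2\mid U_2=u_2)>0$ but it is not a point mass): the allowed $Y$-supports are $\{(0,0),(1,1)\}$ and $\{(0,1),(1,1)\}$, whose intersection is the nonempty singleton $\{(1,1)\}$, so $p_{Y\mid X_1=(0,1),U_2=u_2}$ is a perfectly valid p.m.f. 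Fixing $X_1$ is too coarse a conditioning. The contradiction the paper obtains, and that you need, comes from holding $(u_1,u_2)$ fixed and varying \emph{both} $x_1$ and $x_2$ in the conditional support --- i.e., using $Y\to(U_1,U_2)\to(X_1,X_2)$ directly: taking $(x_1,x_2)=((0,1),(0,1))$ forces the support of $p(y\mid u_1,u_2)$ into $\{(0,0),(1,1)\}$, while $(x_1,x_2)=((1,0),(0,1))$ forces it into $\{(1,0),(0,1)\}$, and these are disjoint. With the conditioning corrected in this way, the rest of your outline, including the ``hub through $u_2^*$'' argument for making $k(u_1)$ constant, does go through and reproduces the paper's Steps~2--3.
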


Claim~\ref{claim:ex} roughly states that, under Case $3$, $U_1$ and $U_2$ always reveal the $k^{\text{th}}$ components of $X_1$ and $X_2$, respectively, for a fixed $k\in\{1,2\}$. Let us assume $k=1$ without loss of generality, i.e., $U_i$ conveys atleast $X_{i1}$ losslessly, for $i=1,2$. Let $A_i$ be the event ``$U_i \ \text{conveys both}\ X_{i1}\ \text{and}\ X_{i2}\ \text{losslessly}$'', i.e., for $i=1,2$,
\begin{align}
A_i = \underset{u_i: H(X_{i1}|U_i=u_i)=H(X_{i2}|U_i=u_i)=0}{\bigcup} (U_i=u_i).
\end{align}
Let, for $i=1,2$,
\begin{align*}
p_i:= P(A_i) = \sum_{\substack{u_i:\\ H(X_{i1}|U_i=u_i)=H(X_{i2}|U_i=u_i)=0}}P(U_i=u_i).
\end{align*}
For $i=1,2$, it follows that $A_i^c$ is the event ``$U_i$ conveys only $X_{i1}$ losslessly'' from Claim~\ref{claim:ex} along with our assumption that $k=1$. 
Now, in view  of the independence of $(U_1,X_1)$ and $(U_2,X_2)$, it follows that $1-p_1p_2 = 1-P(A_1 \cap A_2) = P(A_1^c \cup A_2^c)$.
Notice that this event $A_1^c \cup A_2^c$ must be a subset of the event $(J=1)$, otherwise the correctness condition \eqref{eq:pmfind:B} is violated. Thus we have
\begin{align}
1-p_1p_2&\leq P(J=1)=0.5\label{eqn:example2}.
\end{align}
Also, we have
\begin{align}
I(U_1&;X_1)\nonumber\\
&= H(X_1)-H(X_1|U_1)\nonumber\\
&= 2-\sum_{u_1}P(U_1=u_1) H(X_1|U_1=u_1) \nonumber\\
&\stackrel{(a)}= 2-\!\!\!\sum_{\substack{u_1:\\ H(X_{11}|U_1=u_1)=0,\\ H(X_{12}|U_1=u_1)>0}}P(U_1=u_1) H(X_1|U_1=u_1) \nonumber\\
&\phantom{www} -\!\!\!\!\sum_{\substack{u_1:\\ H(X_{11}|U_1=u_1)=0, \\H(X_{12}|U_1=u_1)=0}}P(U_1=u_1) H(X_1|U_1=u_1) \nonumber\\
&\geq 2-\!\!\!\!\!\!\!\!\!\!\sum_{\substack{u_1:\\ H(X_{11}|U_1=u_1)=0,\\ H(X_{12}|U_1=u_1)>0}}\!\!\!\!\!\!P(U_1=u_1) H(X_{12}|U_1=u_1) \nonumber\\
%&\phantom{www} -0 \nonumber\\
&\geq 2-(1-p_1)\nonumber\\
&=1+p_1,
\end{align}
where (a) follows from Claim~\ref{claim:ex} along with our assumption that $k=1$, since $U_i$ reveals atleast $X_{i1}$ losslessly for $i=1,2$.
Similarly, we have $I(U_2;X_2) \geq 1+p_2$. From \eqref{eqn:example2}, we have $p_1p_2\geq 0.5$. Now since 
\begin{align}
\min\limits_{\substack{p_1,p_2:\\0\leq p_1,p_2\leq1\ \text{and}\ p_1p_2\geq 0.5}}{p_1+p_2}>1,
\end{align}
we have $I(U_1;X_1)+I(U_2;X_2)\geq 2+p_1+p_2\geq 3$.
\if \arxiv 0
We show this in the extended version~\cite[Appendix H]{KurriRPP21}. 
\fi
This proves that the communication rate region is equal to the region defined by the constraints $R_1\geq 1$, $R_2\geq 1$, and $R_1+R_2\geq 3$.
\end{proof}
\section{\blue Inner and Outer Bounds on the Rate Region}\label{section:encSR}
{\blue In this section, we present our results for the general setup when a pairwise shared randomness of limited rate $R_{00}$ is present between the encoders (see Figure~\ref{fig:encSR}). In addition, for Example~\ref{example}, we will show that the (achievable) rate region defined by $R_1\geq 1$, $R_2\geq 1$ is indeed optimal with sufficiently large pairwise shared randomness rates (all three pairs). We will also exploit common components~\cite{gacs1973common} between the two sources, i.e. random variables $X_0$ such that there exist deterministic functions $f_1$ and $f_2$ with
\begin{align}
X_0 = f_1(X_1) = f_2(X_2) \: \:  \textup{a.s}.
\end{align}}
The following theorem provides an inner bound to the region $\mathcal{R}_{\textup{MAC-coord}}$.
\begin{theorem}[Inner Bound with Encoder Shared Randomness] \label{thm:encsideIBsr}
Given a p.m.f. $q_{X_1X_2ZY}$, the rate tuple $(R_1,R_2,R_{00},R_{01},R_{02})$ is in $\mathcal{R}_{\textup{MAC-coord}}$ if
\begin{align} 
R_{00} &\geq H(U_0|X_0,T) \label{eq:t4e1}\\
R_1 &\geq I(U_1;X_1,U_0|U_2,Z,T) \label{eq:t4e2} \\
R_2 &\geq I(U_2;X_2,U_0|U_1,Z,T) \\
R_1+R_2 &\geq I(U_1,U_2;X_1,X_2,U_0|Z,T) \\
R_1+R_{01} &\geq I(U_1;X_1,X_2,U_0,Y|Z,T)\notag\\
&\hspace{24pt}-I(U_1;U_2|Z,T) \\
R_2+R_{02} &\geq I(U_2;X_1,X_2,U_0,Y|Z,T)\notag\\
&\hspace{24pt}-I(U_1;U_2|Z,T) \\
R_1+R_2+R_{01} &\geq I(U_1;X_1,X_2,U_0,Y|Z,T)\notag\\
&\hspace{24pt}+I(U_2;X_2,U_0|U_1,Z,T) \\
R_1+R_2+R_{02} &\geq I(U_2;X_1,X_2,U_0,Y|Z,T)\notag\\
&\hspace{24pt}+I(U_1;X_1,U_0|U_2,Z,T) \\
R_1+R_2+R_{01}+R_{02} &\geq I(U_1,U_2;X_1,X_2,U_0,Y|Z,T) \label{eq:t4e9}, 
\end{align}
for some p.m.f. 
\begin{align*}
p(x_1,&x_2,z,t,u_0,u_1,u_2,y)=\\
&p(x_1,x_2,z)p(t)p(u_0|x_0,t)p(u_1|x_1,u_0,t) \\
&\hspace{12pt}\times p(u_2|x_2,u_0,t)p(y|u_1,u_2,z,t)
\end{align*}
such that 
\begin{align*}
\sum\limits_{u_0,u_1,u_2}p(x_1,x_2,u_0,u_1,u_2,y,z|t)=q(x_1,x_2,z,y),\ \text{for all}\ t.
\end{align*}
\end{theorem}
The main idea behind the proof is to make use of the shared randomness between the encoders in order to simulate a common description of $X_0^n$, \emph{viz.} $U_0^n$ at both the encoders approximately distributed according to $q^{(n)}_{U_0^n|X_0^n}(u_0^n|x_0^n):=\prod_{i=1}^np_{U_0|X_0}(u_{0i}|x_{0i})$. Then we invoke Theorem \ref{thm:encsideIB} with $X_j$ replaced by $(X_j,U_0)$ for $j \in \{1,2\}$. A detailed proof is given in Section~\ref{proof:thm4}. 

We now provide an outer bound to the region $\mathcal{R}_{\textup{MAC-coord}}$.
\begin{theorem}[Outer Bound with Encoder Shared Randomness] \label{thm:encsideOBsr}
Given a p.m.f. $q_{X_1X_2ZY}$, any rate tuple $(R_1,R_2,R_{00},R_{01},R_{02})$ in $\mathcal{R}_{\textup{MAC-coord}}$ satisfies, {\blue for every $\epsilon \in (0,\frac{1}{4}]$},
\begin{align}
R_1 &\geq \max\{I(U_1;X_1|U_0,Z,T)\nonumber\\
&\hspace{22pt}I(U_1;X_1|U_0,U_2,X_2,Z,T)\!\}\!\label{eq:t5e1}\\
%R_1 &\geq \max\{I(U_1;X_1|U_0,Z,T),I(U_1;X_1|U_0,U_2,X_2,Z,T)\}\! \label{eq:t5e1}\\
R_2 &\geq \max\{I(U_2;X_2|U_0,Z,T)\nonumber\\
&\hspace{22pt}I(U_2;X_2|U_0,U_1,X_1,Z,T)\!\}\!\label{eq:t5e2}\\
%R_2 &\geq \max\{I(U_2;X_2|U_0,Z,T),I(U_2;X_2|U_0,U_1,X_1,Z,T)\}\! \label{eq:t5e2}\\
R_1+R_2 &\geq I(U_1,U_2;X_1,X_2|U_0,Z,T) \label{eq:t5e3}\\
R_1+R_{01} &\geq I(U_1;X_1,X_2,Y|Z,T){\blue-g(\epsilon)} \label{eq:t5e4}\\
R_2+R_{02} &\geq I(U_2;X_1,X_2,Y|Z,T){\blue-g(\epsilon)} \label{eq:t5e5}\\
R_{00}+R_1+R_{01} &\geq I(U_0,U_1;X_1,X_2,Y|Z,T){\blue-g(\epsilon)} \label{eq:t5e6}\\
R_{00}+R_2+R_{02} &\geq I(U_0,U_2;X_1,X_2,Y|Z,T){\blue-g(\epsilon)} \label{eq:t5e7}\\
R_1+R_2+R_{01}+R_{02} &\geq I(U_1,U_2;X_1,X_2,Y|Z,T){\blue-g(\epsilon)} \label{eq:t5e8}\\
R_{00}+R_1+R_2+R_{01}+&R_{02}+g(\epsilon)\nonumber \\
&\geq I(U_0,U_1,U_2;X_1,X_2,Y|Z,T)\!\label{eq:t5e9}
\end{align}
with $g(\epsilon)=2\sqrt{\epsilon}\left(H_q(X_1,X_2,Y,Z)+\log\frac{(|\mathcal{X}_1||\mathcal{X}_2||\mathcal{Y}||\mathcal{Z}|)}{\epsilon}\right)$ (which tends to 0 as $\epsilon \to 0$), for some p.m.f. 
\begin{align}
&p(x_1,x_2,z,t,u_0,u_1,u_2,y)=\nonumber\\
&p(x_1,x_2,z)p(t)p(u_0|t)p(u_1,u_2|x_1,x_2,u_0,t)p(y|u_1,u_2,z,t)\label{eqn:pmfstructure21}
\end{align}
such that 
\begin{align}
p(u_1|x_1,x_2,u_0,z,t)& = p(u_1|x_1,u_0,t)\label{eqn:pmfstructure22}\\
p(u_2|x_1,x_2,u_0,z,t) &= p(u_2|x_2,u_0,t)\label{eqn:pmfstructure23}\\
{\blue||p(x_1,x_2,y,z|t)-q(x_1,x_2,y,z)||_1} &{\blue\leq \epsilon \: \textup{for all} \: t.}
\end{align} 
\end{theorem}
A detailed proof is given in Section~\ref{proof:thm5}.
{\blue Once again, the outer bound in Theorem~\ref{thm:encsideOBsr} is only an epsilon rate region, whose continuity at $\epsilon=0$ is unknown.} 

When the random variables $X_1$ and $X_2$ are conditionally independent given $Z$, {\blue and the shared randomness rates $(R_{00},R_{01},R_{02})$ are unlimited,} we obtain a potentially stronger outer bound {\blue which is also continuous at $\epsilon=0$}.
\begin{theorem}[Outer Bound - Conditionally Independent Sources Given the Side Information] \label{Theorem:outbndindp}
Consider a p.m.f. $q_{X_1X_2ZY}$ such that the random variables $X_1$ and $X_2$ are conditionally independent given $Z$, i.e., $I(X_1;X_2|Z)=0$. Then any rate tuple in {\blue$\mathcal{R}_{\textup{MAC-coord, UL-$(S_0,S_1,S_2)$}}$} satisfies
\begin{align*}
R_1&\geq I(U_0,U_1;X_1|Z,T)\\
R_2&\geq I(U_0,U_2;X_2|Z,T)\\
R_1+R_{2}&\geq I(U_0,U_1,U_2;X_1,X_2|Z,T)\\
%R_2+R_{02}&\geq I(U_2;X_2,Y|X_1,Z,T)\\
%R_1+R_2+R_{01}+R_{02} &\geq I(U_1,U_2;X_1,X_2,Y|Z,T),
\end{align*}
for some p.m.f. 
\begin{align}
&p(x_1,x_2,z,t,u_0,u_1,u_2,y)=\nonumber\\
&p(z)p(x_1|z)p(x_2|z)p(t)p(u_0|t)p(u_1|x_1,u_0,t)p(u_2|x_2,u_0,t)\nonumber\\
&\hspace{1cm}\times p(y|u_1,u_2,z,t)
\end{align}
such that $\sum\limits_{u_0,u_1,u_2}p(x_1,x_2,z,u_0,u_1,u_2,y|t)= q(x_1,x_2,z,y)$, for all $t$, with  {\blue $|\mathcal{U}_0| \leq |\mathcal{X}_1||\mathcal{X}_2||\mathcal{Y}||\mathcal{Z}|$, $|\mathcal{U}_1| \leq |\mathcal{U}_0||\mathcal{X}_1||\mathcal{X}_2||\mathcal{Y}||\mathcal{Z}|$, $|\mathcal{U}_2| \leq |\mathcal{U}_0||\mathcal{X}_1||\mathcal{X}_2||\mathcal{Y}||\mathcal{Z}|$, and $|\mathcal{T}| \leq 3$}.
\end{theorem}
Notice that the improvement is in the structure of the p.m.f. compared to that of Theorem~\ref{thm:encsideOBsr}. A detailed proof can be found in Section~\ref{proof:thm6}.

\subsection{Optimal Region for Example \ref{example} with unlimited shared randomness between all three pairs}\label{section:exampleoptimality}
Here, we show that in the setting of Example \ref{example}, the region $R_1 \geq 1$ and $R_2 \geq 1$ is indeed the optimal rate region (not just achievable as shown in Section \ref{sec:ex1}) with unlimited pairwise shared randomness (all three pairs). The achievability can also be inferred from Theorem \ref{thm:encsideIBsr} with the choice of $U_0=J$, $U_1=X_{1J}$ and $U_2=X_{2J}$, where $J$ is a random variable uniformly distributed on $\{1,2\}$ and independent of $(X_1,X_2)$. To prove the converse, first note that Theorem~\ref{Theorem:outbndindp} (with $Z=\emptyset$) implies that any achievable rate pair $(R_1,R_2)$ must satisfy  
\begin{align*}
R_1&\geq I(U_0,U_1;X_1|T)\\
R_2&\geq I(U_0,U_2;X_2|T) \\
R_1+R_2 &\geq I(U_0,U_1,U_2;X_1,X_2|T),
\end{align*}
for some p.m.f. 
\begin{align*}
p(x_1,&x_2,t,u_0,u_1,u_2,y)=p(x_1)p(x_2)p(t)p(u_0|t)\\
&\hspace{12pt} \times p(u_1|x_1,u_0,t)p(u_2|x_2,u_0,t)p(y|u_1,u_2,t)
\end{align*}
such that $\sum\limits_{u_0,u_1,u_2}p(x_1,x_2,u_0,u_1,u_2,y|t)=q(x_1,x_2,y)$, for all $t$. 

For the converse, it suffices to show that for any p.m.f. 
\begin{align*}
p(x_1,&x_2,u_0,u_1,u_2,y)=\nonumber\\
& p(x_1)p(x_2)p(u_0)p(u_1|x_1,u_0)p(u_2|x_2,u_0)p(y|u_1,u_2)
\end{align*}
 with $\sum\limits_{u_0,u_1,u_2}p(x_1,x_2,u_0,u_1,u_2,y)=q(x_1,x_2,y)$, we have $I(U_0,U_1;X_1)\geq 1$ and $I(U_0,U_2;X_2)\geq 1$.  Equivalently, by marginalizing away $U_2$ and letting $(U_0,U_1) \triangleq U$, it suffices to show that for any p.m.f. $$p(x_1,x_2,u,y)= p(x_1)p(x_2)p(u|x_1)p(y|u,x_2)$$ with $\sum\limits_{u}p(x_1,x_2,u,y)=q(x_1,x_2,y)$, we have $I(U;X_1)\geq 1$ (The condition $I(U_0,U_2;X_2)\geq 1$ can be shown analogously.). This can be established by proving that for each $U=u$ such that $H(X_1|U=u)>0$, there exists $k(u) \in \{1,2\}$ such that $H(X_{1k}|U=u)=0$. Indeed this yields
\begin{align}
I(U;X_1) &= H(X_1)-H(X_1|U) \notag\\
&= 2-\sum_{u} P(U=u) H(X_1|U=u) \notag\\
&= 2-\sum_{u} P(U=u) \left(H(X_{1k}|U=u) \right. \notag\\
&\phantom{wwwwwwwww} \left.+H(X_{1k^{\prime}}|U=u,X_{1k})\right) \notag\\
&\geq 2-\sum_{u} P(U=u) (0+1) \notag\\
&= 1.
\end{align}
We have the following claim.
\begin{claim}\label{claim2}
When $H(X_1|U)>0$, for all $u$ with $H(X_1|U=u)>0$ there exists a $k$ in $\{1,2\}$ such that $H(X_{1k}|U=u)=0$.
\end{claim}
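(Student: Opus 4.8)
The plan is to proceed by contradiction. Suppose the claim fails, so there is a $u$ with $P(U=u)>0$ for which both $H(X_{11}\mid U=u)>0$ and $H(X_{12}\mid U=u)>0$ (this automatically gives $H(X_1\mid U=u)>0$). Write $A_u\triangleq\{x_1:P(X_1=x_1\mid U=u)>0\}$ for the conditional support of $X_1=(X_{11},X_{12})$ given $U=u$; since $X_{11},X_{12}$ are binary, the two hypotheses say precisely that the first coordinates of the elements of $A_u$ are not all equal, and likewise the second coordinates are not all equal.

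First I would extract the conditional-independence structure from the factorization $p(x_1,x_2,u,y)=p(x_1)p(x_2)p(u\mid x_1)p(y\mid u,x_2)$. Marginalizing $x_1$ shows that $U$ is independent of $X_2$; moreover $p(u\mid x_1,x_2)=p(u\mid x_1)$ and $p(y\mid u,x_1,x_2)=p(y\mid u,x_2)$ (this last Markov chain $Y\to(U,X_2)\to X_1$ is the one that survives marginalizing out $U_2$). In particular, for each $x_2$ with $P(X_2=x_2)>0$, the distribution $\nu_{u,x_2}(\cdot)\triangleq p(\cdot\mid u,x_2)$ is well-defined (because $P(U=u,X_2=x_2)=P(U=u)P(X_2=x_2)>0$) and does not depend on $x_1$.

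Next I would bring in the correctness condition. For the target $q$ of Example~\ref{example}, $q(\cdot\mid x_1,x_2)$ is supported on the (at most) two-point set $S(x_1,x_2)\triangleq\{(x_{11},x_{21}),(x_{12},x_{22})\}$, placing mass $1/2$ on each (or mass $1$ if they coincide). Using $\sum_{u'}p(x_1,x_2,u',y)=q(x_1,x_2,y)$ together with the Markov structure above, for every $x_1,x_2$ I get the mixture decomposition $q(y\mid x_1,x_2)=\sum_{u'}p(u'\mid x_1)\,\nu_{u',x_2}(y)$ (here $P(X_1=x_1,X_2=x_2)>0$ makes the conditioning legitimate). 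Since all terms are nonnegative, whenever $p(u\mid x_1)>0$ — equivalently $x_1\in A_u$, as every $x_1$ has positive prior probability — we must have $\supp(\nu_{u,x_2})\subseteq S(x_1,x_2)$.

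Finally I would specialize to $x_2=(0,1)$, so $S(x_1,(0,1))=\{(x_{11},0),(x_{12},1)\}$, whose two elements are distinguished by their second coordinate. Intersecting over $x_1\in A_u$: a candidate point with second coordinate $0$ must have first coordinate $x_{11}$ for every $x_1\in A_u$, which is impossible since the first coordinate is non-constant on $A_u$; symmetrically, a point with second coordinate $1$ is ruled out because the second coordinate is non-constant on $A_u$. Hence $\bigcap_{x_1\in A_u}S(x_1,(0,1))=\emptyset$, so $\supp(\nu_{u,(0,1)})=\emptyset$ — a contradiction, since $\nu_{u,(0,1)}$ is a probability distribution. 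The step I expect to require the most care is the conditioning bookkeeping: verifying that $\nu_{u,x_2}$ is genuinely independent of $x_1$, and that it inherits the support constraint $S(x_1,x_2)$ for \emph{every} $x_1\in A_u$; once that is in place, the contradiction reduces to the elementary combinatorial fact that a family of two-point sets $\{(x_{11},0),(x_{12},1)\}$ ranging over a set $A_u$ with both coordinates non-constant has empty intersection. This is a streamlined version of the argument behind Claim~\ref{claim:ex}.
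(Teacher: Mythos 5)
Your proof is correct. The Markov structure ($U\Perp X_2$, $p(u\mid x_1,x_2)=p(u\mid x_1)$, $Y\to(U,X_2)\to X_1$) does follow from the stated factorization, the mixture identity $q(y\mid x_1,x_2)=\sum_{u'}p(u'\mid x_1)\,\nu_{u',x_2}(y)$ is valid (all conditionings are on positive-probability events since $X_1,X_2$ are uniform and independent of $U$), the resulting support constraint $\supp(\nu_{u,x_2})\subseteq S(x_1,x_2)$ for each $x_1\in A_u$ is sound by nonnegativity, and the intersection $\bigcap_{x_1\in A_u}\{(x_{11},0),(x_{12},1)\}$ is indeed empty whenever both coordinates of $X_1$ are non-constant on $A_u$.

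Your route is genuinely different from the paper's, though, and arguably cleaner. The paper's proof first observes that if both $H(X_{11}\mid U=u)>0$ and $H(X_{12}\mid U=u)>0$ then the conditional support must contain the diagonal $\{(0,0),(1,1)\}$ or the anti-diagonal $\{(0,1),(1,0)\}$, and then handles the two cases separately with two different choices of the receiver observation: $x_2=(0,0)$ (arguing about $X_{1J}$ alone) to rule out the diagonal, and $x_2=(0,1)$ (arguing about the full $Y$) to rule out the anti-diagonal. Your argument replaces that case split with a single uniform device: write the correctness constraint as a nonnegative mixture over $u'$, deduce the per-$x_1$ support inclusion $\supp(\nu_{u,x_2})\subseteq S(x_1,x_2)$, and intersect over $x_1\in A_u$ with the \emph{fixed} choice $x_2=(0,1)$, where the two elements of $S(x_1,(0,1))$ are distinguished by their second coordinate. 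Both approaches use the same Markov chain $Y\to(U,X_2)\to X_1$ and the same contradiction (some conditional law would have empty support), but the paper's is more piecemeal, whereas yours packages the combinatorics into one intersection-of-doubletons observation. What the paper's version buys is that it avoids writing out the mixture identity explicitly; what yours buys is eliminating both the diagonal/anti-diagonal dichotomy and the need to switch $x_2$ between cases.
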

\begin{proof}[Proof of Claim~\ref{claim2}]
We prove this by contradiction. Suppose $H(X_{1i}|U=u)>0$, for $i=1,2$. Then the support of $p_{X_1|U=u}$ has to be a superset of either $\{(0,1),(1,0)\}$ or $\{(0,0),(1,1)\}$. However, it turns out that, the support cannot be a superset of $\{(0,0),(1,1)\}$. To see this, first notice that $P(X_1=(0,0)|U=u,X_2=(0,0))=P(X_1=(0,0)|U=u)>0$, where the equality follows from the independence of $(U,X_1)$ and $X_2$. Similarly, $P(X_1=(1,1)|U=u,X_2=(0,0))>0$. Now since $X_{1J}-(U,X_2)-X_1$, we have $P(X_{1J}=0|U=u,X_2=(0,0))=P(X_{1J}=0|U=u,X_2=(0,0),X_1=(1,1))=0$, where the last equality follows from the correctness of the output $Y=(X_{1J},X_{2J})$. Similarly, $P(X_{1J}=1|U=u,X_2=(0,0))=P(X_{1J}=1|U=u,X_2=(0,0),X_1=(0,0))=0$. This is a contradiction since $p_{Y_1|U=u,X_2=(0,0)}$ has to be a probability distribution. The only other possibility is that $p_{X_1|U=u}$ has a support that is a superset of $\{(0,1),(1,0)\}$. Consider the induced distribution $p_{Y|X_2=(0,1),U=u}$. This is well-defined because $P(X_2=(0,1),U=u)>0$ as $X_2$ is independent of $U$ and $P(X_2=(0,1)),P(U=u)>0$.  Since $P(X_1=(0,1)|X_2=(0,1),U=u)=P(X_1=(0,1)|U=u)>0$ and $Y\rightarrow (X_2,U)\rightarrow X_1$, we have $P(Y=(1,0)|X_2=(0,1),U=u)=P(Y=(0,1)|X_2=(0,1),U=u)=0$.  Since $P(X_1=(1,0)|X_2=(0,1),U=u)=P(X_1=(1,0)|U=u)>0$ and $Y\rightarrow (X_2,U)\rightarrow X_1$, we have $P(Y=(1,1)|X_2=(0,1),U=u)=P(Y=(0,0)|X_2=(0,1),U=u)=0$. This is a contradiction since $p_{Y|X_2=(0,1),U=u}$ has to be a valid probability distribution.
\end{proof}

This proves that the optimal communication rate region with unlimited pairwise shared randomness (all three pairs) \linebreak$\mathcal{R}_{\textup{MAC-coord, UL-$(S_0,S_1,S_2)$}}$ is indeed defined by the constraints $R_1\geq 1$ and $R_2\geq 1$.

\section{Proofs}\label{proofs}
\subsection{Achievability Proofs}\label{Finalproofsachv}
%\subsubsection{Proof of Theorem~\ref{thm:encsideIB}}
{\blue The proof of Theorem~\ref{thm:encsideIB} is in Appendix~\ref{app:pfThm1}.}

\subsubsection{Achievability Proof of Theorem \ref{thm:indep}}\label{proof:achv-thm3}
\begin{proof}
{\blue We argue that achievability follows from Theorem~\ref{thm:encsideIB} by enforcing the constraint $p(x_1,x_2,z) = p(z)p(x_1|z)p(x_2|z)$ along with unlimited shared randomness rate $R_{02}$}. In this case, the joint distribution on $(X_1,X_2,Z,T,U_1,U_2,Y)$ decomposes as $p(x_1,x_2,z,t,u_1,u_2,y)=p(z)p(x_1|z)p(x_2|z)p(t)p(u_1|x_1,t)p(u_2|x_2,t)p(y|u_1,u_2,z,t)$. 
{\blue We first write down the inner bound from Theorem~\ref{thm:encsideIB} for this case.}
\begin{align}
R_1 &\geq I(U_1;X_1|U_2,Z,T) \nonumber\\
&= I(U_1;X_1|Z,T)\notag \\
R_2 &\geq I(U_2;X_2|U_1,Z,T) \nonumber\\
&= I(U_2;X_2|Z,T)\notag\\
R_1+R_2 &\geq I(U_1,U_2;X_1,X_2|Z,T) \notag\\
&= I(U_1;X_1|Z,T)+I(U_2;X_2|Z,T)\notag\\
R_1+R_{01} &\geq I(U_1;X_1,X_2,Y|Z,T)\nonumber\\
&\hspace{24pt}-I(U_1;U_2|Z,T) \notag\\
&= I(U_1;X_1,X_2,Y|Z,T)\notag \\
&=I(U_1;X_1,Y|X_2,Z,T)\notag\\
R_2+R_{02} &\geq I(U_2;X_1,X_2,Y|Z,T)\nonumber\\
&\hspace{24pt}-I(U_1;U_2|Z,T) \notag\\
&= I(U_2;X_1,X_2,Y|Z,T)\notag\\
&=I(U_2;X_2,Y|X_1,Z,T)\notag\\
R_1+R_2+R_{01} &\geq I(U_1;X_1,\!X_2,\!Y|Z,T)\!\notag\\
&\hspace{24pt}+\!I(U_2;X_2|U_1,Z,T) \notag\\
&= I(U_1;X_1,X_2,Y|Z)\nonumber\\
&\hspace{24pt}+I(U_2;X_2|Z,T) \notag\\
R_1+R_2+R_{02} &\geq I(U_2;X_1,\!X_2,\!Y|Z,T)\!\notag\\
&\hspace{24pt}+\!I(U_1;X_1|U_2,Z,T) \notag\\
&= I(U_2;X_1,X_2,Y|Z)\nonumber\\
&\hspace{24pt}+I(U_1;X_1|Z,T)\notag\\
R_1+R_2+R_{01}+R_{02} &\geq I(U_1,U_2;X_1,X_2,Y|Z,T) \label{eqn:thm3acvproof1}.
%&= I(U_1;X_1,X_2,Y|Z) \notag\\
%&\hspace{24pt }+I(U_2;X_1,X_2,Y|Z)\notag.
\end{align}
Notice that the constraints on $R_1+R_2$, $R_1+R_2+R_{01}$ and $R_1+R_2+R_{02}$ are redundant. {\blue This region is an inner bound to $\mathcal{R}_{\textup{MAC-coord}}^{\textup{NO-$S_0$}}$. Considering only the constraints that exclude $R_{02}$ completes the achievability of Theorem~\ref{thm:indep}.}
 \end{proof}

\subsubsection{Proof of Theorem~\ref{thm:encsideIBsr}}\label{proof:thm4}
\begin{proof} 
{\blue We prove the achievability for $|\mathcal{T}|=1$, and the rest of the proof follows by using time sharing argument similar to that in Theorem~\ref{thm:encsideIB} (in particular, see the paragraph after Lemma~\ref{lem:a2}}). Firstly, we show that the shared randomness between the encoders can be harnessed to simulate $U_0^n$ approximately distributed according to $q^{(n)}_{U_0^n|X_0^n}(u_0^n|x_0^n)=\prod_{i=1}^np_{U_0|X_0}(u_{0i}|x_{0i})$. The rate of shared randomness needed here will turn out to be $R_{00} \geq H(U_0|X_0)$, i.e. the constraint \eqref{eq:t4e1} on $R_{00}$ in Theorem \ref{thm:encsideIBsr}.

We make use of the following setup from Cuff~\cite[Corollary VII.5]{cuff2013distributed}.
\begin{figure}[h]
\centering
\begin{tikzpicture}[thick]
\node (e1) at (1,1.5) [rectangle, draw, minimum height=1.0cm, text width=0.8cm]{$f(\cdot)$}; 
\node (c1) at (4,1.5) [rectangle, draw, minimum height=1.0cm, text width=1.5cm]{$P_{V|U_0,X_0}$}; 
\node (s1) at (-1.3,3.5) {$X_0^n$};
\draw[->] (e1.east) -- (c1) node[midway, above] {$U_0^n$};
\draw[->] (c1.east) --++(1.75,0) node[midway, above] {$V^n$};
\draw[<-] (e1) --++(-1.75,0) node[left]{$R_{00}$};
\draw[->] (s1)  -| (e1);
\draw[->] (s1)  -| (c1);
\end{tikzpicture}
\caption{Implications of soft covering~\cite{cuff2013distributed}} \label{fig:cuffsim}
\end{figure}
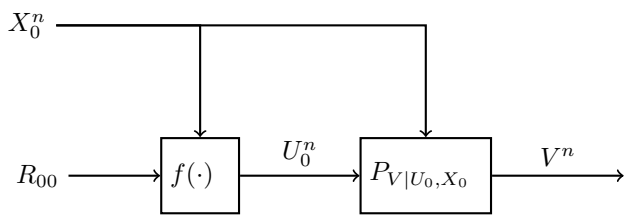
Let $X_0^n$ be an i.i.d. sequence with distribution $p_{X_0}$ and $p_{U_0|X_0}p_{V|U_0,X_0}$ be a memoryless channel. A deterministic encoder $f(\cdot)$ receives both $X_0^n$ and a uniformly distributed random variable $S_0 \in [1:2^{nR_{00}}]$. Cuff~\cite{cuff2013distributed} gave the following sufficient condition on the rate $R_{00}$ so that the induced distribution on the channel output in Figure~\ref{fig:cuffsim} is i.i.d. in the limit of large $n$.
\begin{lemma} \cite[Corollary VII.5]{cuff2013distributed} \label{lem:unifV}
Let $p_{V^n}$ be the induced distribution on the channel output in Figure~\ref{fig:cuffsim} and $q_{V^n}$ be the i.i.d. output distribution specified by $\sum_{u_0,x_0} p_{X_0}p_{U_0|X_0}p_{V|U_0,X_0}$. Then if
\begin{align}
R \geq I(X_0,U_0;V)-H(X_0),
\end{align}
we have
\begin{align}
\lim_{n \to \infty} ||p_{V^n}-q_{V^n}||_{1}=0.
\end{align}
\end{lemma}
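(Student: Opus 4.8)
Since the statement is Corollary~VII.5 of \cite{cuff2013distributed}, one may simply invoke it; to reprove it, the plan is to run a soft-covering (channel resolvability) argument in which the i.i.d.\ source $X_0^n$ is itself used as (most of) the resolvability codebook. First I would fix $\epsilon>0$ and work on the strongly typical set $\mathcal{T}_\epsilon^{(n)}$ of $p_{X_0}$; since $\Pr(X_0^n\notin\mathcal{T}_\epsilon^{(n)})\to 0$, atypical source sequences contribute negligibly to $\|p_{V^n}-q_{V^n}\|_1$ and can be dropped. For the random code, for each $x_0^n\in\mathcal{T}_\epsilon^{(n)}$ and each $s_0\in[1:2^{nR}]$ draw $U_0^n(x_0^n,s_0)$ independently with law $\prod_{i=1}^n p_{U_0|X_0}(\cdot\mid x_{0i})$, and take the (deterministic) encoder $f(x_0^n,s_0):=U_0^n(x_0^n,s_0)$. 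Up to the negligible correction, the induced output law is $p_{V^n}(v^n)=\sum_{x_0^n,s_0}p^{(n)}_{X_0}(x_0^n)\,2^{-nR}\,p^{(n)}_{V|U_0,X_0}(v^n\mid U_0^n(x_0^n,s_0),x_0^n)$, and averaging over the code gives $\E_{\mathrm{code}}[p_{V^n}(v^n)]=q_{V^n}(v^n)$.

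The key step is to read the family $\{(x_0^n,U_0^n(x_0^n,s_0))\}_{x_0^n\in\mathcal{T}_\epsilon^{(n)},\,s_0}$ as a codebook of size $\doteq 2^{n(H(X_0)+R)}$ for the memoryless channel $p_{V|U_0,X_0}$: on $\mathcal{T}_\epsilon^{(n)}$ the mixing weights $p^{(n)}_{X_0}(x_0^n)2^{-nR}$ are within $2^{\pm n\epsilon}$ of uniform and the codeword marginals are close to $p^{(n)}_{X_0U_0}$. The soft-covering lemma, in this conditional/superposition form, then yields $\E_{\mathrm{code}}\|p_{V^n}-q_{V^n}\|_1\to 0$ as soon as the total codebook rate beats the relevant information quantity, i.e.\ $H(X_0)+R> I(X_0,U_0;V)$, which is precisely $R> I(X_0,U_0;V)-H(X_0)$ (the boundary following by sending $\epsilon\downarrow 0$ together with a standard continuity argument). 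Finally, a vanishing codebook-averaged total variation forces the existence of a fixed realization of the binning, i.e.\ a deterministic $f$, for which $\|p_{V^n}-q_{V^n}\|_1\to 0$, which is the claim.

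I expect the main obstacle to be the soft-covering estimate itself in the presence of the side information $X_0^n$ shared between the encoder $f$ and the channel: the ``outer layer'' of the codebook is a deterministic enumeration of $\mathcal{T}_\epsilon^{(n)}$ rather than an i.i.d.\ ensemble, and the weights are not exactly uniform, so one cannot literally quote the textbook soft-covering lemma. A crude second-moment (Chebyshev) bound on $p_{V^n}(v^n)$ is too lossy here---it only yields the weaker threshold $R> I(X_0,U_0;V)-H(X_0|V)$---so one must use the refined concentration argument of \cite{cuff2013distributed}, conditioning on the event that the selected codeword is jointly typical with $v^n$ and bounding the resulting likelihood ratio rather than its variance. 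A secondary point to verify is the superposition correlation (distinct $s_0$ sharing the same $x_0^n$ give dependent channel inputs), but one checks that such pairs contribute nothing to the fluctuation of $p_{V^n}(v^n)$ about its mean, so they are harmless.
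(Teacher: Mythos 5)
The paper does not prove this lemma at all: it is quoted verbatim as Corollary~VII.5 of Cuff~\cite{cuff2013distributed} and simply invoked (here with $p_{V|U_0,X_0}$ taken to be the identity channel so that $V=(U_0,X_0)$), so your opening remark that ``one may simply invoke it'' is exactly what the paper does. Your additional soft-covering reconstruction is architecturally faithful to Cuff's argument: the essential observation that the source $X_0^n$, being visible to both the encoder and the channel, acts as free common randomness at rate $H(X_0)$, so that the effective resolvability rate is $R+H(X_0)$ and the threshold becomes $R+H(X_0)>I(X_0,U_0;V)$, is the right one; the verification $\E_{\mathrm{code}}[p_{V^n}]=q_{V^n}$ is correct; and your secondary remarks (codewords with the same $x_0^n$ but different $s_0$ remain conditionally independent, so they do not inflate the variance; the typical-set truncation of the outer layer costs only $o(1)$ in total variation) are sound. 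However, you should note that the sketch ultimately defers the substance of the proof: the place you flag as the ``main obstacle''---soft covering with a deterministic, non-uniformly-weighted outer codebook, handled by a likelihood-ratio/typicality concentration argument rather than Chebyshev---\emph{is} Cuff's proof of Corollary~VII.5. So as a self-contained reproof the argument does not stand alone, but as a check that the citation is being used correctly and in its intended regime, it is accurate, and nothing you write contradicts either Cuff or the way the paper deploys the lemma.
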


For our purposes, we take $p_{V|U_0,X_0}$ to be an identity channel i.e. $V=(U_0,X_0)$ in Lemma \ref{lem:unifV}. Hence 
\begin{align}\label{eqn:cufflemma}
\lim_{n \to \infty} ||p_{U_0^n,X_0^n}-q_{U_0^n,X_0^n}||_{1}=0,
\end{align}
provided that $R_{00}$ satisfies
\begin{align}
R_{00} &\geq I(X_0,U_0;X_0,U_0)-H(X_0) \notag\\
&= H(U_0|X_0).
\end{align}

Suppose if $(X_1^n,X_2^n,U_0^n)$ are generated exactly i.i.d., then we can invoke Theorem \ref{thm:encsideIB} with $X_j$ replaced by $(X_j,U_0)$ for $j \in \{1,2\}$ and it can be verified that this exactly yields the eight rate constraints \eqref{eq:t4e2}--\eqref{eq:t4e9} involving $(R_1,R_2,R_{01},R_{02})$ in Theorem \ref{thm:encsideIBsr}. Here, we show that the same set of rate constraints suffices even if $(U_0^n,X_0^n)$ are approximately i.i.d. in the sense of \eqref{eqn:cufflemma}. All that remains to be shown is that under the rate constraints in Theorem~\ref{thm:encsideIBsr}, there exists a sequence of codes with an induced p.m.f. on $(X_1^n,X_2^n,U_0^n,Z^n,Y^n)$ such that the total variation distance between this p.m.f. and the desired i.i.d. p.m.f. $\prod q_{X_1,X_2,U_0,Z,Y}$ vanishes in the limit of large blocklength $n$. This can be argued out as follows.

If $(X_1^n,X_2^n,U_0^n)$ were exactly i.i.d., by invoking Theorem~\ref{thm:encsideIB} there exists
\begin{align*}
&p(m_1,m_2,y^n|x_1^n,x_2^n,u_0^n,z^n,s_1,s_2)\\
&:=p(m_1|s_1,x_1^n,u_0^n)p(m_2|s_2,x_2^n,u_0^n)p(y^n|m_1,m_2,s_1,s_2,z^n)
\end{align*}
%\begin{align}
%P_{X_1^n,U_0^n}P_{X_2^n,U_0^n}P_{M_1|X_1^n,U_0^n}P_{M_2|X_2^n,U_0^n}P_{Y^n|M_1,M_2} \notag\\
%&\!\!\!\!\!\!\!\!\!\!\!\!\!\!\!\!\!\!\!\!\!\!\!\!\!\!\!\!\!\!\!\!\!\!\!\!\!\!\!\!\! := P_{X_1^n,X_2^n,U_0^n,M_1,M_2,Y^n}
%\end{align}
such that
\begin{align}
||p_{X_1^n,X_2^n,U_0^n,Z^n,Y^n}-\prod q_{X_1X_2U_0ZY}||_{1} \leq \epsilon. \label{eq:tv1}
\end{align}
On the other hand, let $\tilde{p}(x_1^n,x_2^n,u_0^n,z^n,s_1,s_2)$ be the distribution in which $(U_0^n,X_0^n)$ are approximately i.i.d. in the sense of \eqref{eqn:cufflemma}. Let us define
\begin{align}
\tilde{p}(x_1^n,&x_2^n,u_0^n,z^n,s_1,s_2,m_1,m_2,y^n)\nonumber\\:=
&\tilde{p}(x_1^n,x_2^n,u_0^n,z^n,s_1,s_2)\nonumber\\
&\hspace{12pt}\times p(m_1,m_2,y^n|x_1^n,x_2^n,u_0^n,z^n,s_1,s_2).
\end{align} 
Now, the total variation distance of interest can be bounded using triangle inequality as
\begin{align}
&||\tilde{p}_{X_1^n,X_2^n,{U}_0^n,Z^n,Y^n}-\prod q_{X_1X_2U_0ZY}||_{1} \notag\\
&\phantom{www}\leq ||\tilde{p}_{X_1^n,X_2^n,{U}_0^n,Z^n,Y^n}-p_{X_1^n,X_2^n,U_0^n,Z^n,Y^n}||_{1} \notag\\
&\phantom{wwwwww} +||p_{X_1^n,X_2^n,U_0^n,Z^n,Y^n}-\prod q_{X_1X_2U_0Y}||_{1} \notag\\
&\phantom{www}\leq ||\tilde{p}_{X_1^n,X_2^n,{U}_0^n,Z^n,Y^n}-p_{X_1^n,X_2^n,U_0^n,Z^n,Y^n}||_{1}+\epsilon, \label{eq:tv3}
\end{align}
where \eqref{eq:tv3} follows from \eqref{eq:tv1}. Next consider the first term on the RHS of \eqref{eq:tv3}.
\begin{align}
&||\tilde{p}_{X_1^n,X_2^n,U_0^n,Z^n,Y^n}-p_{X_1^n,X_2^n,U_0^n,Z^n,Y^n}||_{1} \notag\\
& =\sum_{\substack{x_1^n,x_2^n,u_0^n,\\ z^n,y^n}} |\sum_{\substack{m_1,m_2,\\ s_1,s_2}} \tilde{p}(x_1^n,x_2^n,u_0^n,z^n,y^n,m_1,m_2,s_1,s_2) \notag\\
&\hspace{50pt}-p(x_1^n,x_2^n,u_0^n,z^n,y^n,m_1,m_2,s_1,s_2)| \notag\\
& \stackrel{(a)}\leq \sum_{\substack{x_1^n,x_2^n,u_0^n,z^n,\\ y^n,m_1,m_2,s_1,s_2}} \frac{p(y^n,m_1,m_2|x_1^n,x_2^n,u_0^n,z^n,s_1,s_2)}{2^{n(R_{01}+R_{02})}} \notag\\
&\hspace{50pt}\times \left|\tilde{p}(x_1^n,x_2^n,u_0^n,z^n)-p(x_1^n,x_2^n,u_0^n,z^n)\right| \notag\\
& = \sum_{x_1^n,x_2^n,u_0^n,z^n} \left|\tilde{p}(x_1^n,x_2^n,u_0^n,z^n)-p(x_1^n,x_2^n,u_0^n,z^n)\right| \notag\\
& \stackrel{(b)}\leq\sum_{x_1^n,x_2^n,x_0^n,u_0^n,z^n} |\tilde{p}(x_1^n,x_2^n,x_0^n,u_0^n,z^n)-\notag\\
&\hspace{50pt}p(x_1^n,x_2^n,x_0^n,u_0^n,z^n)| \notag\\
&=\sum_{x_1^n,x_2^n,x_0^n,u_0^n,z^n} p(x_1^n,x_2^n,x_0^n,z^n)|\tilde{p}(u_0^n|x_0^n)-p(u_0^n|x_0^n)|\notag\\
&= \sum_{x_0^n,u_0^n} \left|\tilde{p}(x_0^n,u_0^n)-p(x_0^n,u_0^n)\right| \notag\\
&\leq \epsilon, \label{eq:tv4}
\end{align}
{\blue where (a) follows from the triangle inequality and the fact that $S_1$ and $S_2$ are uniformly distributed on their respective ranges, (b) follows since $X_0$ is a common function of $X_1$ and $X_2$, and \eqref{eq:tv4} follows from the statement below \eqref{eq:tv1}.}
From expressions \eqref{eq:tv3} and \eqref{eq:tv4}, we conclude that
\begin{align*}
||\tilde{p}_{X_1^n,X_2^n,{U}_0^n,Z^n,Y^n}-\prod q_{X_1,X_2,U_0,Z,Y}||_{1} \leq 2\epsilon.
\end{align*}
\end{proof}

\subsection{Converse Proofs}\label{Finalproofsconv}
We will show that Theorem~\ref{thm:encsideOB} is a direct consequence of Theorem~\ref{thm:encsideOBsr}, which we prove first.

\subsubsection{Proof of Theorem \ref{thm:encsideOBsr}}\label{proof:thm5}
\begin{proof}

Consider a code that induces a joint distribution on $(X_1^n,X_2^n,Z^n,Y^n)$ such that
\begin{align}
\lVert p_{X_1^n,X_2^n,Z^n,Y^n}-q^{(n)}_{X_1X_2ZY}\rVert_1< \epsilon,\label{eqn:totalSR}.
 \end{align}
for $\epsilon\in(0,\frac{1}{4}]$.
For ease of notation, for a vector $A^n$, we write $A_{\sim i} \triangleq (A^{i-1},A_{i+1}^n)$. Also let $M_{[0:2]} \triangleq (M_0,M_1,M_2)$ and $S_{[0:2]} \triangleq (S_0,S_1,S_2)$. We quote the following lemmas that will prove useful in the outer bound.
\begin{lemma} \cite[Lemma 6]{cervia2020strong} \label{lem:c1}
Let $p_{X^n}$ be such that $||p_{X^n}-q_X^{(n)}||_1 \leq \epsilon$, where $q^{(n)}_X(x^n)=\prod_{i=1}^nq_X(x_i)$, then
\begin{align}
\sum_{i=1}^n I_p(X_i;X_{\sim i}) \leq ng_1(\epsilon),
\end{align}
where ${\blue g_1(\epsilon)=2\sqrt{\epsilon}\left(H(X)+\log{|\mathcal{X}|+\log{\frac{1}{\sqrt{\epsilon}}}}\right)}\to 0$ as $\epsilon \to 0$.
\end{lemma}

\begin{lemma} \cite[Lemma VI.3]{cuff2013distributed} \label{lem:c2}
Let $p_{X^n}$ be such that $||p_{X^n}-q_X^{(n)}||_1 \leq \epsilon$, where $q^{(n)}_X(x^n)=\prod_{i=1}^nq_X(x_i)$, then for any RV $T \in [1:n]$ independent of $X^n$,
\begin{align}
I_p(T;X_T) \leq g_2(\epsilon),
\end{align}
where ${\blue g_2(\epsilon)=4\epsilon\left(\log{|\mathcal{X}|}+\log{\frac{1}{\epsilon}}\right)} \to 0$ as $\epsilon \to 0$.
\end{lemma}
{\blue Notice that for $\epsilon\in(0,\frac{1}{4}]$, we have $\max\{g_1(\epsilon),g_2(\epsilon)\}\leq g(\epsilon):=2\sqrt{\epsilon}\left(H(X)+\log{|\mathcal{X}|+2\log{\frac{1}{\sqrt{\epsilon}}}}\right)$. So, we can replace $g_1(\epsilon)$ and $g_2(\epsilon)$ in Lemmas~\ref{lem:c1} and \ref{lem:c2} by $g(\epsilon)$, which also tends to $0$ as $\epsilon\to 0$.} 
 Let us consider the first lower bound on $R_j$ for $j \in \{1,2\}$ in \eqref{eq:t5e1}--\eqref{eq:t5e2}.
\begin{align}
nR_j &\geq H(M_j)\notag\\
& \geq H(M_j|S_0,S_j,Z^n) \notag\\
&\geq I(M_j;X_j^n|S_0,S_j,Z^n)\notag\\
& \stackrel{(a)}= I(M_j,S_j;X_j^n|S_0,Z^n) \notag\\
&= \sum_{i=1}^n I(M_j,S_j;X_{ji}|X_{j,i+1}^n,S_0,Z_i,Z_{\sim i}) \notag\\
&\stackrel{(b)}= \sum_{i=1}^n I(M_j,S_j,X_{j,i+1}^n,Z_{\sim i};X_{ji}|S_0,Z_i) \notag\\
&\geq \sum_{i=1}^n I(M_j,S_j,X_{j,i+1}^n,Z^{i-1};X_{ji}|S_0,Z_i) \notag\\
&\stackrel{(c)}\geq \sum_{i=1}^n I(U_{ji};X_{ji}|U_{0i},Z_i) \notag\\
&\stackrel{(d)}= nI(U_{jT};X_{jT}|U_{0T},Z_T,T) \notag\\
&\stackrel{(e)}= nI(U_j;X_j|U_0,Z,T), \label{eq:refe1SR}
\end{align}
where (a) follows since $S_j$ is independent of $(S_0,X_j^n,Z^n)$, (b) follows since $(X_{ji},Z_i),i=1,\dots,n,$ are jointly i.i.d. and $S_0$ is independent of $(X_j^n,Z^n)$, (c) follows by defining $U_{0i} = S_0$, $U_{1i} = (M_1,S_1,X_{1,i+1}^n,Z^{i-1})$ and $U_{2i} = (M_2,S_2,X_{2,i+1}^n)$, (d) follows by introducing a uniform time-sharing random variable $T \in [1:n]$ that is independent of everything else, while (e) follows by defining $U_0:=U_{0T}$, $U_1:=U_{1T}$, $U_2:=U_{2T}$, $X_1:=X_{1T}$, $X_2:=X_{2T}$, $Y:=Y_T$ and $Z:=Z_T$. The second lower bound on $R_1$ in \eqref{eq:t5e1} is obtained as follows.
\begin{align}
%&nR_1 \\
nR_1&\geq H(M_1)\notag\\
&\geq H(M_1|X_2^n,S_{[0:2]},Z^n) \notag\\
&\geq I(M_1;X_1^n|X_2^n,S_{[0:2]},Z^n) \notag\\
&\stackrel{(a)} = I(M_{[1:2]};X_1^n|X_2^n,S_{[0:2]},Z^n) \notag\\
&\stackrel{(b)}= I(M_{[1:2]},S_{[1:2]};X_1^n|S_0,X_2^n,Z^n) \notag\\
&= \sum_{i=1}^n I(M_{[1:2]},S_{[1:2]};X_{1i}|X_{1,i+1}^n,S_0,X_2^n,Z^n) \notag\\
&\stackrel{(c)}= \sum_{i=1}^n I(M_{[1:2]},S_{[1:2]},X_{1,i+1}^n,X_{2,\sim i},Z_{\sim i};X_{1i}|S_0,X_{2i},Z_i) \notag\\
&\geq  \sum_{i=1}^n I(M_{[1:2]}\!,S_{[1:2]}\!,X_{1,i+1}^n\!,X_{2,i+1}^n\!,Z^{i-1};X_{1i}|S_0,X_{2i},Z_i) \notag\\
&\stackrel{(d)}= \sum_{i=1}^n I(U_{1i},U_{2i};X_{1i}|U_{0i},X_{2i},Z_i) \notag\\
&\stackrel{(e)}= \sum_{i=1}^n I(U_{1i};X_{1i}|U_{0i},U_{2i},X_{2i},Z_i) \notag\\
&= nI(U_{1T};X_{1T}|U_{0T},U_{2T},X_{2T},Z_T,T) \notag\\
&= nI(U_1;X_1|U_0,U_2,X_2,Z,T), \label{eq:refe2SR}
\end{align}
where (a) follows from the Markov chain $M_2 \to (X_2^n,S_0,S_2) \to (M_1,S_1,X_1^n,Z^n)$, (b) follows since $S_{[1:2]}$ is independent of $(S_0,X_1^n,X_2^n,Z^n)$, {\blue(c) follows since $S_0$ is independent of $(X_1^n,X_2^n,Z^n)$ and the fact that $(X_{1i},X_{2i},Z_i), i=1,\cdots,n$ are i.i.d., (d) follows from the identifications $U_{0i} = S_0$, $U_{1i} = (M_1,S_1,X_{1,i+1}^n,Z^{i-1})$ and $U_{2i} = (M_2,S_2,X_{2,i+1}^n)$}, while (e) follows from the Markov chain $U_{2i} \to (U_{0i},X_{2i}) \to (X_{1i},Z_i)$. Similarly, we obtain
\begin{align}
nR_2 &\geq nI(U_2;X_2|U_0,U_1,X_1,Z,T).
\end{align}
We next derive the lower bound on $(R_1+R_2)$ in \eqref{eq:t5e3}.
\begin{align}
&n(R_1+R_2) \notag\\
& \geq H(M_{[1:2]})\notag\\
&\geq H(M_{[1:2]}|S_{[0:2]},Z^n) \notag\\
& \geq I(M_{[1:2]};X_1^n,X_2^n|S_{[0:2]},Z^n) \notag\\
&\stackrel{(a)}= I(M_{[1:2]},S_{[1:2]};X_1^n,X_2^n|S_0,Z^n) \notag\\
&= \sum_{i=1}^n I(M_{[1:2]},S_{[1:2]};X_{1i},X_{2i}|X_{1,i+1}^n,X_{2,i+1}^n,S_0,Z^n) \notag\\
&\stackrel{(b)}= \sum_{i=1}^n I(M_{[1:2]},\!S_{[1:2]},\!X_{1,i\!+\!1}^n,\!X_{2,i\!+\!1}^n,\!Z_{\sim i};X_{1i},\!X_{2i}|S_0,Z_i) \notag\\
&\geq \sum_{i=1}^n I(M_{[1:2]},\!S_{[1:2]},\!X_{1,i\!+\!1}^n,\!X_{2,i\!+\!1}^n,\!Z^{i-1};X_{1i},\!X_{2i}|S_0,Z_i) \notag\\
&= \sum_{i=1}^n I(U_{1i},U_{2i};X_{1i},X_{2i}|U_{0i},Z_i) \notag\\
&= nI(U_{1T},U_{2T};X_{1T},X_{2T}|U_{0T},Z_T,T) \notag\\
& = nI(U_1,U_2;X_1,X_2|U_0,Z,T), \label{eq:refe3SR}
\end{align}
where (a) follows since $S_{[1:2]}$ is independent of $(S_0,X_1^n,X_2^n,Z^n)$ while (b) follows since $(X_{1i},X_{2i},Z_i),i=1,\dots,n,$ are jointly i.i.d. and $S_0$ is independent of $ (X_1^n,X_2^n,Z^n)$. We next derive the lower bound on $(R_j+R_{0j})$ for $j\in\{1,2\}$ in \eqref{eq:t5e4}--\eqref{eq:t5e5}.
\begin{align}
&n(R_j+R_{0j}) \notag\\ 
&\geq H(M_j,S_j)\notag\\
& \geq H(M_j,S_j|Z^n) \notag\\
& \geq I(M_j,S_j;X_1^n,X_2^n,Y^n|Z^n) \notag\\
&= \sum_{i=1}^n I(M_j,S_j;X_{1i},X_{2i},Y_i|X_{1,i+1}^n,X_{2,i+1}^n,Y_{i+1}^n,Z^n) \notag\\
&= \sum_{i=1}^n I(M_j,S_j,X_{1,i+1}^n,X_{2,i+1}^n,Y_{i+1}^n,Z_{\sim i};X_{1i},X_{2i},Y_i|Z_i) \notag\\
&\phantom{www} - \sum_{i=1}^n I(X_{1,i+1}^n,X_{2,i+1}^n,Y_{i+1}^n,Z_{\sim i};X_{1i},X_{2i},Y_i|Z_i) \notag\\
&\stackrel{(a)}\geq \sum_{i=1}^n I(M_j,S_j,X_{j,i+1}^n,Z^{i-1};X_{1i},X_{2i},Y_i|Z_i)-ng(\epsilon) \notag\\
&\stackrel{(b)}\geq \sum_{i=1}^n I(U_{ji};X_{1i},X_{2i},Y_i|Z_i)-ng(\epsilon) \notag\\
&= nI(U_{jT};X_{1T},X_{2T},Y_T|Z_T,T) -ng(\epsilon) \notag\\
& = nI(U_j;X_1,X_2,Y|Z,T)-ng(\epsilon), \label{eq:refe4}
\end{align}
 where (a) follows since
 \begin{align}
\sum_{i=1}^n &I(X_{1,i+1}^n,X_{2,i+1}^n,Y_{i+1}^n,Z_{\sim i};X_{1i},X_{2i},Y_i|Z_i) \notag\\
 & \leq \sum_{i=1}^n I(X_{1 \sim i},X_{2 \sim i},Y_{\sim i},Z_{\sim i};X_{1i},X_{2i},Y_i,Z_i)\notag\\
 & \leq ng(\epsilon)\label{eqn:newrepeat}
 \end{align}
by \eqref{eqn:totalSR} and Lemma \ref{lem:c1}, {\blue while (b) follows from the identifications $U_{1i} = (M_1,S_1,X_{1,i+1}^n,Z^{i-1})$ and $U_{2i} = (M_2,S_2,X_{2,i+1}^n)$}.

We next derive the lower bound on $(R_{00}+R_j+R_{0j})$ for $j\in\{1,2\}$ in \eqref{eq:t5e6}--\eqref{eq:t5e7}.
\begin{align}
&n(R_{00}+R_j+R_{0j}) \notag\\ 
&\geq H(M_j,S_0,S_j)\notag\\
& \geq H(M_j,S_0,S_j|Z^n) \notag\\
& \geq I(M_j,S_0,S_j;X_1^n,X_2^n,Y^n|Z^n) \notag\\
&= \sum_{i=1}^n I(M_j,S_0,S_j;X_{1i},X_{2i},Y_i|X_{1,i+1}^n,X_{2,i+1}^n,Y_{i+1}^n,Z^n) \notag\\
&= \sum_{i=1}^n I(M_j,\!S_0,\!S_j,\!X_{1,i+1}^n,\!X_{2,i+1}^n,\!Y_{i+1}^n,\!Z_{\sim i};X_{1i},\!X_{2i},\!Y_i|Z_i) \notag\\
&\phantom{www} - \sum_{i=1}^n I(X_{1,i+1}^n,X_{2,i+1}^n,Y_{i+1}^n,Z_{\sim i};X_{1i},X_{2i},Y_i|Z_i) \notag\\
&\stackrel{(a)}\geq \sum_{i=1}^n I(M_j,S_0,S_j,X_{j,i+1}^n,Z^{i-1};X_{1i},X_{2i},Y_i|Z_i)-ng(\epsilon) \notag\\
&\stackrel{(b)}\geq \sum_{i=1}^n I(U_{0i},U_{ji};X_{1i},X_{2i},Y_i|Z_i)-ng(\epsilon) \notag\\
&= nI(U_{0T},U_{jT};X_{1T},X_{2T},Y_T|Z_T,T) -ng(\epsilon) \notag\\
& = nI(U_0,U_j;X_1,X_2,Y|Z,T)-ng(\epsilon), \label{eq:refe4SR}
\end{align}
where (a) follows from \eqref{eqn:newrepeat}, {\blue while (b) follows from the identifications $U_{0i} = S_0$, $U_{1i} = (M_1,S_1,X_{1,i+1}^n,Z^{i-1})$ and $U_{2i} = (M_2,S_2,X_{2,i+1}^n)$}.
%\begin{align}
%\sum_{i=1}^n &I(X_{1,i+1}^n,X_{2,i+1}^n,Y_{i+1}^n,Z_{\sim i};X_{1i},X_{2i},Y_i|Z_i) \notag\\
%& \leq \sum_{i=1}^n I(X_{1 \sim i},X_{2 \sim i},Y_{\sim i},Z_{\sim i};X_{1i},X_{2i},Y_i,Z_i)\notag\\
%& \leq ng(\epsilon)
%\end{align}
%by \eqref{eqn:totalSR} and Lemma \ref{lem:c1}.
We next derive the lower bound on $(R_1+R_2+R_{01}+R_{02})$ in \eqref{eq:t5e8}.
\begin{align}
&n(R_1+R_2+R_{01}+R_{02}) \notag\\
& \geq H(M_{[1:2]},S_{[1:2]})\notag\\
&\geq H(M_{[1:2]},S_{[1:2]}|Z^n) \notag\\
& \geq I(M_{[1:2]},S_{[1:2]};X_1^n,X_2^n,Y^n|Z^n) \notag\\
&= \sum_{i=1}^n I(M_{[1:2]},S_{[1:2]};X_{1i},X_{2i},Y_i|X_{1,i+1}^n,X_{2,i+1}^n,Y_{i+1}^n,Z^n) \notag\\
&= \sum_{i=1}^n I(M_{[1:2]},\!S_{[1:2]},\!X_{1,i\!+\!1}^n,\!X_{2,i\!+\!1}^n,\!Y_{i\!+\!1}^n,\!Z_{\sim i};X_{1i},\!X_{2i},\!Y_i|Z_i) \notag\\
&\phantom{wwww} - \sum_{i=1}^n I(X_{1,i+1}^n,X_{2,i+1}^n,Y_{i+1}^n,Z_{\sim i};X_{1i},X_{2i},Y_i|Z_i) \notag\\
&\stackrel{(a)}\geq \sum_{i=1}^n I(M_{[1:2]},\!S_{[1:2]},\!X_{1,i\!+\!1}^n,\!X_{2,i\!+\!1}^n,\!Z^{i-1};X_{1i},\!X_{2i},\!Y_i|Z_i) \notag\\
&\phantom{wwww}-ng(\epsilon) \notag\\
&= \sum_{i=1}^n I(U_{1i},U_{2i};X_{1i},X_{2i},Y_i|Z_i) - ng(\epsilon) \notag\\
&= nI(U_{1T},U_{2T};X_{1T},X_{2T},Y_T|Z_T,T)-ng(\epsilon) \notag\\
& = nI(U_1,U_2;X_1,X_2,Y|Z,T)-ng(\epsilon), \label{eq:refe5}
\end{align}
where (a) follows from \eqref{eqn:newrepeat}.
We finally derive the lower bound on $(R_{00}+R_1+R_2+R_{01}+R_{02})$ in \eqref{eq:t5e9}.
\begin{align}
&n(R_{00}+R_1+R_2+R_{01}+R_{02}) \notag\\
& \geq H(M_{[1:2]},S_{[0:2]})\notag\\
&\geq H(M_{[1:2]},S_{[0:2]}|Z^n) \notag\\
& \geq I(M_{[1:2]},S_{[0:2]};X_1^n,X_2^n,Y^n|Z^n) \notag\\
&= \sum_{i=1}^n I(M_{[1:2]},S_{[0:2]};X_{1i},X_{2i},Y_i|X_{1,i+1}^n,X_{2,i+1}^n,Y_{i+1}^n,Z^n) \notag\\
&= \sum_{i=1}^n I(M_{[1:2]},\!S_{[0:2]},\!X_{1,i\!+\!1}^n,\!X_{2,i\!+\!1}^n,\!Y_{i\!+\!1}^n,\!Z_{\sim i};X_{1i},\!X_{2i},\!Y_i|Z_i) \notag\\
&\phantom{wwww} - \sum_{i=1}^n I(X_{1,i+1}^n,X_{2,i+1}^n,Y_{i+1}^n,Z_{\sim i};X_{1i},X_{2i},Y_i|Z_i) \notag\\
&\stackrel{(a)}\geq \sum_{i=1}^n I(M_{[1:2]},\!S_{[0:2]},\!X_{1,i\!+\!1}^n,\!X_{2,i\!+\!1}^n,\!Z^{i-1};X_{1i},\!X_{2i},\!Y_i|Z_i) \notag\\
&\phantom{wwww}-ng(\epsilon) \notag\\
&= \sum_{i=1}^n I(U_{0i},U_{1i},U_{2i};X_{1i},X_{2i},Y_i|Z_i) - ng(\epsilon) \notag\\
&= nI(U_{0T},U_{1T},U_{2T};X_{1T},X_{2T},Y_T|Z_T,T)-ng(\epsilon) \notag\\
& = nI(U_0,U_1,U_2;X_1,X_2,Y|Z,T)-ng(\epsilon), \label{eq:refe5SR}
\end{align}
where (a) follows from \eqref{eqn:newrepeat}.

We now prove the Markov chains $U_{1i} \to (X_{1i},U_{0i}) \to (X_{2i},Z_i)$, $U_{2i} \to (X_{2i},U_{0i}) \to (X_{1i},Z_i)$, $(U_{1i},U_{2i}) \to (X_{1i},X_{2i},U_{0i}) \to Z_i$ and $Y_i \to (U_{1i},U_{2i},Z_i) \to (X_{1i},X_{2i},U_{0i})$. Note that this implies that the joint p.m.f. satisfies \eqref{eqn:pmfstructure21}-\eqref{eqn:pmfstructure23}. Recall the auxiliary random variable identifications
\begin{align*}
U_{0i} &= S_0, \\
U_{1i} &= (M_1,S_1,X_{1,i+1}^n,Z^{i-1}), \\
U_{2i} &= (M_2,S_2,X_{2,i+1}^n).
\end{align*}
Let us first show that $(U_{1i},U_{2i}) \to (X_{1i},X_{2i},U_{0i}) \to Z_i$.
\begin{align}
&I(U_{1i},U_{2i};Z_i|X_{1i},X_{2i},U_{0i}) \notag\\
&\phantom{w} = I(M_{[1:2]},S_{[1:2]},X_{1,i+1}^n,X_{2,i+1}^n,Z^{i-1};Z_i|X_{1i},X_{2i},S_0) \notag\\
&\phantom{w} \leq I(M_{[1:2]},S_{[1:2]},X_{1 \sim i},X_{2 \sim i},Z^{i-1};Z_i|X_{1i},X_{2i},S_0) \notag\\
&\phantom{w} = I(S_{[1:2]},X_{1 \sim i},X_{2 \sim i},Z^{i-1};Z_i|X_{1i},X_{2i},S_0) \notag\\
&\phantom{www}+I(M_{[1:2]};Z_i|X_1^n,X_2^n,S_{[0:2]},Z^{i-1}) \notag\\
&\phantom{w}\stackrel{(a)}= 0+0 = 0,
\end{align}
where in (a), the first term is zero since $S_{[0:2]}$ is independent of $(X_1^n,X_2^n,Z^n)$ and $(X_{1i},X_{2i},Z_i),i=1,\dots,n,$ are jointly i.i.d., while the second term is zero because of the Markov chain $M_{[1:2]} \to (X_1^n,X_2^n,S_{[0:2]}) \to Z^n$.

We now show that $U_{1i} \to (X_{1i},U_{0i}) \to (X_{2i},Z_i)$ holds.
\begin{align}
&I(U_{1i};X_{2i},Z_i|X_{1i},U_{0i}) \notag\\
&\phantom{w} = I(M_1,S_1,X_{1,i+1}^n,Z^{i-1};X_{2i},Z_i|X_{1i},S_0) \notag\\
&\phantom{w} \leq I(M_1,S_1,X_{1 \sim i},Z^{i-1};X_{2i},Z_i|X_{1i},S_0) \notag\\
&\phantom{w} = I(S_1,X_{1 \sim i},Z^{i-1};X_{2i},Z_i|X_{1i},S_0) \notag\\
&\phantom{www}+I(M_1;X_{2i},Z_i|X_1^n,S_1,Z^{i-1},S_0) \notag\\
&\phantom{w}\stackrel{(a)}= 0+0 = 0,
\end{align}
where in (a), the first term is zero since $(S_0,S_1)$ is independent of $(X_1^n,X_2^n,Z^n)$ and $(X_{1i},X_{2i},Z_i),i=1,\dots,n,$ are jointly i.i.d., while the second term is zero because of the Markov chain $M_1 \to (X_1^n,S_0,S_1) \to (X_2^n,Z^n)$. In a similar fashion, we can show that $U_{2i} \to (X_{2i},U_{0i}) \to (X_{1i},Z_i)$ holds as well. Finally, let us show that $Y_i \to (U_{1i},U_{2i},Z_i) \to (X_{1i},X_{2i},U_{0i})$ is a Markov chain as well.
\begin{align*}
&I(Y_i;X_{1i},X_{2i},U_{0i}|U_{1i},U_{2i},Z_i) \notag\\
& = I(Y_i;X_{1i},X_{2i},S_0|M_{[1:2]},S_{[1:2]},X_{1,i+1}^n,X_{2,i+1}^n,Z^{i-1},Z_i) \notag\\
& \leq I(Y_i,Z_{i+1}^n;X_{1i},X_{2i},S_0|M_{[1:2]},S_{[1:2]},X_{1,i+1}^n,X_{2,i+1}^n,Z^i) \notag\\
& = I(Z_{i+1}^n;X_{1i},X_{2i},S_0|M_{[1:2]},S_{[1:2]},X_{1,i+1}^n,X_{2,i+1}^n,Z^i) \notag\\
&\phantom{ww} +I(Y_i;X_{1i},X_{2i},S_0|M_{[1:2]},S_{[1:2]},X_{1,i+1}^n,X_{2,i+1}^n,Z^n) \notag\\
& \stackrel{(a)}= I(Z_{i+1}^n;X_{1i},X_{2i},S_0|M_{[1:2]},S_{[1:2]},X_{1,i+1}^n,X_{2,i+1}^n,Z^i)\\
& \leq I(Z_{i+1}^n;X_1^i,X_2^i,S_0|M_{[1:2]},S_{[1:2]},X_{1,i+1}^n,X_{2,i+1}^n,Z^i) \notag\\
& \leq I(Z_{i+1}^n;X_1^i,X_2^i,M_{[1:2]},S_0|S_{[1:2]},X_{1,i+1}^n,X_{2,i+1}^n,Z^i) \notag\\
& = I(Z_{i+1}^n;X_1^i,X_2^i,S_0|S_{[1:2]},X_{1,i+1}^n,X_{2,i+1}^n,Z^i) \notag\\
&\phantom{ww} +I(Z_{i+1}^n;M_{[1:2]}|S_{[0:2]},X_1^n,X_2^n,Z^i) \notag\\
& \stackrel{(b)} = 0+0 =0,
\end{align*}
where (a) follows from the Markov chain $Y^n \to (M_{[1:2]},S_{[1:2]},Z^n) \to (S_0,X_1^n,X_2^n)$, and in (b), the first term is zero since $S_{[0:2]}$ is independent of $(X_1^n,X_2^n,Z^n)$ and $(X_{1i},X_{2i},Z_i),i=1,\dots,n,$ are jointly i.i.d., while the second term is zero because of the Markov chain $M_{[1:2]} \to (X_1^n,X_2^n,S_{[0:2]}) \to Z^n$. Note that, for all $t\in[1:n]$, 
\begin{align}
\lVert p_{X_{1t},X_{2t},Z_t,Y_t|T=t}&-q_{X_1,X_2,Z,Y|T=t} \rVert_1\nonumber\\
&\stackrel{(a)}\leq\lVert p_{X_1^n,X_2^n,Z^n,Y^n}-q^{(n)}_{X_1,X_2,Z,Y} \rVert_1\nonumber\\
&\stackrel{(b)}<\epsilon,\nonumber
\end{align}
where $(a)$ follows from \cite[Lemma V.1]{cuff2013distributed} and the fact that $T$ is independent of all other random variables, while $(b)$ follows from \eqref{eqn:total1}. Therefore,
\begin{align*}
\lVert &p_{X_1,X_2,Z,Y|T=t}-q_{X_1,X_2,Z,Y|T=t} \rVert_1\\
&= \lVert p_{X_{1T},X_{2T},Z_T,Y_T|T=t}-q_{X_1,X_2,Z,Y|T=t} \rVert_1\\
&= \lVert p_{X_{1t},X_{2t},Z_t,Y_t|T=t}-q_{X_1,X_2,Z,Y|T=t} \rVert_1\\
&\leq \epsilon.
\end{align*}
\end{proof}

\subsubsection{Proof of Theorem \ref{thm:encsideOB}}\label{proof:thm2}
\begin{proof} 
 Note that by Definitions \ref{def:ach}, if $(R_1,R_2,R_{01},R_{02}) \in {\blue\mathcal{R}_{\textup{MAC-coord}}^{\text{NO-$S_0$}}}$, then we have $(R_1,R_2,0,R_{01},R_{02})$ $\in$ $ {\blue\mathcal{R}_{\textup{MAC-coord}}}$. In other words, an outer bound for the case without shared randomness between the encoders can be obtained by invoking Theorem~\ref{thm:encsideOBsr} with $R_{00}=0$. The proof of the implication would be complete by proving that the resulting outer bound is exactly identical to Theorem~\ref{thm:encsideOB}.

With $R_{00}=0$, Theorem~\ref{thm:encsideOBsr} reduces to the set of $(R_1,R_2,R_{01},R_{02})$ such that
\begin{align*}
R_1 &\geq \max\{I(U_1;X_1|U_0,Z,T),\\
&\hspace{22pt}I(U_1;X_1|U_0,U_2,X_2,Z,T)\}\!\\
R_2 &\geq \max\{I(U_2;X_2|U_0,Z,T),\\
&\hspace{22pt}I(U_2;X_2|U_0,U_1,X_1,Z,T)\}\!\\
R_1+R_2 &\geq I(U_1,U_2;X_1,X_2|U_0,Z,T)\\
R_1+R_{01} &\geq I(U_0,U_1;X_1,X_2,Y|Z,T)\\
R_2+R_{02} &\geq I(U_0,U_2;X_1,X_2,Y|Z,T)\\
R_1+R_2+R_{01}+R_{02} &\geq I(U_0,U_1,U_2;X_1,X_2,Y|Z,T),
\end{align*}
for some p.m.f. 
\begin{align*}
&p(x_1,x_2,z,t,u_0,u_1,u_2,y)=\\
&p(x_1,x_2,z)p(t)p(u_0|t)p(u_1,u_2|x_1,x_2,u_0,t)p(y|u_1,u_2,z,t)
\end{align*}
such that 
\begin{align*}
p(u_1|x_1,x_2,u_0,z,t)& = p(u_1|x_1,u_0,t)\\
p(u_2|x_1,x_2,u_0,z,t) &= p(u_2|x_2,u_0,t)\\
{\blue||p(x_1,x_2,y,z|t)-q(x_1,x_2,y,z)||_1} &{\blue\leq \epsilon \: \textup{for all} \: t.}
\end{align*}
Let us define $U_j^{\prime} \triangleq (U_0,U_j)$ for $j=1,2$. Using the Markov chains $U_0 \to T \to (X_1,X_2,Z)$, $Y \to (U_1,U_2,Z,T) \to (X_1,X_2,U_0)$, $(U_1,U_2) \to (X_1,X_2,U_0,T) \to Z$, $U_1 \to (X_1,U_0,T) \to (X_2,Z)$, and $U_2 \to (X_2,U_0,T) \to (X_1,Z)$, the region can be simplified as the set of $(R_1,R_2,R_{01},R_{02})$ such that
\begin{align*}
R_1 &\geq \max\{I(U_1^{\prime};X_1|Z,T),\\
&\hspace{22pt}I(U_1^{\prime};X_1|U_2^{\prime},X_2,Z,T)\}\!\\
R_2 &\geq \max\{I(U_2^{\prime};X_2|Z,T),\\
&\hspace{22pt}I(U_2^{\prime};X_2|U_1^{\prime},X_1,Z,T)\}\!\\
R_1+R_2 &\geq I(U_1^{\prime},U_2^{\prime};X_1,X_2|Z,T)\\
R_1+R_{01} &\geq I(U_1^{\prime};X_1,X_2,Y|Z,T)\\
R_2+R_{02} &\geq I(U_2^{\prime};X_1,X_2,Y|Z,T)\\
R_1+R_2+R_{01}+R_{02} &\geq I(U_1^{\prime},U_2^{\prime};X_1,X_2,Y|Z,T),
\end{align*}
for some p.m.f.
\begin{align*}
p(x_1,&x_2,z,t,u_1^{\prime},u_2^{\prime},y)=\\
&\hspace{12pt}p(x_1,x_2,z)p(t)p(u_1^{\prime},u_2^{\prime}|x_1,x_2,t)p(y|u_1^{\prime},u_2^{\prime},z,t)
\end{align*}
such that 
\begin{align*}
p(u_1^{\prime}|x_1,x_2,z,t)& = p(u_1^{\prime}|x_1,t)\\
p(u_2^{\prime}|x_1,x_2,z,t) &= p(u_2^{\prime}|x_2,t)\\
||p(x_1,x_2,y,z|t)-q(x_1,x_2,y,z)||_1 &\leq \epsilon \: \textup{for all} \: t.
\end{align*}
\end{proof}

\subsubsection{Converse Proof of Theorem~\ref{thm:indep}}\label{proof:conv-thm3}
\begin{proof}
{\blue We prove the following lemma later.
\begin{lemma} \label{lem:indep}
Consider a p.m.f. $q_{X_1X_2ZY}$ such that the random variables $X_1$ and $X_2$ are conditionally independent given $Z$. Then any rate tuple $(R_1,R_2,R_{00},R_{01},R_{02})$ in $\mathcal{R}_{\textup{MAC-coord}}$ satisfies, {\blue for every $\epsilon \in (0,\frac{1}{4}]$},
\begin{align}
R_1&\geq I(U_1;X_1|Z,T) \label{eq:lem4e1}\\
R_2&\geq I(U_2;X_2|Z,T) \label{eq:lem4e2}\\
R_1+R_{01}&\geq I(U_1;X_1,Y|X_2,Z,T)-g(\epsilon) \label{eq:lem4e3}\\
R_2+R_{02}&\geq I(U_2;X_2,Y|X_1,Z,T)-g(\epsilon) \label{eq:lem4e4}\\
R_1+R_2+R_{01}+R_{02} &\geq I(U_1,U_2;X_1,X_2,Y|Z,T)-g(\epsilon) \label{eq:lem4e5},
\end{align}
with $g(\epsilon)=2\sqrt{\epsilon}\left(H_q(X_1,X_2,Y,Z)+\log\frac{(|\mathcal{X}_1||\mathcal{X}_2||\mathcal{Y}||\mathcal{Z}|)}{\epsilon}\right)$ (which tends to 0 as $\epsilon \to 0$), for some p.m.f. 
\begin{align}\label{lempmfstructure1}
&p(x_1,x_2,z,t,u_1,u_2,y)=\nonumber\\
&p(z)p(x_1|z)p(x_2|z)p(t)p(u_1|x_1,t)p(u_2|x_2,t)p(y|u_1,u_2,z,t)
\end{align}
such that $||p(x_1,x_2,y,z|t)-q(x_1,x_2,y,z)||_1 \leq \epsilon \: \textup{for all} \: t$.
\end{lemma}
}
{\blue Notice that Lemma~\ref{lem:indep} gives an \emph{almost} matching outer bound to the inner bound in Theorem~\ref{thm:encsideIB} for $X_1$ and $X_2$ conditionally independent given $Z$ (see \eqref{eqn:thm3acvproof1}), in the sense that continuity at $\epsilon=0$ is the only difference between them. When the shared randomness rate $R_{02}$ is sufficiently large, we can in fact prove that such a continuity argument holds. We argue this below. 

First we prove the following lemma concerning the cardinality bounds on the auxiliary random variables $U_1$ and $U_2$ under sufficiently large shared randomness rate $R_{02}$. Let $S_{\epsilon}$ denote the set of rate tuples $(R_1,R_2,R_{01})$ that satisfy, for every $\epsilon>0$, 
\begin{align*}
R_1&\geq I(U_1;X_1|Z,T)\\
R_2&\geq I(U_2;X_2|Z,T)\\
R_1+R_{01}&\geq I(U_1;X_1,Y|X_2,Z,T)-g(\epsilon)\\
\end{align*}
with $g(\epsilon)=2\sqrt{\epsilon}\left(H_q(X_1,X_2,Y,Z)+\log\frac{(|\mathcal{X}_1||\mathcal{X}_2||\mathcal{Y}||\mathcal{Z}|)}{\epsilon}\right)$ (which tends to 0 as $\epsilon \to 0$), for some p.m.f. 
\begin{align}\label{pmfstructure1}
&p(x_1,x_2,z,t,u_1,u_2,y)=\nonumber\\
&p(z)p(x_1|z)p(x_2|z)p(t)p(u_1|x_1,t)p(u_2|x_2,t)p(y|u_1,u_2,z,t)
\end{align}
such that $||p(x_1,x_2,y,z|t)-q(x_1,x_2,y,z)||_1 \leq \epsilon \: \textup{for all} \: t$.
\begin{lemma} \label{lem:cardbnd}
The size of the auxiliary random variable alphabets in $S_{\epsilon}$ can be restricted to:
\begin{align*}
|\mathcal{U}_1| &\leq |\mathcal{X}_1||\mathcal{X}_2||\mathcal{Y}||\mathcal{Z}|,\\
|\mathcal{U}_2| &\leq |\mathcal{U}_1||\mathcal{X}_1||\mathcal{X}_2||\mathcal{Y}||\mathcal{Z}|,\\
|\mathcal{T}| &\leq 3.
\end{align*}
\end{lemma}
\begin{proof}
See Appendix~\ref{app:cardbnd}.
\end{proof}

From Lemmas \ref{lem:indep} and \ref{lem:cardbnd}, the converse for Theorem~\ref{thm:indep} follows using the continuity of total variation distance and mutual information in the probability simplex along the same lines as \cite[Lemma VI.5]{cuff2013distributed},\cite[Lemma 6]{YassaeeGA15}. It remains to prove Lemma~\ref{lem:indep}.
}
\begin{proof}[Proof of Lemma~\ref{lem:indep}]
Consider a code that induces a joint distribution on $(X_1^n,X_2^n,Z^n,Y^n)$ such that
\begin{align}
\lVert p_{X_1^n,X_2^n,Z^n,Y^n}-q^{(n)}_{X_1X_2ZY}\rVert_1< \epsilon\label{eqn:total1}.
\end{align}

Let us first prove the lower bound on $R_j$ for $j \in \{1,2\}$ in \eqref{eq:lem4e1}--\eqref{eq:lem4e2}.
\begin{align}
nR_j &\geq H(M_j)\notag\\
& \geq H(M_j|S_j,Z^n) \notag\\
&\geq I(M_j;X_j^n|S_j,Z^n)\notag\\
& \stackrel{(a)}= I(M_j,S_j;X_j^n|Z^n) \notag\\
&= \sum_{i=1}^n I(M_j,S_j;X_{ji}|X_{j,i+1}^n,Z^n) \notag\\
&\stackrel{(b)}= \sum_{i=1}^n I(M_j,S_j,Z_{\sim i},X_{j,i+1}^n;X_{ji}|Z_i) \notag\\
&\geq \sum_{i=1}^n I(M_j,S_j,X_{j,i+1}^n,Z^{i-1};X_{ji}|Z_i) \notag\\
&\stackrel{(c)}\geq \sum_{i=1}^n I(U_{ji};X_{ji}|Z_i) \notag\\
&\stackrel{(d)}= nI(U_{jT};X_{jT}|Z_T,T) \notag\\
&\stackrel{(e)}= nI(U_j;X_j|Z,T), \label{eq:refe1i}
\end{align}
where (a) follows since $S_j$ is independent of $(X_j^n,Z^n)$, (b) follows since $(X_{ji},Z_i),i=1,\dots,n,$ are jointly i.i.d., (c) follows by defining $U_{1i} = (M_1,S_1,X_{1,i+1}^n,Z^{i-1})$ and $U_{2i} = (M_2,S_2,X_{2,i+1}^n)$, (d) follows by introducing a uniform time-sharing random variable $T \in [1:n]$ that is independent of everything else, while (e) follows by defining $U_1:=U_{1T}$, $U_2:=U_{2T}$, $X_1:=X_{1T}$, $X_2:=X_{2T}$, $Y:=Y_T$ and $Z:=Z_T$.

We next derive the lower bound on $(R_1+R_{01})$ in \eqref{eq:lem4e3}.
\begin{align}
&n(R_1+R_{01}) \notag\\ 
&\geq H(M_1,S_1)\notag\\
& \geq H(M_1,S_1|X_2^n,Z^n) \notag\\
& \geq I(M_1,S_1;X_1^n,Y^n|X_2^n,Z^n) \notag\\
&= \sum_{i=1}^n I(M_1,S_1;X_{1i},Y_i|X_{1,i+1}^n,Y_{i+1}^n,X_2^n,Z^n) \notag\\
&= \sum_{i=1}^n I(M_1,S_1,X_{1,i+1}^n,Y_{i+1}^n,X_{2 \sim i},Z_{\sim i};X_{1i},Y_i|X_{2i},Z_i) \notag\\
&\phantom{www} - \sum_{i=1}^n I(X_{1,i+1}^n,Y_{i+1}^n,X_{2 \sim i},Z_{\sim i};X_{1i},Y_i|X_{2i},Z_i) \notag\\
&\stackrel{(a)}\geq \sum_{i=1}^n I(M_1,S_1,X_{1,i+1}^n,Z^{i-1};X_{1i},Y_i|X_{2i},Z_i)-ng(\epsilon) \notag\\
&= \sum_{i=1}^n I(U_{1i};X_{1i},Y_i|X_{2i},Z_i)-ng(\epsilon) \notag\\
&= nI(U_{1T};X_{1T},Y_T|X_{2T},Z_T,T) -ng(\epsilon) \notag\\
&= nI(U_1;X_1,Y|X_2,Z,T)-ng(\epsilon), \label{eq:refe4i}
\end{align}
where (a) follows since
\begin{align}
\sum_{i=1}^n &I(X_{1,i+1}^n,Y_{i+1}^n,X_{2 \sim i},Z_{\sim i};X_{1i},Y_i|X_{2i},Z_i) \notag\\
& \leq \sum_{i=1}^n I(X_{1 \sim i},X_{2 \sim i},Y_{\sim i},Z_{\sim i};X_{1i},X_{2i},Y_i,Z_i)\notag\\
& \leq ng(\epsilon)\notag
\end{align}
by \eqref{eqn:total1} and Lemma \ref{lem:c1}. The bound $n(R_2+R_{02}) \geq I(U_2;X_2,Y|X_1,Z,T)$ follows in a similar manner.

%\begin{comment}
For the lower bound on $(R_1+R_2+R_{01}+R_{02})$ in \eqref{eq:lem4e5}, we proceed as follows.
\begin{align}
&n(R_1+R_2+R_{01}+R_{02}) \notag\\
& \geq H(M_{[1:2]},S_{[1:2]})\notag\\
&\geq H(M_{[1:2]},S_{[1:2]}|Z^n) \notag\\
& \geq I(M_{[1:2]},S_{[1:2]};X_1^n,X_2^n,Y^n|Z^n) \notag\\
&= \sum_{i=1}^n I(M_{[1:2]},S_{[1:2]};X_{1i},X_{2i},Y_i|X_{1,i+1}^n,X_{2,i+1}^n,Y_{i+1}^n,Z^n) \notag\\
&= \sum_{i=1}^n I(M_{[1:2]},\!S_{[1:2]},\!X_{1,i\!+\!1}^n,\!X_{2,i\!+\!1}^n,\!Y_{i\!+\!1}^n,\!Z_{\sim i};X_{1i},\!X_{2i},\!Y_i|Z_i) \notag\\
&\phantom{wwww} - \sum_{i=1}^n I(X_{1,i+1}^n,X_{2,i+1}^n,Y_{i+1}^n,Z_{\sim i};X_{1i},X_{2i},Y_i|Z_i) \notag\\
&\stackrel{(a)}\geq \sum_{i=1}^n I(M_{[1:2]},\!S_{[1:2]},\!X_{1,i\!+\!1}^n,\!X_{2,i\!+\!1}^n,\!Z^{i-1};X_{1i},\!X_{2i},\!Y_i|Z_i) \notag\\
&\phantom{wwww}-ng(\epsilon) \notag\\
&= \sum_{i=1}^n I(U_{1i},U_{2i};X_{1i},X_{2i},Y_i|Z_i) - ng(\epsilon) \notag\\
&= nI(U_{1T},U_{2T};X_{1T},X_{2T},Y_T|Z_T,T)-ng(\epsilon) \notag\\
& = nI(U_1,U_2;X_1,X_2,Y|Z,T)-ng(\epsilon), \label{eq:refe5i}
\end{align}
where (a) follows since
\begin{align}
\sum_{i=1}^n &I(X_{1,i+1}^n,X_{2,i+1}^n,Y_{i+1}^n,Z_{\sim i};X_{1i},X_{2i},Y_i|Z_i) \notag\\
 & \leq \sum_{i=1}^n I(X_{1 \sim i},X_{2 \sim i},Y_{\sim i},Z_{\sim i};X_{1i},X_{2i},Y_i,Z_i)\notag\\
 & \leq ng(\epsilon)\notag
\end{align}
by \eqref{eqn:total1} and Lemma \ref{lem:c1}.
%\end{comment}

We now prove the Markov chains $U_{1i} \to X_{1i} \to (X_{2i},Z_i)$, $U_{2i} \to X_{2i} \to (U_{1i},X_{1i},Z_i)$, $(U_{1i},U_{2i})-(X_{1i},X_{2i})-Z_i$, and $Y_i \to (U_{1i},U_{2i},Z_i) \to (X_{1i},X_{2i})$. Note that this implies that the joint p.m.f. satisfies \eqref{pmfstructure1}.
%\begin{multline*}
%p(x_1,x_2,z,t,u_1,u_2,y)=\\
%p(x_1)p(x_2)p(z)p(t)p(u_1|x_1,t)p(u_2|x_2,t)p(y|u_1,u_2,z,t).
%\end{multline*}
Recall the auxiliary random variable identifications
\begin{align}
U_{1i} &= (M_1,S_1,X_{1,i+1}^n,Z^{i-1}), \\
U_{2i} &= (M_2,S_2,X_{2,i+1}^n).
\end{align}
Let us first show that $(U_{1i},U_{2i}) \to (X_{1i},X_{2i}) \to Z_i$.
\begin{align}
&I(U_{1i},U_{2i};Z_i|X_{1i},X_{2i}) \notag\\
&\phantom{w} = I(M_{[1:2]},S_{[1:2]},X_{1,i+1}^n,X_{2,i+1}^n,Z^{i-1};Z_i|X_{1i},X_{2i}) \notag\\
&\phantom{w} \leq I(M_{[1:2]},S_{[1:2]},X_{1 \sim i},X_{2 \sim i},Z^{i-1};Z_i|X_{1i},X_{2i}) \notag\\
&\phantom{w} = I(S_{[1:2]},X_{1 \sim i},X_{2 \sim i},Z^{i-1};Z_i|X_{1i},X_{2i}) \notag\\
&\phantom{www}+I(M_{[1:2]};Z_i|X_1^n,X_2^n,S_{[1:2]},Z^{i-1}) \notag\\
&\phantom{w}\stackrel{(a)}= 0+0 = 0,
\end{align}
where in (a), the first term is zero since $S_{[1:2]}$ is independent of $(X_1^n,X_2^n,Z^n)$ and $(X_{1i},X_{2i},Z_i),i=1,\dots,n,$ are jointly i.i.d., while the second term is zero because of the Markov chain $M_{[1:2]} \to (X_1^n,X_2^n,S_{[1:2]}) \to Z^n$.
We now show that $U_{1i} \to X_{1i} \to (X_{2i},Z_i)$ is a Markov chain.
\begin{align}
&I(U_{1i};X_{2i},Z_i|X_{1i}) \notag\\
&\phantom{w} = I(M_1,S_1,X_{1,i+1}^n,Z^{i-1};X_{2i},Z_i|X_{1i}) \notag\\
&\phantom{w} \leq I(M_1,S_1,X_{1 \sim i},Z^{i-1};X_{2i},Z_i|X_{1i}) \notag\\
&\phantom{w} = I(S_1,X_{1 \sim i},Z^{i-1};X_{2i},Z_i|X_{1i}) \notag\\
&\phantom{www}+I(M_1;X_{2i},Z_i|X_1^n,S_1,Z^{i-1}) \notag\\
&\phantom{w}\stackrel{(a)}= 0+0 = 0,
\end{align}
where in (a), the first term is zero since $S_1$ is independent of $(X_1^n,X_2^n,Z^n)$ and $(X_{1i},X_{2i},Z_i),i=1,\dots,n,$ are jointly i.i.d., while the second term is zero because of the Markov chain $M_1 \to (X_1^n,S_1) \to (X_2^n,Z^n)$. Next, we show that $Y_i \to (U_{1i},U_{2i},Z_i) \to (X_{1i},X_{2i})$ is a Markov chain.
\begin{align*}
&I(Y_i;X_{1i},X_{2i}|U_{1i},U_{2i},Z_i) \notag\\
&\phantom{w} = I(Y_i;X_{1i},X_{2i}|M_{[1:2]},S_{[1:2]},X_{1,i+1}^n,X_{2,i+1}^n,Z^{i-1},Z_i) \notag\\
&\phantom{w} \leq I(Y_i,Z_{i+1}^n;X_{1i},X_{2i}|M_{[1:2]},S_{[1:2]},X_{1,i+1}^n,X_{2,i+1}^n,Z^i) \notag\\
&\phantom{w} = I(Z_{i+1}^n;X_{1i},X_{2i}|M_{[1:2]},S_{[1:2]},X_{1,i+1}^n,X_{2,i+1}^n,Z^i) \notag\\
&\phantom{www} +I(Y_i;X_{1i},X_{2i}|M_{[1:2]},S_{[1:2]},X_{1,i+1}^n,X_{2,i+1}^n,Z^n) \notag\\
&\phantom{w} \stackrel{(a)}= I(Z_{i+1}^n;X_{1i},X_{2i}|M_{[1:2]},S_{[1:2]},X_{1,i+1}^n,X_{2,i+1}^n,Z^i)\\
&\phantom{w} \leq I(Z_{i+1}^n;X_1^i,X_2^i|M_{[1:2]},S_{[1:2]},X_{1,i+1}^n,X_{2,i+1}^n,Z^i) \notag\\
&\phantom{w} \leq I(Z_{i+1}^n;X_1^i,X_2^i,M_{[1:2]}|S_{[1:2]},X_{1,i+1}^n,X_{2,i+1}^n,Z^i) \notag\\
&\phantom{w} = I(Z_{i+1}^n;X_1^i,X_2^i|S_{[1:2]},X_{1,i+1}^n,X_{2,i+1}^n,Z^i) \notag\\
&\phantom{www} +I(Z_{i+1}^n;M_{[1:2]}|S_{[1:2]},X_1^n,X_2^n,Z^i) \notag\\
%&\phantom{w} \stackrel{(a)}= I(Z_{i+1}^n;M_{[1:2]}|S_{[1:2]},X_1^n,X_2^n,Z^i) \notag\\
%&\phantom{w} = I(Z_{i+1}^n;M_1|S_{[1:2]},X_1^n,X_2^n,Z^i) \notag\\
%&\phantom{www} + I(Z_{i+1}^n;M_2|M_1,S_{[1:2]},X_1^n,X_2^n,Z^i) \notag\\
&\phantom{w} \stackrel{(a)} = 0+0 =0,
\end{align*}
where (a) follows from the Markov chain $Y^n \to (M_{[1:2]},S_{[1:2]},Z^n) \to (X_1^n,X_2^n)$ and in (b), the first term is zero since $S_{[1:2]}$ is independent of $(X_1^n,X_2^n,Z^n)$ and $(X_{1i},X_{2i},Z_i),i=1,\dots,n,$ are jointly i.i.d., while the second term is zero because of the Markov chain $M_{[1:2]} \to (X_1^n,X_2^n,S_{[1:2]}) \to Z^n$.

It remains to prove the Markov chain $U_{2i} \to X_{2i} \to (U_{1i},X_{1i},Z_i)$. Consider the following chain of inequalities.
\begin{align}
&I(U_{2i};U_{1i},X_{1i},Z_i|X_{2i}) \notag\\
&\phantom{w}= I(M_2,S_2,X_{2,i+1}^n;M_1,S_1,X_{1,i}^n,Z^{i}|X_{2i}) \notag\\
&\phantom{w}= I(S_2,X_{2,i+1}^n;M_1,S_1,X_{1,i}^n,Z^{i}|X_{2i}) \notag\\
&\phantom{www}+I(M_2;M_1,S_1,X_{1,i}^n,Z^{i}|S_2,X_{2,i}^n) \notag\\
&\phantom{w}= I(S_2,X_{2,i+1}^n;S_1,X_{1,i}^n,Z^{i}|X_{2i}) \notag\\
&\phantom{www}+I(S_2,X_{2,i+1}^n;M_1|S_1,X_{1,i}^n,Z^i,X_{2i}) \notag\\
&\phantom{www}+I(M_2;S_1,X_{1,i}^n,Z^{i}|S_2,X_{2,i}^n) \notag\\
&\phantom{www}+I(M_2;M_1|S_1,S_2,X_{1,i}^n,X_{2,i}^n,Z^i) \notag\\
&\phantom{w}\stackrel{(a)} \leq 0+I(S_2,X_{2,i+1}^n;M_1,X_{1}^{i-1},Z^{i-1}|S_1,X_{1,i}^n,Z^i,X_{2i}) \notag\\
&\phantom{www}+I(M_2,X_2^{i-1};S_1,X_{1,i}^n,Z^{i}|S_2,X_{2,i}^n) \notag\\
&\phantom{www}+I(M_2,X_2^{i-1};M_1|S_1,S_2,X_{1,i}^n,X_{2,i}^n,Z^i) \notag\\
&\phantom{w}= I(S_2,X_{2,i+1}^n;X_{1}^{i-1},Z^{i-1}|S_1,X_{1,i}^n,Z^i,X_{2i}) \notag\\
&\phantom{www}+I(S_2,X_{2,i+1}^n;M_1|S_1,X_1^n,Z^n,X_{2i}) \notag\\
&\phantom{www}+I(X_2^{i-1};S_1,X_{1,i}^n,Z^{i}|S_2,X_{2,i}^n) \notag\\
&\phantom{www}+I(M_2;S_1,X_{1,i}^n,Z^{i}|S_2,X_2^n) \notag\\
&\phantom{www}+I(X_2^{i-1};M_1|S_1,S_2,X_{1,i}^n,X_{2,i}^n,Z^i) \notag\\
&\phantom{www}+I(M_2;M_1|S_1,S_2,X_{1,i}^n,X_2^n,Z^i) \notag\\
&\phantom{w} \stackrel{(b)}= 0\!+0\!+0\!+0\!+I(X_2^{i-1};M_1|S_1,S_2,X_{1,i}^n,X_{2,i}^n,Z^i)\!+0 \notag\\
&\phantom{w} \leq I(X_2^{i-1};M_1,X_1^{i-1}|S_1,S_2,X_{1,i}^n,X_{2,i}^n,Z^i) \notag\\
&\phantom{w} = I(X_2^{i-1};X_1^{i-1}|S_1,S_2,X_{1,i}^n,X_{2,i}^n,Z^i) \notag\\
&\phantom{www}+I(X_2^{i-1};M_1|S_1,S_2,X_1^n,X_{2,i}^n,Z^i) \notag\\
&\phantom{w} \stackrel{(c)}= I(X_2^{i-1};X_1^{i-1}|Z^{i-1}) \notag\\
&\phantom{www}+I(X_2^{i-1};M_1|S_1,S_2,X_1^n,X_{2,i}^n,Z^i) \notag\\
&\phantom{w} \stackrel{(d)}= 0+0 =0,
\end{align}
where the fact that the first term in (a) and the first and third terms in (b) are zeros, as well as (c), follow since $S_{[1:2]}$ is independent of $(X_1^n,X_2^n,Z^n)$ and $(X_{1i},X_{2i},Z_i),i=1,\dots,n,$ are jointly i.i.d. The second, fourth and sixth terms in (b) and the second term in (d) are zeros because of the Markov chains $M_1 \to (X_1^n,S_1) \to (S_2,X_2^n,Z^n)$ and $M_2 \to (X_2^n,S_2) \to (S_1,M_1,X_1^n,Z^n)$. The first term in (d) is zero because $(X_{1i},X_{2i},Z_i),i=1,\dots,n,$ are jointly i.i.d. with $q_{X_1X_2,Z}$ and $I(X_1;X_2|Z)=0$. Notice that we also have $\lVert p_{X_1,X_2,Z,Y|T=t}-q_{X_1,X_2,Z,Y|T=t} \rVert_1\leq \epsilon$ along the same lines as in the proof of Theorem~\ref{thm:encsideOBsr}. 
\end{proof}
\end{proof}

\subsubsection{Proof of Theorem~\ref{Theorem:outbndindp}}\label{proof:thm6}
\begin{proof}
{\blue We prove the following lemma later. 
\begin{lemma}\label{lem:outbndindp}
Consider a p.m.f. $q_{X_1X_2ZY}$ such that the random variables $X_1$ and $X_2$ are conditionally independent given $Z$. Then any rate tuple $(R_1,R_2,R_{00},R_{01},R_{02})$ in $\mathcal{R}_{\textup{MAC-coord}}$ satisfies, {\blue for every $\epsilon \in (0,\frac{1}{4}]$},
\begin{align*}
R_1&\geq I(U_0,U_1;X_1|Z,T)\nonumber\\
R_2&\geq I(U_0,U_2;X_2|Z,T)\nonumber\\
R_1+R_2&\geq I(U_0,U_1,U_2;X_1,X_2|Z,T)\nonumber\\
R_1+R_{01}&\geq I(U_1;X_1,Y|X_2,Z,T){\blue-g(\epsilon)}\nonumber\\
R_2+R_{02}&\geq I(U_2;X_2,Y|X_1,Z,T){\blue-g(\epsilon)}\nonumber\\
R_{00}+R_1+R_{01}&\geq I(U_0,U_1;X_1,Y|X_2,Z,T){\blue-g(\epsilon)}\nonumber\\
R_{00}+R_2+R_{02}&\geq I(U_0,U_2;X_2,Y|X_1,Z,T){\blue-g(\epsilon)}\nonumber\\
R_1+R_2+R_{01}+R_{02} &\geq I(U_1,U_2;X_1,X_2,Y|Z,T) {\blue-g(\epsilon)}\nonumber\\
%R_{00}+R_1+R_2+R_{01}+&R_{02}\\
%&\geq I(U_0,U_1,U_2;X_1,X_2,Y|Z,T)\nonumber \!
R_{00}+R_1+R_2+&R_{01}+R_{02}\nonumber\\
&\hspace{-0.25cm}\geq I(U_0,U_1,U_2;X_1,X_2,Y|Z,T){\blue-g(\epsilon)}\nonumber,
\end{align*}
with $g(\epsilon)=2\sqrt{\epsilon}\left(H_q(X_1,X_2,Y,Z)+\log\frac{(|\mathcal{X}_1||\mathcal{X}_2||\mathcal{Y}||\mathcal{Z}|)}{\epsilon}\right)$ (which tends to 0 as $\epsilon \to 0$), for some p.m.f.
\begin{align*}
p(&x_1,x_2,z,t,u_0,u_1,u_2,y)=p(z)p(x_1|z)p(x_2|z)p(t)\\ 
&\hspace{12pt}\times p(u_0|t)p(u_1|x_1,u_0,t)p(u_2|x_2,u_0,t)p(y|u_1,u_2,z,t)
\end{align*}
such that {\blue$||p(x_1,x_2,y,z|t)-q(x_1,x_2,y,z)||_1 \leq \epsilon \: \textup{for all} \: t.$} 
\end{lemma}
}

{\blue Notice that Lemma~\ref{lem:outbndindp} gives an epsilon rate region, whose continuity at $\epsilon=0$ is unknown. When the shared randomness rates $(R_{00},R_{01},R_{02})$ are sufficiently large, we can in fact prove that such a continuity argument holds. We argue this below. 

First we prove the following lemma concerning the cardinality bounds on the auxiliary random variables $U_0$, $U_1$ and $U_2$ under sufficiently large shared randomness rates $(R_{00},R_{01},R_{02})$. Let $S'_{\epsilon}$ denote the set of rate pairs $(R_1,R_2)$ that satisfy, for every $\epsilon>0$, 
\begin{align*}
R_1&\geq I(U_0,U_1;X_1|Z,T)\\
R_2&\geq I(U_0,U_2;X_2|Z,T)\\
R_1+R_2&\geq I(U_0,U_1,U_2;X_1,X_2|Z,T)\\
\end{align*}
for some p.m.f. 
\begin{align*}
p(&x_1,x_2,z,t,u_0,u_1,u_2,y)=p(z)p(x_1|z)p(x_2|z)p(t)\\ 
&\hspace{12pt}\times p(u_0|t)p(u_1|x_1,u_0,t)p(u_2|x_2,u_0,t)p(y|u_1,u_2,z,t)
\end{align*}
such that $||p(x_1,x_2,y,z|t)-q(x_1,x_2,y,z)||_1 \leq \epsilon \: \textup{for all} \: t$.
\begin{lemma} \label{lem:cardbndU0}
The size of the auxiliary random variable alphabets in $S'_{\epsilon}$ can be restricted to:
\begin{align*}
|\mathcal{U}_0| &\leq |\mathcal{X}_1||\mathcal{X}_2||\mathcal{Y}||\mathcal{Z}|,\\
|\mathcal{U}_1| &\leq |\mathcal{U}_0||\mathcal{X}_1||\mathcal{X}_2||\mathcal{Y}||\mathcal{Z}|,\\
|\mathcal{U}_2| &\leq |\mathcal{U}_0||\mathcal{X}_1||\mathcal{X}_2||\mathcal{Y}||\mathcal{Z}|,\\
|\mathcal{T}| &\leq 3.
\end{align*}
\end{lemma}
\begin{proof}
See Appendix~\ref{app:cardbndU0}.
\end{proof}

From Lemmas \ref{lem:outbndindp} and \ref{lem:cardbndU0}, the outer bound in Theorem~\ref{Theorem:outbndindp} follows using the continuity of total variation distance and mutual information in the probability simplex along the same lines as \cite[Lemma VI.5]{cuff2013distributed},\cite[Lemma 6]{YassaeeGA15}. It remains to prove Lemma~\ref{lem:outbndindp}.
}
\begin{proof}[Proof of Lemma~\ref{lem:outbndindp}]
With the same choice of auxiliary random variables as in the proof of Theorem~\ref{thm:encsideOBsr}, {\blue i.e., $U_{0i} = S_0$, $U_{1i} = (M_1,S_1,X_{1,i+1}^n,Z^{i-1})$, and $U_{2i} = (M_2,S_2,X_{2,i+1}^n)$}, we will show that the Markov chain $U_2 \to (X_2,U_0,T) \to (U_1,X_1,Z)$ holds when $X_1 \to Z \to X_2$. Then in addition to the Markov chains $U_0 \to T \to (X_1,X_2,Z)$, $U_1 \to (X_1,U_0,T) \to (X_2,Z)$, $(U_1,U_2) \to (X_1,X_2,U_0,T) \to Z$ and $Y \to (U_1,U_2,Z,T) \to (X_1,X_2,U_0)$, we note the following simplifications to the rate constraints in Theorem \ref{thm:encsideOBsr}:
\begin{align*}
R_1 &\geq \max\{I(U_1;X_1|U_0,Z,T),I(U_1;X_1|U_0,U_2,X_2,Z,T)\} \notag\\
&= \max\{I(U_1;X_1|U_0,Z,T),I(U_1;X_1|U_0,Z,T)\} \notag\\
&= I(U_0,U_1;X_1|Z,T), \\
R_2 &\geq \max\{I(U_2;X_2|U_0,Z,T),I(U_2;X_2|U_0,U_1,X_1,Z,T)\} \notag\\
&= \max\{I(U_2;X_2|U_0,Z,T),I(U_2;X_2|U_0,Z,T)\} \notag\\
& = I(U_0,U_2;X_2|Z,T),
\end{align*}
\begin{align*}
R_1+R_2 &\geq I(U_1,U_2;X_1,X_2|U_0,Z,T) \notag\\
&= I(U_0,U_1,U_2;X_1,X_2|Z,T),
\end{align*}
\begin{align*}
R_1+R_{01} &\geq I(U_1;X_1,X_2,Y|Z,T) \notag\\
&= I(U_1;X_1,Y|X_2,Z,T),\\
R_2+R_{02} &\geq I(U_2;X_1,X_2,Y|Z,T) \notag\\
&= I(U_2;X_2,Y|X_1,Z,T),
\end{align*}
\begin{align*}
R_{00}+R_1+R_{01} &\geq I(U_0,U_1;X_1,X_2,Y|Z,T) \notag\\
&= I(U_0,U_1;X_1,Y|X_2,Z,T)\\
R_{00}+R_2+R_{02} &\geq I(U_0,U_2;X_1,X_2,Y|Z,T) \notag\\
&= I(U_0,U_2;X_2,Y|X_1,Z,T),
\end{align*}
\begin{align*}
R_1+R_2+R_{01}+R_{02} &\geq I(U_1,U_2;X_1,X_2,Y|Z,T),\\
R_{00}+R_1+R_2+R_{01}+&R_{02}\\
 &\geq I(U_0,U_1,U_2;X_1,X_2,Y|Z,T)\!.
\end{align*}
It remains to prove the Markov chain $U_{2i} \to (X_{2i},U_{0i}) \to (U_{1i},X_{1i},Z_i)$. Recall the choice of auxiliary random variables.
\begin{align*}
U_{0i} &= S_0, \\
U_{1i} &= (M_1,S_1,X_{1,i+1}^n,Z^{i-1}), \\
U_{2i} &= (M_2,S_2,X_{2,i+1}^n).
\end{align*}
Consider the following chain of inequalities.
\begin{align}
&I(U_{2i};U_{1i},X_{1i},Z_i|X_{2i},U_{0i}) \notag\\
&= I(M_2,S_2,X_{2,i+1}^n;M_1,S_1,X_{1,i}^n,Z^{i}|X_{2i},S_0) \notag\\
&= I(S_2,X_{2,i+1}^n;M_1,S_1,X_{1,i}^n,Z^{i}|X_{2i},S_0) \notag\\
&\phantom{www}+I(M_2;M_1,S_1,X_{1,i}^n,Z^{i}|S_0,S_2,X_{2,i}^n) \notag\\
&= I(S_2,X_{2,i+1}^n;S_1,X_{1,i}^n,Z^{i}|X_{2i},S_0) \notag\\
&\phantom{www}+I(S_2,X_{2,i+1}^n;M_1|S_1,X_{1,i}^n,Z^i,X_{2i},S_0) \notag\\
&\phantom{www}+I(M_2;S_1,X_{1,i}^n,Z^{i}|S_0,S_2,X_{2,i}^n) \notag\\
&\phantom{www}+I(M_2;M_1|S_0,S_1,S_2,X_{1,i}^n,X_{2,i}^n,Z^i) \notag\\
&\stackrel{(a)} \leq 0+I(S_2,X_{2,i+1}^n;M_1,X_{1}^{i-1},Z^{i-1}|S_0,S_1,X_{1,i}^n,Z^i,X_{2i}) \notag\\
&\phantom{www}+I(M_2,X_2^{i-1};S_1,X_{1,i}^n,Z^{i}|S_0,S_2,X_{2,i}^n) \notag\\
&\phantom{www}+I(M_2,X_2^{i-1};M_1|S_0,S_1,S_2,X_{1,i}^n,X_{2,i}^n,Z^i) \notag\\
&= I(S_2,X_{2,i+1}^n;X_{1}^{i-1},Z^{i-1}|S_0,S_1,X_{1,i}^n,Z^i,X_{2i}) \notag\\
&\phantom{www}+I(S_2,X_{2,i+1}^n;M_1|S_0,S_1,X_1^n,Z^n,X_{2i}) \notag\\
&\phantom{www}+I(X_2^{i-1};S_1,X_{1,i}^n,Z^{i}|S_0,S_2,X_{2,i}^n) \notag\\
&\phantom{www}+I(M_2;S_1,X_{1,i}^n,Z^{i}|S_0,S_2,X_2^n) \notag\\
&\phantom{www}+I(X_2^{i-1};M_1|S_0,S_1,S_2,X_{1,i}^n,X_{2,i}^n,Z^i) \notag\\
&\phantom{www}+I(M_2;M_1|S_0,S_1,S_2,X_{1,i}^n,X_2^n,Z^i) \notag\\
& \stackrel{(b)}= 0\!+0\!+0\!+0\!\notag\\
&\phantom{www}+\!I(X_2^{i-1};M_1|S_0,S_1,S_2,X_{1,i}^n,X_{2,i}^n,Z^i)\!+\!0 \notag\\
& \leq I(X_2^{i-1};M_1,X_1^{i-1}|S_0,S_1,S_2,X_{1,i}^n,X_{2,i}^n,Z^i) \notag\\
& = I(X_2^{i-1};X_1^{i-1}|S_0,S_1,S_2,X_{1,i}^n,X_{2,i}^n,Z^i) \notag\\
&\phantom{www}+I(X_2^{i-1};M_1|S_0,S_1,S_2,X_1^n,X_{2,i}^n,Z^i) \notag\\
& \stackrel{(c)}= I(X_2^{i-1};X_1^{i-1}|Z^{i-1}) \notag\\
&\phantom{www}+I(X_2^{i-1};M_1|S_0,S_1,S_2,X_1^n,X_{2,i}^n,Z^i) \notag\\
& \stackrel{(d)}= 0+0 =0,
\end{align}
where the first term in (a), the first and third terms in (b) are zeros and (c) follows since $S_{[0:2]}$ is independent of $(X_1^n,X_2^n,Z^n)$ and $(X_{1i},X_{2i},Z_i),i=1,\dots,n,$ are jointly i.i.d. The second, fourth and sixth terms in (b) and the second term in (d) are zeros because of the Markov chains $M_1 \to (X_1^n,S_0,S_1) \to (S_2,X_2^n,Z^n)$ and $M_2 \to (X_2^n,S_0,S_2) \to (S_1,M_1,X_1^n,Z^n)$. The first term in (d) is zero because $(X_{1i},X_{2i},Z_i),i=1,\dots,n,$ are jointly i.i.d. with $q_{X_1X_2,Z}$ and $I(X_1;X_2|Z)=0$.
\end{proof}
\end{proof}

{\blue \section{Acknowledgements}
We thank the Associate Editor and the anonymous reviewers for their insightful comments on the paper. In particular, we would like to thank them for making us aware of \cite[Section IV]{HaddadpourYAG13}. In addition, we thank Sandeep Pradhan for alerting us of an error in an earlier version of the paper.}

\appendices
\section{Proof of Theorem~\ref{thm:encsideIB}} \label{app:pfThm1}
The proof employs the Output Statistics of Random Binning (OSRB) framework developed by Yassaee et al.~\cite{yassaee2014achievability}. In the sequel, we use capital letters (like $P_{X}$) to denote random p.m.f.'s (see, e.g., \cite{cuff2013distributed,yassaee2014achievability}) and lower-case letters (like $p_X$) to denote non-random p.m.f.'s. We use $p_{\mathcal{A}}^{\text{U}}$ to denote the uniform distribution over the set $\mathcal{A}$. The notation $\approx$ for pmf approximations is adopted from \cite{yassaee2014achievability} -- for two random pmfs $P_X$ and $Q_X$ on the same alphabet $\mathcal{X}$, we say that $P_X \stackrel{\epsilon}\approx Q_X$ provided $\eE{[||P_X-Q_X||_1]} \leq \epsilon$. {\blue For any two sequences of random p.m.f.'s $P_{X^{(n)}}$ and $Q_{X^{(n)}}$ on $\mathcal{X}^{(n)}$, we write $P_{X^{(n)}}\approx Q_{X^{(n)}}$ if $\lim_{n\rightarrow \infty}\mathbbm{E}\lVert P_{X^{(n)}}-Q_{X^{(n)}}\rVert_1=0$. Likewise, we use $p_X \stackrel{\epsilon}\approx q_X$ for two (non-random) p.m.f.'s provided $||p_X-q_X||_1 \leq \epsilon$. For any two sequences of random pmfs $P_{X^n}$ and $Q_{X^n}$ on $\mathcal{X}^n$, we write $P_{X^n} \approx Q_{X^n}$ if $\lim_{n \to \infty} \eE{[||P_{X^n}-Q_{X^n}||_1]} =0$. Similarly, we write $p_{X^n} \approx q_{X^n}$ for two sequences of (non-random) p.m.f.'s.} We also quote the following results that will prove useful in the proof of Theorem~\ref{thm:encsideIB}, where the first one is a restatement of the OSRB result~\cite[Theorem 1]{yassaee2014achievability}.
\begin{theorem} \cite[Theorem 1]{yassaee2014achievability} \label{lem:a1}
Given a discrete memoryless source $(A^n,B_1^n,\cdots,B_L^n) \sim \text{i.i.d.}$ with $p_{A,B_1,\cdots,B_L}$ on $\mathcal{A}\times\prod_{j=1}^L\mathcal{B}_i$. For $j\in[1:L]$, let $\phi_j:\mathcal{B}_j^n\rightarrow [1:2^{nR_j}]$ be a random binning in which $\phi_j$ maps each sequence of $\mathcal{B}_j^n$ uniformly and independently to the set $[1:2^{nR_j}]$.  Let $K_j=\phi_j(B_j^n), j \in \{1,\cdots,L\}$. If for each $\mathcal{S} \subseteq [1:L]$, the following constraint
\begin{align}
\sum_{t \in \mathcal{S}} R_t \leq H(B_{\mathcal{S}}|A)
\end{align}
holds, then we have
\begin{align}
\lim_{n \to \infty} \eE_{\phi_1,\cdots,\phi_L}\left[ ||P_{A^n, K_1,\cdots,K_L}-p_{A^n} \prod_{j=1}^{L} p_{K_j}^{\emph{U}}||_1 \right] = 0,
\end{align}
where $\eE_{\phi_1,\cdots,\phi_L}$ denotes the expectation over the random binnings, $P_{A^n, K_1,\cdots,K_L}$ is a random p.m.f., and $p_{K_j}^{\emph{U}}$ is the uniform distribution over $[1:2^{nR_j}]$.
\end{theorem}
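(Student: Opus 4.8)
The plan is to read the statement as a concentration result: the target p.m.f.\ $p_{A^n}\prod_{j=1}^L p_{K_j}^{\text{U}}$ is exactly the \emph{average} of the random p.m.f.\ $P_{A^n,K_1,\dots,K_L}$ over the random binnings. Indeed, since $P_{A^n,K_{[1:L]}}(a^n,k_{[1:L]})=\sum_{b_{[1:L]}^n}p_{A^nB_{[1:L]}^n}(a^n,b_{[1:L]}^n)\prod_{j=1}^L\mathbbm{1}\{\phi_j(b_j^n)=k_j\}$ and $\mathbb E_{\phi_j}[\mathbbm{1}\{\phi_j(b_j^n)=k_j\}]=2^{-nR_j}$, one gets $\mathbb E_{\phi_{[1:L]}}[P_{A^n,K_{[1:L]}}]=p_{A^n}\prod_j p_{K_j}^{\text{U}}=:q$. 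So it suffices to show that the random p.m.f.\ concentrates on its mean in expected $L_1$ distance. (An alternative route is induction on $L$, peeling off one binning at a time via the $L=1$ resolvability estimate; but the direct argument below is cleaner.)

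Before the main estimate I would carry out a standard typical-set truncation: replace $p_{A^nB_{[1:L]}^n}$ by its restriction, renormalized, to the jointly typical set $A_\epsilon^{(n)}$. Since $P_{A^n,K_{[1:L]}}$ is linear in the source p.m.f., this changes both $P_{A^n,K_{[1:L]}}$ and $q$ by at most $o(1)$ in $L_1$ (by the AEP), and it buys a uniform bound $p_{B_\mathcal{T}^n|A^n}(b_\mathcal{T}^n|a^n)\le(1+o(1))\,2^{-n(H(B_\mathcal{T}|A)-\delta)}$ for every $\mathcal{T}\subseteq[1:L]$ and every $(a^n,b_\mathcal{T}^n)$ in the support; this is what prevents a later ``atypical'' remainder from blowing up. Next I would pass to a $\chi^2$-type bound: the supports of $P_{A^n,K_{[1:L]}}$ and $q$ are compatible (both vanish when $p_{A^n}(a^n)=0$), so Cauchy--Schwarz gives $\|P_{A^n,K_{[1:L]}}-q\|_1\le\frac12\big(\sum_{a^n,k_{[1:L]}}(P_{A^n,K_{[1:L]}}-q)^2/q\big)^{1/2}$, and Jensen together with $\mathbb E_\phi[P_{A^n,K_{[1:L]}}]=q$ yields $\mathbb E_\phi[\|P_{A^n,K_{[1:L]}}-q\|_1]\le\frac12\big(\sum_{a^n,k_{[1:L]}}\Var_\phi(P_{A^n,K_{[1:L]}}(a^n,k_{[1:L]}))/q(a^n,k_{[1:L]})\big)^{1/2}$. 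Everything thus reduces to showing this last sum tends to $0$.

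The crux is the variance computation. Expanding $\mathbb E_\phi[P^2]$ as a double sum over pairs $(b_{[1:L]}^n,\tilde b_{[1:L]}^n)$ and using that, by independence of the binnings across sequences and across $j$, the quantity $\mathbb E_\phi[\mathbbm{1}\{\phi_j(b_j^n)=k_j\}\,\mathbbm{1}\{\phi_j(\tilde b_j^n)=k_j\}]$ equals $2^{-nR_j}$ when $b_j^n=\tilde b_j^n$ and $2^{-2nR_j}$ otherwise, the double sum organizes itself by the agreement set $\mathcal{T}=\{j:b_j^n=\tilde b_j^n\}$. The $\mathcal{T}=\emptyset$ contribution is at most $q^2$, hence is absorbed by the $-q^2$ in the variance; relaxing the remaining exact-agreement constraints and summing over $k_{[1:L]}$ then gives
\[
\sum_{a^n,k_{[1:L]}}\frac{\Var_\phi\!\big(P_{A^n,K_{[1:L]}}(a^n,k_{[1:L]})\big)}{q(a^n,k_{[1:L]})}\;\le\;\sum_{\emptyset\neq\mathcal{T}\subseteq[1:L]}\Big(\prod_{j\in\mathcal{T}}2^{nR_j}\Big)\,\mathbb E_{p}\!\left[p_{B_\mathcal{T}^n|A^n}(B_\mathcal{T}^n|A^n)\right].
\]
By the truncation above, $\mathbb E_p[p_{B_\mathcal{T}^n|A^n}(B_\mathcal{T}^n|A^n)]\le(1+o(1))\,2^{-n(H(B_\mathcal{T}|A)-\delta)}$, so each summand is at most $(1+o(1))\,2^{\,n(\sum_{t\in\mathcal{T}}R_t-H(B_\mathcal{T}|A)+\delta)}$, which tends to $0$ once $\delta$ is small enough, precisely because of the hypothesis $\sum_{t\in\mathcal{T}}R_t\le H(B_\mathcal{T}|A)$ (read with the customary arbitrarily small slack; a rate exactly on the boundary is handled by a vanishing perturbation of the rates and continuity of the entropies). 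As there are only finitely many $\mathcal{T}$, the whole sum vanishes, and the theorem follows.

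I expect the main obstacle to be the bookkeeping in the variance expansion: verifying that the pairwise-collision probabilities factor over the $L$ binnings, grouping the resulting double sum by agreement sets $\mathcal{T}$, checking that the $\mathcal{T}=\emptyset$ block is dominated by $q^2$ so that the cancellation against $-q^2$ is clean, and tracking the exponents so that exactly the conditional entropies $H(B_\mathcal{T}|A)$ — to be matched against the corresponding rate sums — emerge. The remaining ingredients (the Cauchy--Schwarz/$\chi^2$ reduction, the AEP truncation, and the finiteness of the subset sum) are routine.
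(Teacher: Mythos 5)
The paper does not prove this statement --- it quotes Theorem~1 of Yassaee et al.\ and uses it as a black box --- so the relevant comparison is with the original OSRB proof. Your route is essentially that proof: observe that $p_{A^n}\prod_j p_{K_j}^{\text{U}}$ is the binning-average of the random p.m.f., truncate to the jointly typical set, pass to a $\chi^2$ bound via Cauchy--Schwarz and Jensen, expand the second moment over pairs of source realizations, organize by the agreement set $\mathcal{T}$, and absorb the $\mathcal{T}=\emptyset$ block into the subtracted mean. Your key identity $\sum_{a^n,k}\Var/q\le\sum_{\emptyset\neq\mathcal{T}}\big(\prod_{j\in\mathcal{T}}2^{nR_j}\big)\mathbb{E}_p\!\left[p_{B_\mathcal{T}^n|A^n}(B_\mathcal{T}^n|A^n)\right]$ checks out, and you correctly diagnose that the typical-set truncation is what upgrades the conditional R\'enyi-2 (collision) entropy hidden in that expectation to the Shannon quantity $H(B_\mathcal{T}|A)$; without it the second moment only gives the weaker constraint in terms of collision entropy.

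Where you are loose is the boundary case. If $\sum_{t\in\mathcal{T}}R_t=H(B_\mathcal{T}|A)$ exactly, the corresponding summand is $(1+o(1))\,2^{n\delta}$, which diverges for every fixed $\delta>0$, and the ``vanishing perturbation of the rates'' you invoke is not available to you: the $R_j$ are given, and shrinking $R_j$ changes the alphabet $[1:2^{nR_j}]$ of $K_j$, so the perturbed claim is a different statement, not a limit of this one. In fact the conclusion can genuinely fail at equality: take $L=1$, $B$ uniform on $\mathcal{B}$ and independent of $A$, $R_1=\log|\mathcal{B}|$; then $K_1$ is a uniformly random function applied to a uniform input, the bin occupancies are approximately i.i.d.\ Poisson$(1)$, and $\|P_{K_1}-p_{K_1}^{\text{U}}\|_1$ stays bounded away from $0$ as $n\to\infty$. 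The original Yassaee et al.\ theorem is stated with \emph{strict} inequality $\sum_{t\in\mathcal{S}}R_t<H(B_\mathcal{S}|A)$, and the $\le$ in this paper's restatement is a minor misquote; your proof, like the original, establishes the result under $<$, and you should simply say that rather than gesture at a perturbation.
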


\begin{lemma} \cite[Lemma 4]{yassaee2014achievability} \label{lem:a2}
\begin{enumerate}
\item If $P_{X^n} \approx Q_{X^n}$, then $P_{X^n} P_{Y^n|X^n} \approx Q_{X^n} P_{Y^n|X^n}$. Also if $P_{X^n} P_{Y^n|X^n} \approx Q_{X^n} Q_{Y^n|X^n}$, then $P_{X^n} \approx Q_{X^n}$.
\item If $p_{X^n} p_{Y^n|X^n} \approx q_{X^n} q_{Y^n|X^n}$, then there exists a sequence $x^n \in \mathcal{X}^n$ such that $p_{Y^n|X^n=x^n} \approx q_{Y^n|X^n=x^n}$.
\item If $P_{X^n} \approx Q_{X^n}$ and $P_{X^n} P_{Y^n|X^n} \approx P_{X^n} Q_{Y^n|X^n}$, then $P_{X^n} P_{Y^n|X^n} \approx Q_{X^n} Q_{Y^n|X^n}$.
\end{enumerate}
\end{lemma}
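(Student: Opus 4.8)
The plan is to derive all three parts from two elementary, pointwise properties of total variation distance and then pass to expectations. Throughout, for pmfs $\mu,\nu$ on a common alphabet let $\|\mu-\nu\|_1$ denote total variation. I would first record the two facts. \textbf{(i)} Appending a common channel is an isometry: for any conditional pmf $W_{Y|X}$, $\|\mu_X W_{Y|X}-\nu_X W_{Y|X}\|_1 = \|\mu_X-\nu_X\|_1$, since for each $x$ the weights $W(y|x)$ sum to one. \textbf{(ii)} Marginalization does not increase total variation: $\|\mu_X-\nu_X\|_1 \le \|\mu_{XY}-\nu_{XY}\|_1$, by the triangle inequality after summing out $y$. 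Both identities/inequalities hold for every realization of any randomness that the pmfs or channels may carry, so since $\E[\cdot]$ is linear and monotone, a pointwise inequality between total variations yields, after taking $\E[\cdot]$, the corresponding inequality between their $\eps$-parameters; together with $\eps_n\To0$ this is precisely the content of the relation $\approx$.

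For \textbf{Part 1}, the forward implication is (i) with $W=P_{Y^n|X^n}$: pointwise $\|P_{X^n}P_{Y^n|X^n}-Q_{X^n}P_{Y^n|X^n}\|_1=\|P_{X^n}-Q_{X^n}\|_1$, and taking $\E[\cdot]$ transports $P_{X^n}\approx Q_{X^n}$ to the claim. The converse implication is (ii) with $(X,Y)=(X^n,Y^n)$: pointwise $\|P_{X^n}-Q_{X^n}\|_1 \le \|P_{X^n}P_{Y^n|X^n}-Q_{X^n}Q_{Y^n|X^n}\|_1$, then take $\E[\cdot]$. For \textbf{Part 3}, I would apply the triangle inequality pointwise,
\[
\|P_{X^n}P_{Y^n|X^n}-Q_{X^n}Q_{Y^n|X^n}\|_1 \;\le\; \|P_{X^n}P_{Y^n|X^n}-P_{X^n}Q_{Y^n|X^n}\|_1 + \|P_{X^n}Q_{Y^n|X^n}-Q_{X^n}Q_{Y^n|X^n}\|_1 ,
\]
rewrite the second summand as $\|P_{X^n}-Q_{X^n}\|_1$ using (i), take $\E[\cdot]$, and conclude from the two hypotheses $P_{X^n}P_{Y^n|X^n}\approx P_{X^n}Q_{Y^n|X^n}$ and $P_{X^n}\approx Q_{X^n}$.

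The only part needing a new idea is \textbf{Part 2}, where the pmfs are non-random and the conclusion asserts the existence of a good conditioning value. Writing $\eps_n := \|p_{X^n}p_{Y^n|X^n}-q_{X^n}q_{Y^n|X^n}\|_1 \To 0$, I would first bound, via the triangle inequality, then (i), then (ii),
\[
\|p_{X^n}p_{Y^n|X^n}-p_{X^n}q_{Y^n|X^n}\|_1 \;\le\; \eps_n + \|q_{X^n}-p_{X^n}\|_1 \;\le\; 2\eps_n ,
\]
and then use the identity $\|p_{X^n}p_{Y^n|X^n}-p_{X^n}q_{Y^n|X^n}\|_1 = \sum_{x^n} p_{X^n}(x^n)\,\|p_{Y^n|X^n=x^n}-q_{Y^n|X^n=x^n}\|_1$, which exhibits the left side as a $p_{X^n}$-average of exactly the quantity to be controlled. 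By the averaging (pigeonhole) principle some $x^n$ with $p_{X^n}(x^n)>0$ satisfies $\|p_{Y^n|X^n=x^n}-q_{Y^n|X^n=x^n}\|_1 \le 2\eps_n$; letting $n$ range produces the required sequence since $\eps_n\To0$. A small point to dispatch is the convention for $q_{Y^n|X^n=x^n}$ when $q_{X^n}(x^n)=0$: the total $p_{X^n}$-mass of such $x^n$ is at most $2\|p_{X^n}-q_{X^n}\|_1 \le 2\eps_n$, so their contribution to the average is negligible however they are defined. The main obstacle, to the extent there is one, is purely bookkeeping --- tracking which estimates are pointwise (hence survive $\E[\cdot]$) and which require the averaging step, and handling the degenerate conditioning values; no hard inequality or delicate construction is involved.
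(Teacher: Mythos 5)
The paper does not supply its own proof of this lemma: it is quoted verbatim as \cite[Lemma 4]{yassaee2014achievability}, so there is no in-paper argument to compare against. Your reconstruction is correct and is essentially the standard proof one finds in the OSRB literature. Your two pointwise facts --- that appending a common channel preserves total variation (because $\sum_y W(y|x)=1$ factors out of $\sum_{x,y}W(y|x)\lvert\mu(x)-\nu(x)\rvert$) and that marginalization is a contraction --- do all the work for parts 1 and 3, and your observation that both hold for every realization of the random binning so that taking $\E[\cdot]$ transports them into statements about $\approx$ is exactly the right way to handle the random-pmf aspect of the framework. For part 2, your chain $\lVert p_{X^n}p_{Y^n|X^n}-p_{X^n}q_{Y^n|X^n}\rVert_1 \le 2\eps_n$ followed by the decomposition
\begin{equation*}
\lVert p_{X^n}p_{Y^n|X^n}-p_{X^n}q_{Y^n|X^n}\rVert_1 \;=\; \sum_{x^n} p_{X^n}(x^n)\,\lVert p_{Y^n|X^n=x^n}-q_{Y^n|X^n=x^n}\rVert_1
\end{equation*}
and the pigeonhole step is the standard existence argument. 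One remark on the degenerate case: to obtain an $x^n$ for which \emph{both} conditionals are well-defined, it is cleanest to restrict the average to $\{x^n: p_{X^n}(x^n)>0,\ q_{X^n}(x^n)>0\}$, which carries $p_{X^n}$-mass at least $1-2\eps_n$; the resulting bound $\min_{x^n} \lVert p_{Y^n|X^n=x^n}-q_{Y^n|X^n=x^n}\rVert_1 \le 2\eps_n/(1-2\eps_n)$ still vanishes, so your dismissal of this case as negligible is justified, though worth making explicit.
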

{\blue For simplicity, we prove the achievability for $|\mathcal{T}|=1$. The proof with the general time-sharing random variable $T$ then follows using standard time sharing argument outlined below.

Suppose the rate tuples $\boldsymbol{R}':=(R_1',R_2',R_{01}',R_{02}')$ and $\boldsymbol{R}'':=(R_1'',R_2'',R_{01}'',R_{02}'')$ are achievable for $q_{X_1X_2ZY}$, i.e., there exist p.m.f.'s $p^\prime(x_1,x_2,z,u_1,u_2,y)$ and $p^{\prime\prime}(x_1,x_2,z,u_1,u_2,y)$ which together with the rate tuples $\boldsymbol{R}'$ and $\boldsymbol{R}''$, respectively, satisfy the constraints in Theorem~1 (with $T=\emptyset$). For each blocklength $n$, we use the code corresponding to $\boldsymbol{R}'$ for the first $\alpha n$ (with $\alpha\in[0,1]$, where $\alpha n$ is an integer) transmissions and use the code corresponding to $\boldsymbol{R}''$ for the rest of the transmissions. Let $p_{X_1^n,X_2^n,Z^n,Y^n}$ be the induced distribution with this new code. For ease of notation, let $D\triangleq (X_1,X_2,Z,Y)$. Note that the induced distribution decomposes as $p_{D^n}=p_{D^{\alpha n}}\cdot p_{D_{\alpha n+1}^n}$, where $D_{\alpha n+1}^n=(D_{\alpha n+1},\dots,D_{n})$, because of the independence of the two segments. Then, the overall total variation distance of interest with this new code is given by
\begin{align}
    &\lVert p_{D^n}-\prod q_{D} \rVert\nonumber\\
    &=\lVert p_{D^{\alpha n}}\cdot p_{D_{\alpha n+1}^{n}}-q^{(n)}_{D} \rVert\nonumber\\
    &=\sum_{d^n} \bigg|p_{D^{\alpha n}}(d^{\alpha n})p_{D_{\alpha n+1}^n}(d_{\alpha n+1}^n)\nonumber\\
    &\hspace{1cm}-q^{(\alpha n)}_D(d^{\alpha n})q^{(\alpha n+1:n)}_D(d_{\alpha n+1}^n)\bigg|\label{eqn:timesharing1}\\
    &\leq\sum_{d^n}\bigg|p_{D^{\alpha n}}(d^{\alpha n})p_{D_{\alpha n+1}^n}(d_{\alpha n+1}^n)\nonumber\\
    &\hspace{1cm}-p_{D^{\alpha n}}(d^{\alpha n})q^{(\alpha n+1:n)}_D(d_{\alpha n+1}^n)\bigg|\nonumber\\
    &\hspace{12pt}+\sum_{d^n}\bigg|p_{D^{\alpha n}}(d^{\alpha n})q^{(\alpha n+1:n)}_D(d_{\alpha n+1}^n)\nonumber\\
    &\hspace{1cm}-q^{(\alpha n)}_D(d^{\alpha n})q^{(\alpha n+1:n)}_D(d_{\alpha n+1}^n)\bigg|\label{eqn:timesharing2}\\
    &=\sum_{d_{\alpha n+1}^n}|p_{D_{\alpha n+1}^n}(d_{\alpha n+1}^n)-q^{(\alpha n+1:n)}_D(d_{\alpha n+1}^n)|\nonumber\\
    &\hspace{1cm}+\sum_{d^{n\alpha }}|p_{D^{\alpha n}}(d^{\alpha n})-q^{(\alpha n)}_D(d^{\alpha n})|\nonumber\\
    &\rightarrow 0 \ \text{as} \ n\rightarrow \infty\label{eqn:timesharing3},
\end{align}
where \eqref{eqn:timesharing1} follows by defining $q^{(\alpha n+1:n)}_D(d_{\alpha n+1}^n)=\prod_{i=\alpha n+1}^nq_D(d_i)$, \eqref{eqn:timesharing2} follows by adding and subtracting the term $p_{D^{\alpha n}}(d^{\alpha n})q^{(\alpha n+1:n)}_D(d_{\alpha n+1}^n)$ for each $d^n$ inside the absolute value in \eqref{eqn:timesharing1} and then applying the triangle inequality, and \eqref{eqn:timesharing3} follows because the codes in each of the segments satisfy \eqref{eqn:correctness} with the respective blocklengths. So, the rate tuple $\alpha \boldsymbol{R}'+(1-\alpha)\boldsymbol{R}''$ is achievable. This handles the case of $|\mathcal{T}|=2$ with $\mathcal{T}=\{1,2\}$ and $p_T(1)=1-p_T(2)=\alpha$. The generalization to larger alphabets for $T$ follows along similar lines.
} 

We now follow the standard structure of an achievability proof via OSRB. This involves defining two protocols, one each based on random coding and random binning, that induce a joint distribution on the random variables defined during the protocols. \\
\textbf{Random Binning Scheme:} Let $(U_1^n,U_2^n,X_1^n,X_2^n,Z^n,Y^n)$ be drawn i.i.d. with the joint distribution $$p(x_1,x_2,z)p(u_1|x_1)p(u_2|x_2)p(y|u_1,u_2,z)$$ such that $p(x_1,x_2,z,y)=q(x_1,x_2,z,y)$. 
%such that $(U_1,U_2) \to (X_1,X_2) \to Z$, $U_1 \to X_1 \to X_2 \to U_2$ and $Y \to (U_1,U_2,Z) \to (X_1,X_2)$ are Markov chains. 
Now we employ the following random binning:
\begin{itemize}
\item Generate $(S_1,M_1,F_1)$ as three uniform binnings of $U_1^n$ independently, i.e. $S_1 = \phi_{11}(U_1^n) \in [1:2^{n R_{01}}]$, $M_1 = \phi_{12}(U_1^n) \in [1:2^{n R_1}]$ and $F_1 = \phi_{13}(U_1^n) \in [1:2^{n \tilde{R}_1}]$. {\blue Here, $S_1$ stands for the pairwise shared randomness between the first encoder and the decoder, $M_1$ stands for a message communicated over the noiseless link, while $F_1$ is additional shared randomness assumed in OSRB to be eliminated later without disturbing the i.i.d. distribution.}
\item Likewise, generate $(S_2,M_2,F_2)$ as three uniform binnings of $U_2^n$ independently, i.e. $S_2 = \phi_{21}(U_2^n) \in [1:2^{n R_{02}}]$, $M_2 = \phi_{22}(U_2^n) \in [1:2^{n R_2}]$ and $F_2 = \phi_{23}(U_2^n) \in [1:2^{n \tilde{R}_2}]$. 
\end{itemize}
The receiver uses a Slepian-Wolf decoder to estimate $(\hat{u}_1^n,\hat{u}_2^n)$ from $(s_1,s_2,f_1,f_2,m_1,m_2,z^n)$.
The corresponding random p.m.f. induced is (the randomness is due to the binning)
\begin{align}
&P(x_1^n,x_2^n,z^n,y^n,u_1^n,u_2^n,s_1,f_1,m_1,s_2,f_2,m_2,\hat{u}_1^n,\hat{u}_2^n) \notag\\
&= p(x_1^n,x_2^n,z^n)p(u_1^n|x_1^n)p(u_2^n|x_2^n)p(y^n|u_1^n,u_2^n,z^n) \notag\\
&\phantom{w} \times P(s_1,f_1,m_1|u_1^n) P(s_2,f_2,m_2|u_2^n) \notag\\
&\phantom{w} \times P^{SW}(\hat{u}_1^n,\hat{u}_2^n|s_1,f_1,m_1,s_2,f_2,m_2,z^n) \label{eq:osrb2} \\
&= p(x_1^n,x_2^n,z^n) P(s_1,f_1,m_1,u_1^n|x_1^n) P(s_2,f_2,m_2,u_2^n|x_2^n) \notag\\
&\phantom{w} \times P^{SW}(\hat{u}_1^n,\hat{u}_2^n|s_1,f_1,m_1,s_2,f_2,m_2,z^n) p(y^n|u_1^n,u_2^n,z^n) \notag\\
&= p(x_1^n,x_2^n,z^n) P(s_1,f_1|x_1^n) P(u_1^n|s_1,f_1,x_1^n) P(m_1|u_1^n) \notag\\
&\phantom{w} \times P(s_2,f_2|x_2^n) P(u_2^n|s_2,f_2,x_2^n) P(m_2|u_2^n) \notag\\
&\phantom{w} \times P^{SW}(\hat{u}_1^n,\hat{u}_2^n|s_1,f_1,m_1,s_2,f_2,m_2,z^n) p(y^n|u_1^n,u_2^n,z^n) \notag\\
&= P(x_1^n,x_2^n,z^n,s_1,s_2,f_1,f_2) P(u_1^n|s_1,f_1,x_1^n) P(m_1|u_1^n) \notag\\
&\phantom{w} \times P(u_2^n|s_2,f_2,x_2^n) P(m_2|u_2^n) \notag\\
&\phantom{w} \times P^{SW}(\hat{u}_1^n,\hat{u}_2^n|s_1,f_1,m_1,s_2,f_2,m_2,z^n) p(y^n|u_1^n,u_2^n,z^n). \label{eq:osrb3}
\end{align}
where \eqref{eq:osrb2} uses the Markov chains $(U_1,U_2) \to (X_1,X_2) \to Z$, $U_1 \to X_1 \to X_2 \to U_2$, $Y \to (U_1,U_2,Z) \to (X_1,X_2)$, and the binning construction. %while the final step uses the Markov chain $(S_1,F_1) \to U_1^n \to X_1^n \to X_2^n \to U_2^n \to (S_2,F_2)$.

\noindent \textbf{Random Coding Scheme:} We assume that additional shared randomness $F_j$ of rate $\tilde{R}_j, j \in \{1,2\}$ are available between the respective encoders and the decoder in the main problem. 
Encoder $j \in \{1,2\}$, knowing $(s_j,f_j,x_j^n)$, generates $u_j^n$ according to the p.m.f. $P(u_j^n|s_j,f_j,x_j^n)$ (from the previous protocol) and sends 
the bin index of $u_j^n$ corresponding to the binning $\phi_{j2}$ in the previous protocol over the noiseless link to the decoder. 
The decoder obtains $(s_1,s_2,f_1,f_2,m_1,m_2,z^n)$, and employs the Slepian-Wolf decoder from the previous protocol, i.e. $P^{SW}(\hat{u}_1^n,\hat{u}_2^n|s_1,f_1,m_1,s_2,f_2,m_2,z^n)$, to estimate $(u_1^n,u_2^n)$. 
Then it constructs $y^n$ according to the distribution $p_{Y^n|U_1^n,U_2^n,Z^n}(y^n|\hat{u}_1^n,\hat{u}_2^n,z^n)$. 
The induced random p.m.f. from this protocol is given by
\begin{align}
&\hat{P}(x_1^n,x_2^n,z^n,y^n,u_1^n,u_2^n,s_1,f_1,m_1,s_2,f_2,m_2,\hat{u}_1^n,\hat{u}_2^n) \notag\\
&= p^{\text{U}}(s_1)p^{\text{U}}(f_1)p^{\text{U}}(s_2)p^{\text{U}}(f_2)p(x_1^n,x_2^n,z^n)  \notag\\
&\phantom{ww} \times P(u_1^n|s_1,f_1,x_1^n) P(m_1|u_1^n) \notag\\
&\phantom{ww} \times P(u_2^n|s_2,f_2,x_2^n) P(m_2|u_2^n) \notag\\
&\phantom{ww} \times P^{SW}(\hat{u}_1^n,\hat{u}_2^n|s_1,f_1,m_1,s_2,f_2,m_2,z^n) p(y^n|\hat{u}_1^n,\hat{u}_2^n,z^n). \label{eq:rcpmf}
\end{align}

Next we find constraints that imply that the induced p.m.f.'s from the two protocols are almost identical. Then one can restrict attention to the source coding side of the problem (related to the random binning protocol) and investigate the desired properties like vanishing total variation distance.

\noindent \textbf{Analysis of Rate Constraints:}\\
We now derive sufficient conditions for the joint statistics of the random variables from the two protocols to be identical. 
Since $(s_j,f_j)$ are the bin indices of $u_j^n$ for $j \in \{1,2\}$, if we ensure that
\begin{align}
R_{01}+\tilde{R}_1 &\leq H(U_1|X_1,X_2,Z) = H(U_1|X_1), \label{eq:cond11} \\
R_{02}+\tilde{R}_2 &\leq H(U_2|X_1,X_2,Z) = H(U_2|X_2), \label{eq:cond12} \\
R_{01}+\tilde{R}_1+R_{02}+\tilde{R}_2 &\leq H(U_1,U_2|X_1,X_2,Z)\nonumber\\
&=H(U_1,U_2|X_1,X_2) \label{eq:cond13}
\end{align}
{\blue(where the equalities in \eqref{eq:cond11}--\eqref{eq:cond13} follow from the Markov chains $U_1 \to X_1 \to (X_2,Z)$, $U_2 \to X_2 \to (X_1,Z)$, and $(U_1,U_2) \to (X_1,X_2) \to Z$, respectively)}, then by Theorem~\ref{lem:a1}, we obtain 
\begin{align}
&P(x_1^n,x_2^n,z^n,s_1,s_2,f_1,f_2) \notag\\
&\phantom{ww} \approx p^{\text{U}}\!(s_1)p^{\text{U}}\!(f_1)p^{\text{U}}\!(s_2)p^{\text{U}}\!(f_2)p(x_1^n,x_2^n,z^n) \notag\\
&\phantom{ww} = \hat{P}(x_1^n,x_2^n,z^n,s_1,s_2,f_1,f_2).
\end{align}
This in turn results in
\begin{align}
&P(x_1^n,x_2^n,z^n,u_1^n,u_2^n,s_1,f_1,m_1,s_2,f_2,m_2,\hat{u}_1^n,\hat{u}_2^n) \notag\\
&\phantom{ww} \approx \hat{P}(x_1^n,x_2^n,z^n,u_1^n,u_2^n,s_1,f_1,m_1,s_2,f_2,m_2,\hat{u}_1^n,\hat{u}_2^n). \label{eq:condit1}
\end{align}
Note that the condition \eqref{eq:cond13} above is redundant because $H(U_1,U_2|X_1,X_2)=H(U_1|X_1)+H(U_2|X_2)$ using the Markov chain $U_1 \to X_1 \to X_2 \to U_2$.

For the Slepian-Wolf decoder to succeed, we require (by Slepian-Wolf theorem~\cite{slepian1973noiseless}, see also \cite[Lemma 1]{yassaee2014achievability})
\begin{align}
R_{01}+\tilde{R}_1+R_1 &\geq H(U_1|U_2,Z), \label{eq:cond21} \\
R_{02}+\tilde{R}_2+R_2 &\geq H(U_2|U_1,Z), \label{eq:cond22} \\
R_{01}+\tilde{R}_1+R_1+R_{02}+\tilde{R}_2+R_2 &\geq H(U_1,U_2|Z). \label{eq:cond23}
\end{align}
This ensures that
\begin{align}
&P(x_1^n,x_2^n,z^n,u_1^n,u_2^n,s_1,f_1,m_1,s_2,f_2,m_2,\hat{u}_1^n,\hat{u}_2^n) \notag\\
&\phantom{w} \approx P(x_1^n,x_2^n,z^n,u_1^n,u_2^n,s_1,f_1,m_1,s_2,f_2,m_2) \notag\\
&\phantom{wwwww} \times \mathbbm{1}\{\hat{u}_1^n=u_1^n,\hat{u}_2^n=u_2^n\}. \label{eq:condit2}
\end{align}
Using \eqref{eq:condit2}, \eqref{eq:condit1} and the first and third parts of Lemma \ref{lem:a2}, we can write the following for the joint probability distribution involving $y^n$
\begin{align}
&\hat{P}(x_1^n,x_2^n,z^n,u_1^n,u_2^n,s_1,f_1,m_1,s_2,f_2,m_2,\hat{u}_1^n,\hat{u}_2^n,y^n) \notag\\
&= \hat{P}(x_1^n,x_2^n,z^n,u_1^n,u_2^n,s_1,f_1,m_1,s_2,f_2,m_2,\hat{u}_1^n,\hat{u}_2^n) \notag\\
&\phantom{wwwww} \times p(y^n|\hat{u}_1^n,\hat{u}_2^n,z^n) \notag\\
&\approx P(x_1^n,x_2^n,z^n,u_1^n,u_2^n,s_1,f_1,m_1,s_2,f_2,m_2) \notag\\
&\phantom{wwwww} \times \mathbbm{1}\{\hat{u}_1^n=u_1^n,\hat{u}_2^n=u_2^n\} p(y^n|\hat{u}_1^n,\hat{u}_2^n,z^n) \notag\\
&= P(x_1^n,x_2^n,z^n,u_1^n,u_2^n,s_1,f_1,m_1,s_2,f_2,m_2) \notag\\
&\phantom{wwwww} \times \mathbbm{1}\{\hat{u}_1^n=u_1^n,\hat{u}_2^n=u_2^n\} p(y^n|u_1^n,u_2^n,z^n) \notag\\
&= P(x_1^n,x_2^n,z^n,u_1^n,u_2^n,s_1,f_1,m_1,s_2,f_2,m_2,y^n) \notag\\
&\phantom{wwwww} \times \mathbbm{1}\{\hat{u}_1^n=u_1^n,\hat{u}_2^n=u_2^n\}.
\end{align}
Thus, using the first part of Lemma \ref{lem:a2}, we can conclude that
\begin{align}
\hat{P}(x_1^n,x_2^n,z^n,y^n,f_1,f_2) \approx P(x_1^n,x_2^n,z^n,y^n,f_1,f_2). \label{eq:osrbcond1}
\end{align}
%In particular, the marginal pmf of $(X_1^n,X_2^n,Y^n)$ on the right-hand side here is equal to $p(x_1^n,x_2^n,y^n)$, i.e. the desired pmf.

We require $(X_1^n,X_2^n,Z^n,Y^n)$ to be independent of the extra shared randomness $(F_1,F_2)$ to eliminate them without disturbing the desired i.i.d. distribution. This can be accomplished by imposing the following conditions according to Theorem~\ref{lem:a1}.
\begin{align}
\tilde{R}_1 &\leq H(U_1|X_1,X_2,Y,Z), \label{eq:cond31} \\
\tilde{R}_2 &\leq H(U_2|X_1,X_2,Y,Z), \label{eq:cond32} \\
\tilde{R}_1+\tilde{R}_2 &\leq H(U_1,U_2|X_1,X_2,Y,Z). \label{eq:cond33}
\end{align}
This ensures that
\begin{align}
P(x_1^n,x_2^n,z^n,y^n,f_1,f_2) \approx p^{\text{U}}(f_1) p^{\text{U}}(f_2) p(x_1^n,x_2^n,z^n,y^n), 
\end{align}
which along with \eqref{eq:osrbcond1} and the triangle inequality, implies that
\begin{align}
\hat{P}(x_1^n,x_2^n,z^n,y^n,f_1,f_2) \approx p^{\text{U}}(f_1) p^{\text{U}}(f_2) p(x_1^n,x_2^n,z^n,y^n).
\end{align}
Hence there exists a fixed binning with corresponding pmf $\tilde{p}$ such that if we replace $P$ by $\tilde{p}$ in \eqref{eq:rcpmf} and denote the resulting pmf by $\hat{p}$, then 
\begin{align}
\hat{p}(x_1^n,x_2^n,z^n,y^n,f_1,f_2) \approx p^{\text{U}}(f_1) p^{\text{U}}(f_2) p(x_1^n,x_2^n,y^n,z^n).
\end{align}
Now the second part of Lemma \ref{lem:a2} allows us to conclude that there exist instances $F_1 = f_1^{*}, F_2=f_2^{*}$ such that 
\begin{align}
\hat{p}(x_1^n,x_2^n,z^n,y^n|f_1^{*},f_2^{*}) \approx p(x_1^n,x_2^n,z^n,y^n).
\end{align}
Now along with the rate constraints imposed in equations \eqref{eq:cond11} -- \eqref{eq:cond12}, \eqref{eq:cond21} -- \eqref{eq:cond23} and \eqref{eq:cond31} -- \eqref{eq:cond33}, we also need to impose the non-negativity constraints on all the rates. But it turns out that the constraints $\tilde{R}_1 \geq 0$ and $\tilde{R}_2 \geq 0$ are redundant, which can be shown along the lines of \cite[Remark 4]{yassaee2014achievability}. We prove that if $(\tilde{R}_1,\tilde{R}_2)$ are not necessarily all positive and satisfy \eqref{eq:cond11} -- \eqref{eq:cond12}, \eqref{eq:cond21} -- \eqref{eq:cond23} and \eqref{eq:cond31} -- \eqref{eq:cond33} along with $(R_1,R_2,R_{01},R_{02})$ for some $(U_1,U_2)$ such that $(U_1,U_2) \to (X_1,X_2) \to Z$, $U_1 \to X_1 \to X_2 \to U_2$ and $Y \to (U_1,U_2,Z) \to (X_1,X_2)$, then there exists $(\bar{U}_1,\bar{U}_2)$ with $(\bar{U}_1,\bar{U}_2) \to (X_1,X_2) \to Z$, $\bar{U}_1 \to X_1 \to X_2 \to \bar{U}_2$ and $Y \to (\bar{U}_1,\bar{U}_2,Z) \to (X_1,X_2)$ and $\bar{R}_1 \geq 0, \bar{R}_2 \geq 0$ such that $(\bar{R}_1,\bar{R}_2)$ along with $(R_1,R_2,R_{01},R_{02})$ satisfy \eqref{eq:cond11} -- \eqref{eq:cond12}, \eqref{eq:cond21} -- \eqref{eq:cond23} and \eqref{eq:cond31} -- \eqref{eq:cond33} for $(\bar{U}_1,\bar{U}_2)$ instead of $(U_1,U_2)$.

{\blue Towards this end}, suppose $\tilde{R}_1 < 0$ and $\tilde{R}_2 < 0$. Let $(W_1,W_2)$ be random variables such that $H(W_1) > |\tilde{R}_1|$ and $H(W_2) > |\tilde{R}_2|$. We also assume that $W_1$ as well as $W_2$ are independent of all the other random variables. Define $\bar{R}_1=\tilde{R}_1+H(W_1)$, $\bar{R}_2=\tilde{R}_2+H(W_2)$ and $\bar{U}_1=(U_1,W_1)$, $\bar{U}_2=(U_2,W_2)$. Clearly, we have $\bar{R}_1, \bar{R}_2 \geq 0$ and it is easy to see that $(\bar{R}_1,\bar{R}_2)$ along with $(R_1,R_2,R_{01},R_{02})$ satisfy \eqref{eq:cond11} -- \eqref{eq:cond12}, \eqref{eq:cond21} -- \eqref{eq:cond23} and \eqref{eq:cond31} -- \eqref{eq:cond33} for $(\bar{U}_1,\bar{U}_2)$ using the independence of $W_1$ and $W_2$ from all other random variables and the fact that $(\tilde{R}_1,\tilde{R}_2)$ satisfy \eqref{eq:cond11} -- \eqref{eq:cond12}, \eqref{eq:cond21} -- \eqref{eq:cond23} and \eqref{eq:cond31} -- \eqref{eq:cond33} along with $(R_1,R_2,R_{01},R_{02})$. {\blue Next suppose $\tilde{R}_1 < 0, \tilde{R}_2 \geq 0$ -- here the proof follows by defining $W_1,\bar{R}_1$ and $\bar{U}_1$ as above and noting that $(\bar{R}_1,\tilde{R}_2)$ along with $(R_1,R_2,R_{01},R_{02})$ satisfy \eqref{eq:cond11} -- \eqref{eq:cond12}, \eqref{eq:cond21} -- \eqref{eq:cond23} and \eqref{eq:cond31} -- \eqref{eq:cond33} for $(\bar{U}_1,U_2)$. The remaining configuration $\tilde{R}_1 \geq 0, \tilde{R}_2 < 0$ can be dealt similarly.} Finally on eliminating $(\tilde{R}_1,\tilde{R}_2)$ from equations \eqref{eq:cond11} -- \eqref{eq:cond12}, \eqref{eq:cond21} -- \eqref{eq:cond23} and \eqref{eq:cond31} -- \eqref{eq:cond33} by Fourier-Motzkin elimination (FME), we obtain the rate constraints
\begin{align}
R_1 &\geq H(U_1|U_2,Z)-H(U_1|X_1,U_2,Z) = I(U_1;X_1|U_2,Z), \label{eq:reff1}\\
R_2 &\geq H(U_2|U_1,Z)-H(U_2|X_2,U_1,Z) = I(U_2;X_2|U_1,Z),
\end{align}
\begin{align}
R_1+R_2 &\geq H(U_1,U_2|Z)-H(U_1,U_2|X_1,X_2,Z) \notag\\
& = I(U_1,U_2;X_1,X_2|Z),
\end{align}
\begin{align}
R_1+R_{01} &\geq H(U_1|U_2,Z)-H(U_1|X_1,X_2,Y,Z) \notag\\
& = I(U_1;X_1,X_2,Y|Z)-I(U_1;U_2|Z), \\
R_2+R_{02} &\geq H(U_2|U_1,Z)-H(U_2|X_1,X_2,Y,Z) \notag\\
& = I(U_2;X_1,X_2,Y|Z)-I(U_1;U_2|Z),
\end{align}
\begin{align}
&R_1+R_2+R_{01} \notag\\
&\phantom{w}\geq H(U_1,U_2|Z)\!-\!H(U_2|X_2,Z)\!-\!H(U_1|X_1,X_2,Y,Z) \notag\\
&\phantom{w} = I(U_1;X_1,X_2,Y|Z)+I(U_2;X_2|U_1,Z), \\
&R_1+R_2+R_{02} \notag\\
&\phantom{w}\geq H(U_1,U_2|Z)\!-\!H(U_1|X_1,Z)\!-\!H(U_2|X_1,X_2,Y,Z) \notag\\
&\phantom{w} = I(U_2;X_1,X_2,Y|Z)+I(U_1;X_1|U_2,Z),
\end{align}
\begin{align}
&R_1+R_2+R_{01}+R_{02} \notag\\
&\phantom{w} \geq H(U_1,U_2|Z)-H(U_1,U_2|X_1,X_2,Y,Z) \notag\\
&\phantom{w} = I(U_1,U_2;X_1,X_2,Y|Z). \label{eq:reff8}
\end{align}
Thus when the rate constraints in \eqref{eq:reff1} -- \eqref{eq:reff8} are met, there exists a sequence of $(2^{nR_{01}},2^{nR_{02}},2^{nR_1},2^{nR_2},n)$ codes with encoders and decoders as described in the second protocol with the particular realization of binning along with the fixed instances $f_1^{*},f_2^{*}$ resulting in vanishing total variation distance.

\section{Specialization of Theorem \ref{thm:encsideIB} to Deterministic Function Computation} \label{app:rem2}
Let $|\mathcal{T}|=1$ for simplicity. When $Y$ is a deterministic function of $(X_1,X_2,Z)$, we note the following simplifications to the rate constraints of Theorem \ref{thm:encsideIB}.
\begin{align}
R_1+R_{01} &\geq I(U_1;X_1,X_2,Y|Z)-I(U_1;U_2|Z) \notag\\
&\stackrel{(a)}= I(U_1;X_1,X_2|Z)-I(U_1;U_2|Z) \notag\\
&= I(U_1;X_1,X_2|U_2,Z)-I(U_1;U_2|X_1,X_2,Z) \notag\\
&\stackrel{(b)}= I(U_1;X_1|U_2,Z),
\end{align}
where (a) follows since $Y$ is determined by $(X_1,X_2,Z)$ while (b) follows from the Markov chains $U_1 \to X_1 \to (U_2,X_2,Z)$ and $U_2 \to X_2 \to (U_1,X_1,Z)$. Similarly, it follows that
\begin{align}
R_2+R_{02} &\geq I(U_2;X_2|U_1,Z).
\end{align}
Furthermore, we note that
\begin{align}
R_1+R_2+R_{01} &\geq I(U_1;X_1,X_2,Y|Z)+I(U_2;X_2|U_1,Z) \notag\\
&\stackrel{(a)}=  I(U_1;X_1,X_2|Z)+I(U_2;X_2|U_1,Z) \notag\\
&\stackrel{(b)}= I(U_1;X_1,X_2|Z)+I(U_2;X_1,X_2|U_1,Z) \notag\\
&= I(U_1,U_2;X_1,X_2|Z),
\end{align}
where (a) follows since $Y$ is determined by $(X_1,X_2,Z)$ while (b) follows from the Markov chain $U_2 \to X_2 \to (U_1,X_1,Z)$. Similarly, it follows that
\begin{align}
R_1+R_2+R_{02} &\geq I(U_1,U_2;X_1,X_2|Z).
\end{align}
Finally, we note that
\begin{align}
R_1+R_2+R_{01}+R_{02} &\geq I(U_1,U_2;X_1,X_2,Y|Z) \notag\\
&= I(U_1,U_2;X_1,X_2|Z).
\end{align}
Thus when $Y$ is a deterministic function of $(X_1,X_2,Z)$, the rate constraints involving the shared randomness rates $R_{01}$ and $R_{02}$ become redundant in Theorem \ref{thm:encsideIB}. Hence, the region  simplifies to the set of rate pairs $(R_1,R_2)$ satisfying
\begin{align*}
R_1 &\geq I(U_1;X_1|U_2,Z), \\
R_2 &\geq I(U_2;X_2|U_1,Z), \\
R_1+R_2 &\geq I(U_1,U_2;X_1,X_2|Z), 
\end{align*}
for some p.m.f. 
\begin{align*}
p(x_1,x_2,z,u_1,u_2)=p(x_1,x_2,z)p(u_1|x_1)p(u_2|x_2)
\end{align*}
such that $H(Y|U_1,U_2,Z)=0$. This is precisely the inner bound of \cite[Theorem 2]{SefidgaranT16} specialized to the multiple-access network where the condition $H(Y|U_1,U_2,Z)=0$, in conjunction with the structure of the p.m.f., is expressed in an alternate form involving graph entropy using \cite[Lemma 3]{SefidgaranA11}. %Note that this inner bound is also reported in \cite[Proposition 1]{SefidgaranA11}. 

\section{Specialization of Theorem \ref{thm:indep} to Deterministic Function Computation} \label{app:rem5}
Let $|\mathcal{T}|=1$ again for simplicity. The joint distribution in \eqref{pmfstructure1} is such that $(U_1,X_1)\to Z\to (U_2,X_2)$
 is a Markov chain. This leads to the following simplification to the rate constraints involving shared randomness when $Y$ is a deterministic function of $(X_1,X_2,Z)$, say, $Y=f(X_1,X_2,Z)$.
 \begin{align*}
 I(U_1;X_1,Y|X_2,Z)&=I(U_1;X_1,X_2,Y|Z)\\
 &=I(U_1;X_1,X_2|Z)\\
 &=I(U_1;X_1|Z).
 \end{align*}
% Similarly, $I(U_2;X_2,Y|X_1,Z)=I(U_2;X_2|Z)$. Furthermore,
%\begin{align*}
%I(U_1,U_2;X_1,X_2,Y|Z) &= I(U_1,U_2;X_1,X_2|Z)\\
%&= I(U_1;X_1|Z)+I(U_2;X_2|Z).
%\end{align*} 
 This simplification renders the bound on $R_1+R_{01}$ in Theorem~\ref{thm:indep} redundant. Hence, the region simplifies to the set of rate pairs $(R_1,R_2)$ satisfying 
 \begin{align*}
 R_1&\geq I(U_1;X_1|Z),\\
 R_2&\geq I(U_2;X_2|Z),
 \end{align*}  
 for some p.m.f. 
 \begin{align*}
 p(x_1,&x_2,z,u_1,u_2,y)=\\
 &\hspace{12pt}p(z)p(x_1|z)p(x_2|z)p(u_1|x_1)p(u_2|x_2)p(y|u_1,u_2,z)
 \end{align*}
  such that $H(Y|U_1,U_2,Z)=0$. This is the rate region in  \cite[Theorem 3]{SefidgaranT16} specialized to the multiple-access network where it is expressed in an alternate form involving graph entropies, which follows by using \cite[Proof of Theorem 3]{SefidgaranA11}.

%Thus Theorem~\ref{thm:encsideOB} follows from Theroem~\ref{thm:encsideOBsr}.

\section{Proof of Claim \ref{claim:ex}}\label{apendix:example}
It suffices to restrict attention to realizations $u_1,u_2$ for which $H(X_1|U_1=u_1)>0$ and $H(X_2|U_2=u_2)>0$. We prove the claim in three steps.

\noindent{Step $1$}: We prove that under Case $3$, for any $u_1$ with $P(U_1=u_1)>0$, there exists a $k\in\{1,2\}$ such that $H(X_{1k}|U_1=u_1)=0$.

\noindent{Step $2$}: Then for any $u_2$ with $P(U_2=u_2)>0$, for the $k$ from Step $1$, we show that $H(X_{2k}|U_2=u_2)=0$.

 \noindent{Step $3$}: Finally, for any $u_1^\prime$ with $u_1^\prime\neq u_1$ and $P(U_1=u_1^\prime)>0$, for the $k$ from Step $1$, we prove that $H(X_{1k}|U_1=u_1^\prime)=0$.

\vspace{2mm}
\noindent\underline{Step $1$:} We prove this by contradiction. Suppose $H(X_{1i}|U_1=u_1)>0$, for $i=1,2$. This implies that the support of $p_{X_1|U_1=u_1}$ cannot be of the form $\{b_1,b_2\},\{(b_1,b_2),(b_1,1-b_2)\},\{(b_1,b_2),(1-b_1,b_2)\}$, for $b_1,b_2\in\{0,1\}$. The remaining possibility is that the support has to be a superset of either $\{(0,1),(1,0)\}$ or $\{(0,0),(1,1)\}$.  In the sequel, we use the independence of $(U_1,X_1)$ and $(U_2,X_2)$, and the Markov chain $Y\rightarrow(U_1,U_2)\rightarrow(X_1,X_2)$ repeatedly. Consider a $u_2$ such that $P(U_2=u_2)>0$. It turns out that the support cannot be a superset of $\{(0,0),(1,1)\}$. To see this, first notice that $P(X_1=(0,0)|U_1=u_1,U_2=u_2),P(X_1=(1,1)|U_1=u_1,U_2=u_2)>0$. Now since $X_{1J}-(U_1,U_2)-X_1$, we have $P(X_{1J}=0|U_1=u_1,U_2=u_2)=P(X_{1J}=0|U_1=u_1,U_2=u_2,X_1=(1,1))=0$, where the last equality follows from the correctness of output $Y=(X_{1J},X_{2J})$. Similarly, $P(X_{1J}=1|U_1=u_1,U_2=u_2)=P(X_{1J}=1|U_1=u_1,U_2=u_2,X_1=(0,0))=0$. This is a contradiction since $p_{X_{1J}|U_1=u_1,U_2=u_2}$ has to be a probability distribution. The only other possibility is that the support of $p_{X_1|U_1=u_1}$ is a superset of $\{(0,1),(1,0)\}$. Since $H(X_2|U_2)>0$, there exists a $u_2$ with $P(U_2=u_2)>0$ such that $H(X_2|U_2=u_2)>0$. So, either exactly one or both of $H(X_{21}|U_2=u_2)$ and $H(X_{22}|U_2=u_2)$ will be strictly positive. Suppose exactly one of them is strictly positive. Without loss of generality, suppose $H(X_{21}|U_2=u_2)>0$, i.e., $P(X_{21}=0|U_2=u_2),P(X_{21}=1|U_2=u_2)>0$ and $H(X_{22}|U_2=u_2)=0$. Also, assume that $P(X_{22}=1|U_2=u_2)=1$. Consider the probability distribution $p_{Y|U_1=u_1,U_2=u_2}$. This is well defined because $P(U_1=u_1,U_2=u_2)>0$ as $U_1$ is independent of $U_2$ and $P(U_1=u_1),P(U_2=u_2)>0$.  From the above we have the following.

$P(X_1=(0,1),X_2=(0,1)|U_1=u_1,U_2=u_2)>0$ which implies that $P(Y=(0,1)|U_1=u_1,U_2=u_2)=0$ and $P(Y=(1,0)|U_1=u_1,U_2=u_2)=0$. 

%$P(X_1=(0,1),X_2=(1,1)|U_1=u_1,U_2=u_2)>0$ which implies that $P(Z=(0,0)|U_1=u_1,U_2=u_2)=0$.

$P(X_1=(1,0),X_2=(0,1)|U_1=u_1,U_2=u_2)>0$ which implies that $P(Y=(0,0)|U_1=u_1,U_2=u_2)=0$ and $P(Y=(1,1)|U_1=u_1,U_2=u_2)=0$. 

This is a contradiction since $p_{Y|U_1=u_1,U_2=u_2}$ has to be a probability distribution. Now suppose both $H(X_{21}|U_2=u_2)$ and $H(X_{22}|U_2=u_2)$ are strictly positive. Since $Y_2 := X_{2J}\in\{X_{21},X_{22}\}$, the only possibility is that $p_{X_2|U_2=u_2}$ has a support that is a superset of $\{(0,1),(1,0)\}$. Then we have the following.

$P(X_1=(0,1),X_2=(0,1)|U_1=u_1,U_2=u_2)>0$ which implies that $P(Y=(0,1)|U_1=u_1,U_2=u_2)=0$ and $P(Y=(1,0)|U_1=u_1,U_2=u_2)=0$. 

$P(X_1=(0,1),X_2=(1,0)|U_1=u_1,U_2=u_2)>0$ which implies that $P(Y=(0,0)|U_1=u_1,U_2=u_2)=0$ and $P(Y=(1,1)|U_1=u_1,U_2=u_2)=0$. 

This is a contradiction since $p_{Y|U_1=u_1,U_2=u_2}$ has to be a probability distribution. Thus, we have, under Case $3$, if $H(X_1|U_1=u_1)>0$, then there exists a $k\in\{1,2\}$ such that $H(X_{1k}|U_1=u_1)=0$.

%\vspace{2mm}
\par\noindent\underline{Step $2$:} Note that there exists $u_1$ with $P(U_1=u_1)>0$ such that $H(X_1|U_1=u_1)>0$ since $H(X_1|U_1)>0$.  So, by the discussion in Step~$1$, there exists a $k$ such that $H(X_{1k}|U_1=u_1)=0$. Note that $H(X_{1k^\prime}|U_1=u_1)>0$, where $k^\prime=3-k$, since $H(X_1|U_1=u_1)>0$.  Similarly, there exists $u_2$ with $P(U_2=u_2)>0$ such that $H(X_2|U_2=u_2)>0$. Now we show that $H(X_{2k}|U_2=u_2)=0$. We prove this by contradiction. Suppose $H(X_{2k}|U_2=u_2)>0$. Then in view of the above discussion, we have $H(X_{2k^\prime}|U_2=u_2)=0$. Without loss of generality, let $k=1$, i.e., $k^\prime=2$. Note that $p_{X_{12},X_{21}|U_1=u_1,U_2=u_2}=p_{X_{12}|U_1=u_1}\cdot p_{X_{21}|U_2=u_2}$ has full support since $H(X_{12}|U_1=u_1),H(X_{21}|U_2=u_2)>0$. Also, assume that $P(X_{11}=0|U_1=u_1)=1$ and $P(X_{22}=0|U_2=u_2)=1$ (other choices can be dealt similarly). Consider the probability distribution $p_{Y|U_1=u_1,U_2=u_2}$. Then we have the following.

$P(X_1=(0,1),X_2=(1,0)|U_1=u_1,U_2=u_2)=P(X_{12}=1,X_{21}=1|U_1=u_1,U_2=u_2)>0$, which implies that $P(Y=(0,0)|U_1=u_1,U_2=u_2)=0$ and $P(Y=(1,1)|U_1=u_1,U_2=u_2)=0$ since we have the Markov chain $Y-(U_1,U_2)-(X_1,X_2)$.

$P(X_1=(0,0),X_2=(0,0)|U_1=u_1,U_2=u_2)=P(X_{12}=0,X_{21}=0|U_1=u_1,U_2=u_2)>0$, which implies that $P(Y=(0,1)|U_1=u_1,U_2=u_2)=0$ and $P(Y=(1,0)|U_1=u_1,U_2=u_2)=0$ since we have the Markov chain $Y-(U_1,U_2)-(X_1,X_2)$.

This is a contradiction since $p_{Y|U_1=u_1,U_2=u_2}$ has to be a probability distribution. 

%\vspace{2mm}
\par\noindent\underline{Step $3$:} Now suppose that there exists a $u_1^\prime$ with $u_1^\prime\neq u_1$ and $P(U_1=u_1^\prime)>0$ such that $H(X_1|U_1=u_1^\prime)>0$. Since there is a $u_2$ such that $H(X_{2k}|U_2=u_2)=0$, by the same argument as above (reversing the roles of $1$ and $2$), we have $H(X_{1k}|U_1=u_1^\prime)=0$. 
This completes the proof of Claim~\ref{claim:ex}.

{\blue\section{Proof of Lemma~\ref{lem:cardbnd}}\label{app:cardbnd} 
The cardinality bound $|\mathcal{T}| \leq 3$ on the time-sharing random variable $T$ can be obtained using standard arguments based on the support lemma~\cite[Appendix C]{el2011network}\footnote{It suffices for the alphabet $\mathcal{T}$ to have 3 elements to preserve $I(U_1;X_1|Z,T)$, $I(U_2;X_2;|Z,T)$, and $I(U_1;X_1,Y|X_2,Z,T)$, thereby preserving the rate region.}. We now focus on cardinality bounds for $U_1$ and $U_2$.
Consider the region ${S}_\epsilon$ with $|\mathcal{T}|=1$:
\begin{align}
R_1&\geq I(U_1;X_1|Z) \label{eq:T3r1}\\
R_2&\geq I(U_2;X_2|Z) \label{eq:T3r2},\\
R_1+R_{01}&\geq I(U_1;X_1,Y|X_2,Z)-g(\epsilon), \label{eq:T3r3}
%R_2+R_{02}&\geq I(U_2;X_2,Y|X_1,Z) \label{eq:T3r4}\\
%R_1+R_2+R_{01}+R_{02} &\geq I(U_1,U_2;X_1,X_2,Y|Z), \label{eq:T3r5}
\end{align}
for some p.m.f. 
\begin{align}\label{p.m.f.structure1}
p(x_1,&x_2,z,u_1,u_2,y)\nonumber\\
&=p(z)p(x_1|z)p(x_2|z)p(u_1|x_1)p(u_2|x_2)p(y|u_1,u_2,z)
\end{align}
such that 
\begin{align}
\lVert \sum_{u_1,u_2} p(x_1,x_2,z,u_1,u_2,y)- q(x_1,x_2,z,y)\rVert\leq \epsilon. \label{eq:pmfmarginalize}
\end{align}
Call this region $\mathcal{C}$. We now show that the auxiliary cardinalities can be restricted to $|\mathcal{U}_1| \leq |\mathcal{X}_1||\mathcal{X}_2||\mathcal{Y}||\mathcal{Z}|$ and $|\mathcal{U}_2| \leq |\mathcal{U}_1||\mathcal{X}_1||\mathcal{X}_2||\mathcal{Y}||\mathcal{Z}|$ via the perturbation argument of \cite{gohari2012evaluation}. This is done in two steps.
\begin{itemize}
\item \textbf{Step 1:} We prove that 
\begin{align}
\mathcal{C} = \textup{Closure}\left(\underset{K_{1},K_{2} \geq 0}{\bigcup} \mathcal{C}^{K_{1},K_{2}}\right),
\end{align}
where the region $\mathcal{C}^{K_{1},K_{2}}$ for positive integers $K_1,K_2$ is given by the union of rate triples $(R_1,R_2,R_{01})$ satisfying \eqref{eq:T3r1}--\eqref{eq:T3r3} over random variables $(U_1,U_2,X_1,X_2,Y,Z)$ with cardinality bounds $|\mathcal{U}_1| \leq K_{1}$ $\&$ $|\mathcal{U}_2| \leq K_{2}$ and having a joint p.m.f. $p(z)p(x_1|z)p(x_2|z)p(u_1|x_1)p(u_2|x_2)p(y|u_1,u_2,z)$ such that $\lVert p(x_1,x_2,z,y)- q(x_1,x_2,z,y)\rVert\leq \epsilon$.
\item \textbf{Step 2:} If $|\mathcal{U}_1| \leq K_{1}$ and $|\mathcal{U}_2| \leq K_{2}$ for some constants $K_{1},K_{2}$, we show that the auxiliary cardinalities can be brought down to $|\mathcal{U}_1| \leq |\mathcal{X}_1||\mathcal{X}_2||\mathcal{Y}||\mathcal{Z}|$ and $|\mathcal{U}_2| \leq |\mathcal{U}_1||\mathcal{X}_1||\mathcal{X}_2||\mathcal{Y}||\mathcal{Z}|$.
\end{itemize}

\vspace{2mm}

\noindent \textbf{Proof Step 1:}\\
It suffices to show that any rate triple $(R_1,R_2,R_{01}) \in \mathcal{C}$ is a limit point of the set $\underset{K_{1},K_{2} \geq 0}{\bigcup} \mathcal{C}^{K_{1},K_{2}}$. Firstly, if $(R_1,R_2,R_{01}) \in \mathcal{C}$, then random variables $(U_1,U_2,X_1,X_2,Y,Z)$ satisfying \eqref{eq:T3r1}--\eqref{eq:T3r3} exist such that their joint p.m.f. is of the form \eqref{p.m.f.structure1} and \eqref{eq:pmfmarginalize} holds. Assume that $\mathcal{U}_j = \{1,2,\cdots\}$ for $j=1,2$. Define modified versions $(U_1',U_2')$ of random variables $(U_1,U_2)$ taking values in $\{1,2,\cdots,m\} \cup \mathcal{X}_1$ and $\{1,2,\cdots,m\} \cup \mathcal{X}_2$ respectively, where $m$ is an integer. The alphabet cardinalities of $(U_1',U_2')$ are $m+|\mathcal{X}_1|$ and $m+|\mathcal{X}_2|$ respectively. Let the conditional p.m.f. of $U_j'$ given $X_j$ for $j \in \{1,2\}$, $x_j \in \mathcal{X}_j$ be specified as follows:
\begin{align} 
p_{U_j'|X_j}(i|x_j)& = p_{U_j|X_j}(i|x_j), i=\{1,2,\cdots,m\}, \label{eq:truncpmf1}\\
p_{U_j'|X_j}(x_j'|x_j)& = 0 \:\: \forall \:\: x_j' \neq x_j, \label{eq:truncpmf2}\\
p_{U_j'|X_j}(x_j|x_j) &= \textup{Pr}(U_j>m|X_j=x_j)= \!\!\sum_{i=m+1}^{\infty}p_{U_j|X_j}(i|x_j). \label{eq:truncpmf3}
\end{align}
From the definitions in \eqref{eq:truncpmf1}--\eqref{eq:truncpmf3} and the fact that the original random variables $(U_1,U_2,X_1,X_2,Y,Z)$ satisfy $U_1 \to X_1 \to (U_2,X_2,Z)$ and $U_2 \to X_2 \to (U_1,X_1,Z)$, it follows that the Markov chains $U_1' \to X_1 \to (U_2',X_2,Z)$ and $U_2' \to X_2 \to (U_1',X_1,Z)$ hold as well.

To define $p_{Y|U_1',U_2',Z}$, consider new auxiliary random variables $(U_1'',U_2'')$ whose alphabets are the same as that of $(U_1,U_2)$, and their conditional p.m.f. given $U_j'$ for $j \in \{1,2\}$ is specified as follows:
\begin{itemize}
\item If $U_j' \in \{1,2,\cdots,m\}$, then $U_j''=U_j'$, i.e.,
\begin{align}
p_{U_j''|U_j'}(i|i) = 1, i=1,2,\cdots,m.
\end{align}
\item Otherwise, let
\begin{align}
&p_{U_j''|U_j'}(i|x_j) = \nonumber\\
&\begin{cases}
\textup{Pr}(U_j=i|X_j=x_j,U_j>m), &{\!\!i=m+1,m+2,\cdots}\\
0, &{\!\!i=1,2,\cdots,m}.
\end{cases} \label{eq:truncpmf4}
\end{align}
\end{itemize}
Note that we have
\begin{align}
\sum_{u_j'} p_{U_j'|X_j}&(u_j'|x_j)p_{U_j''|U_j'}(u_j''|u_j') \nonumber\\
&= p_{U_j|X_j}(u_j''|x_j), u_j'' \in \mathcal{U}_j, x_j \in \mathcal{X}_j, j=1,2. \label{eq:pmfcondition}
\end{align}
Further, let
\begin{align}
p_{Y|U_1'',U_2'',Z}(y|u_1,u_2,z) = p_{Y|U_1,U_2,Z}(y|u_1,u_2,z). \label{eq:truncpmf5}
\end{align}
The conditional distribution $p_{Y|U_1',U_2',Z}$ is defined as
\begin{align}
&p_{Y|U_1',U_2',Z}(y|u_1',u_2',z) \nonumber\\
&=\!\! \sum_{u_1'',u_2''} \! p_{U_1''|U_1'}(u_1''|u_1') p_{U_2''|U_2'}(u_2''|u_2') p_{Y|U_1'',U_2'',Z}(y|u_1'',u_2'',z) \label{eq:truncpmf6}\\
&=\!\! \sum_{u_1'',u_2''} \! p_{U_1''|U_1'}(u_1''|u_1') p_{U_2''|U_2'}(u_2''|u_2') p_{Y|U_1,U_2,Z}(y|u_1'',u_2'',z). 
\end{align}
From the definitions in \eqref{eq:truncpmf4}--\eqref{eq:truncpmf6} and the fact that the original random variables $(U_1,U_2,X_1,X_2,Y,Z)$ satisfy $Y \to (U_1,U_2,Z) \to (X_1,X_2)$, it follows that the Markov chain $Y \to (U_1',U_2',Z) \to (X_1,X_2)$ holds as well. Due to these Markov constraints, the joint distribution of $(U_1',U_2',X_1,X_2,Y,Z)$ is $p_{U_1',U_2',X_1,X_2,Y,Z}=p_{X_1,X_2,Z}p_{U_1'|X_1}p_{U_2'|X_2}p_{Y|U_1',U_2',Z}$. On marginalizing away $(U_1',U_2')$ from $p_{U_1',U_2',X_1,X_2,Y,Z}$, we obtain the p.m.f. $p_{X_1,X_2,Y,Z}$, which can be seen as follows.
\begin{align}
&\sum_{u_1',u_2'} (p_{X_1,X_2,Z}(x_1,x_2,z)p_{U_1'|X_1}(u_1'|x_1)p_{U_2'|X_2}(u_2'|x_2)\nonumber\\
&\hspace{1cm}\times p_{Y|U_1',U_2',Z}(y|u_1',u_2',z)) \notag\\
&= \sum_{u_1',u_2'} p_{X_1,X_2,Z}(x_1,x_2,z)p_{U_1'|X_1}(u_1'|x_1)p_{U_2'|X_2}(u_2'|x_2) \cdot\nonumber\\
&(\sum_{u_1'',u_2''} p_{U_1''|U_1'}(u_1''|u_1')p_{U_2''|U_2'}(u_2''|u_2')p_{Y|U_1'',U_2'',Z}(y|u_1'',u_2'',z)) \notag\\
&=\! \!\!\sum_{u_1'',u_2''}\!\! p_{X_1,X_2,Z}(x_1,x_2,z) ( \sum_{u_1'} p_{U_1'|X_1}(u_1'|x_1)p_{U_1''|U_1'}(u_1''|u_1') )\nonumber\\
 &( \sum_{u_2'} p_{U_2'|X_2}(u_2'|x_2)p_{U_2''|U_2'}(u_2''|u_2') ) p_{Y|U_1,U_2,Z}(y|u_1'',u_2'',z) \notag\\
& \stackrel{(a)}= \sum_{u_1'',u_2''} p_{X_1,X_2,Z}(x_1,x_2,z) p_{U_1|X_1}(u_1''|x_1) p_{U_2|X_2}(u_2''|x_2) \nonumber\\
&\hspace{1cm}p_{Y|U_1,U_2,Z}(y|u_1'',u_2'',z) \notag\\
&= p_{X_1,X_2,Y,Z}(x_1,x_2,y,z),
\end{align}
where (a) follows from \eqref{eq:pmfcondition}. Hence $p_{U_1' U_2' X_1 X_2 Y Z}$ satisfies \eqref{eq:pmfmarginalize}.

It follows that the joint distribution of the random variables $(U_{1}',U_{2}',X_{1},X_{2},Y,Z)$ converges to the the joint distribution of $(U_{1},U_{2},X_{1},X_{2},Y,Z)$ in the limit $m \to \infty$. As a result, the mutual information terms $I(U_{1}';X_{1}|Z)$, $I(U_{2}';X_{2}|Z)$ and $I(U_1';X_1,Y|X_2,Z)$ converge to $I(U_{1};X_{1}|Z)$, $I(U_{2};X_{2}|Z)$ and $I(U_1;X_1,Y|X_2,Z)$ respectively. Hence, we conclude that the given rate triple $(R_1,R_2,R_{01})$ is a limit point of the set $\underset{K_{1},K_{2} \geq 0}{\bigcup} \mathcal{C}^{K_{1},K_{2}}$.

\vspace{2mm}

\noindent \textbf{Proof Step 2:}\\
It suffices to consider an optimization of the weighted sum term $\lambda_1 I(U_1;X_1|Z)+\lambda_2 I(U_2;X_2|Z)+\lambda_3 I(U_1;X_1,Y|X_2,Z)$ for non-negative reals $\lambda_1,\lambda_2,\lambda_3$, and find new auxiliary random variables whose cardinalities are bounded while not increasing the weighted sum and preserving the conditions \eqref{p.m.f.structure1} and on \eqref{eq:pmfmarginalize} on $p(x_1,x_2,u_1,u_2,y,z)$.

For a given $p(u_1,u_2,x_1,x_2,y,z)$, consider the perturbation defined by
\begin{align*}
p_{\epsilon}(u_1,u_2,x_1,x_2,y,z)=p(u_1,u_2,x_1,x_2,y,z)\left(1+\epsilon \phi(u_1)\right).
\end{align*}
For $p_{\epsilon}(u_1,u_2,x_1,x_2,y,z)$ to be a valid p.m.f., we require that $\left(1+\epsilon \phi(u_1)\right) \geq 0$ for all $u_1$, and $\sum_{u_1}p(u_1)\phi(u_1)=0$. Furthermore, we will consider perturbations $\phi(u_1)$ such that
\begin{align} \label{eq:perturbT3}
&\mathbb{E}\left[\phi(U_1)|X_1=x_1,X_2=x_2,Y=y,Z=z\right] \nonumber\\
&= \sum_{u_1} p(u_1|x_1,x_2,y,z) \phi(u_1) = 0, \:\: \forall \:\: x_1, x_2, y, z.
\end{align}
Observe that such a non-zero perturbation satisfying \eqref{eq:perturbT3} (which also implies $\sum_{u_1}p(u_1)\phi(u_1)=0$) exists as long as $|\mathcal{U}_1| > |\mathcal{X}_1||\mathcal{X}_2||\mathcal{Y}||\mathcal{Z}|$, since the null-space of the constraints has rank at most $|\mathcal{X}_1||\mathcal{X}_2||\mathcal{Y}||\mathcal{Z}|$. For sufficiently small values of $|\epsilon|$, we also have $\left(1+\epsilon \phi(u_1)\right) \geq 0$ for all $u_1$.
Note that this perturbation preserves the distribution of $(X_1,X_2,Y,Z)$, i.e., $p(x_1,x_2,y,z)$. This follows since
\begin{align}
&p_{\epsilon}(x_1,x_2,y,z) = \sum_{u_1,u_2} p_{\epsilon}(u_1,u_2,x_1,x_2,y,z) \nonumber\\
&= \sum_{u_1,u_2} p(u_1,u_2,x_1,x_2,y,z)\left(1+\epsilon \phi(u_1)\right) \notag\\
&= p(x_1,x_2,y,z) \nonumber\\
&\hspace{0.9cm}\{\sum_{u_1} p(u_1|x_1,x_2,y,z)+\epsilon \sum_{u_1} p(u_1|x_1,x_2,y,z) \phi(u_1) \} \nonumber\\
&= p(x_1,x_2,y,z), \label{eq:pertpreserveQ}
\end{align}
where the last step follows from \eqref{eq:perturbT3}.

We now show that the perturbed distribution $p_{\epsilon}(\cdot)$ preserves the structure of the p.m.f. in \eqref{p.m.f.structure1}. To see  this, note that
\begin{align}
&p_{\epsilon}(x_1,x_2,z,u_1,u_2,y)\nonumber\\
&=p(x_1,x_2,z,u_1,u_2,y)\left(1+\epsilon \phi(u_1)\right) \notag\\
&= p(x_1)p(z|x_1)\left\{p(u_1|x_1)\left(1+\epsilon \phi(u_1)\right)\right\}p(x_2|z)p(u_2|x_2)\nonumber\\
&\hspace{1cm}p(y|u_1,u_2,z).
\end{align}
Marginalizing away $Z,X_2,U_2,Y$, we have
\begin{align}
p_{\epsilon}(x_1,u_1)&=p(x_1)p(u_1|x_1)\left(1+\epsilon \phi(u_1)\right).
\end{align}
Since $p_{\epsilon}(x_1)=p(x_1)$ (by \eqref{eq:pertpreserveQ}), we have $p_{\epsilon}(u_1|x_1)=p(u_1|x_1)\left(1+\epsilon \phi(u_1)\right)$. Thus,
\begin{align}
&p_{\epsilon}(x_1,x_2,z,u_1,u_2,y)\nonumber\\
&=p(z)p(x_1|z)p(x_2|z)p_{\epsilon}(u_1|x_1)p(u_2|x_2)p(y|u_1,u_2,z), \label{eq:pmfuxz}
\end{align}
which is of the form \eqref{p.m.f.structure1}.

Now if the distribution $p(u_1,u_2,x_1,x_2,y,z)$ minimizes $\lambda_1 I(U_1;X_1|Z)+\lambda_2 I(U_2;X_2|Z)+\lambda_3 I(U_1;X_1,Y|X_2,Z)$, then for any valid perturbation, we must have the extremality condition:
\begin{align}
&\frac{d}{d \epsilon} (\lambda_1 I_{\epsilon}(U_1;X_1|Z)+\lambda_2 I_{\epsilon}(U_2;X_2|Z)\nonumber\\
&\hspace{1cm}+\lambda_3 I_{\epsilon}(U_1;X_1,Y|X_2,Z)) \Big|_{\epsilon=0} = 0, \label{eq:extrempert}
\end{align}
where the subscript $\epsilon$ in the mutual information terms is used to denote that these are evaluated under the perturbed distribution $p_{\epsilon}(u_1,u_2,x_1,x_2,y,z)$.
We examine the weighted sum term under the perturbed distribution $p_{\epsilon}(\cdot)$.
\begin{align}
&\lambda_1 I_{\epsilon}(U_1;X_1|Z)+\lambda_2 I_{\epsilon}(U_2;X_2|Z)+\lambda_3 I_{\epsilon}(U_1;X_1,Y|X_2,Z) \notag\\
&= \lambda_1\left(H_{\epsilon}(X_1,Z)-H_{\epsilon}(Z)+H_{\epsilon}(U_1,Z)-H_{\epsilon}(U_1,X_1,Z)\right)\nonumber\\
&+\lambda_2\left(H_{\epsilon}(X_2,Z)-H_{\epsilon}(Z)+H_{\epsilon}(U_2,Z)-H_{\epsilon}(U_2,X_2,Z)\right) \notag\\
& +\lambda_3 (H_{\epsilon}(X_1,X_2,Y,Z)-H_{\epsilon}(X_2,Z)+H_{\epsilon}(U_1,X_2,Z)\nonumber\\
&\hspace{1cm}-H_{\epsilon}(U_1,X_1,X_2,Y,Z)) \notag\\
&\stackrel{(a)}= \lambda_1\left(H(X_1,Z)-H(Z)+H_{\epsilon}(U_1,Z)-H_{\epsilon}(U_1,X_1,Z)\right)\nonumber\\
&+\lambda_2\left(H(X_2,Z)-H(Z)+H_{\epsilon}(U_2,Z)-H_{\epsilon}(U_2,X_2,Z)\right) \notag\\
&+\lambda_3 (H(X_1,X_2,Y,Z)-H(X_2,Z)+H_{\epsilon}(U_1,X_2,Z)\nonumber\\
&\hspace{1cm}-H_{\epsilon}(U_1,X_1,X_2,Y,Z)) \notag\\
&\stackrel{(b)}= \lambda_1\Big(H(X_1,Z)-H(Z)+H(U_1,Z)+\epsilon H_{\phi}(U_1,Z)\nonumber\\
&\hspace{1cm}-[\sum_{u_1,z} p(u_1,z)(1+\epsilon \phi(u_1))\log(1+\epsilon \phi(u_1))]\nonumber\\
&\hspace{1cm}-H(U_1,X_1,Z)-\epsilon H_{\phi}(U_1,X_1,Z)\nonumber\\
&\hspace{1cm}+[\sum_{u_1,x_1,z} p(u_1,x_1,z)\left(1+\epsilon \phi(u_1)\right)\log\left(1+\epsilon \phi(u_1)\right)]\Big)\nonumber\\
&+\lambda_2\left(H(X_2,Z)-H(Z)+H_{\epsilon}(U_2,Z)-H_{\epsilon}(U_2,X_2,Z)\right) \notag\\
&+\lambda_3\Big(H(X_1,X_2,Y,Z)-H(X_2,Z)+H(U_1,X_2,Z)+\nonumber\\
&\epsilon H_{\phi}(U_1,X_2,Z)-\nonumber\\
&[\sum_{u_1,x_2,z} p(u_1,x_2,z)\left(1+\epsilon \phi(u_1)\right)\log\left(1+\epsilon \phi(u_1)\right)]\nonumber\\
&-H(U_1,X_1,X_2,Y,Z)-\epsilon H_{\phi}(U_1,X_1,X_2,Y,Z)+\nonumber\\
&[\sum_{u_1,x_1,x_2,y,z} p(u_1,x_1,x_2,y,z)\left(1+\epsilon \phi(u_1)\right)\log\left(1+\epsilon \phi(u_1)\right)]\Big)\nonumber\\
&\stackrel{(c)}= \lambda_1\left(I(U_1;X_1|Z)+\epsilon H_{\phi}(U_1,Z)-\epsilon H_{\phi}(U_1,X_1,Z)\right)\nonumber\\
&+\lambda_2\left(H(X_2,Z)-H(Z)+H_{\epsilon}(U_2,Z)-H_{\epsilon}(U_2,X_2,Z)\right) \notag\\
&+\lambda_3 (I(U_1;X_1,Y|X_2,Z)+\epsilon H_{\phi}(U_1,X_2,Z)\nonumber\\
&\hspace{1cm}-\epsilon H_{\phi}(U_1,X_1,X_2,Y,Z)), \label{eq:wsumT3}
\end{align}
where (a) follows since the joint distribution of $(X_1,X_2,Y,Z)$ is preserved from \eqref{eq:pertpreserveQ}, while in (b) we have defined
\begin{align*}
&H_{\phi}(U_1,Z) = -\sum_{u_1,z} p(u_1,z) \phi(u_1) \log p(u_1,z),\\
&H_{\phi}(U_1,X_1,Z) = -\sum_{u_1,x_1,z} p(u_1,x_1,z) \phi(u_1) \log p(u_1,x_1,z), \\
&H_{\phi}(U_1,X_1,X_2,Y,Z) \nonumber\\
&= -\!\!\sum_{u_1,x_1,x_2,y,z} p(u_1,x_1,x_2,y,z) \phi(u_1) \log p(u_1,x_1,x_2,y,z),
\end{align*}
and (c) follows from the fact that all the sums in the square brackets are equal.

Note that from \eqref{eq:pmfuxz},
\begin{align}
p_{\epsilon}(u_2,x_2,z) = p(u_2,x_2,z).
\end{align}
\begin{comment}
Consider the p.m.f. of $(U_2,X_2,Z)$ under the perturbation $p_{\epsilon}(\cdot)$.
\begin{align*}
p_{\epsilon}(u_2,x_2,z) &= \sum_{u_1,x_1,y} p_{\epsilon}(u_1,u_2,x_1,x_2,y,z) = \sum_{u_1,x_1,y} p(u_1,u_2,x_1,x_2,y,z)\left(1+\epsilon \phi(u_1)\right) \notag\\
&= p(u_2,x_2,z) \left\{\sum_{u_1} p(u_1|u_2,x_2,z)+\epsilon \sum_{u_1} p(u_1|u_2,x_2,z) \phi(u_1) \right\} = p(u_2,x_2,z) \notag\\
&\stackrel{(a)}= p(u_2,x_2,z) \left\{1+\epsilon \sum_{u_1} p(u_1|x_2,z) \phi(u_1) \right\} \stackrel{(b)}= p(u_2,x_2,z),
\end{align*}
where (a) follows since $U_2 \to X_2 \to (U_1,X_1,Z)$, while (b) follows from \eqref{eq:perturbT3}. 
\end{comment}
Hence it follows that
\begin{align}
H_{\epsilon}(U_2,X_2,Z) &= H(U_2,X_2,Z), \label{eq:entr3T3}\\
H_{\epsilon}(U_2,Z) &= H(U_2,Z). \label{eq:entr4T3}
\end{align}
Substituting \eqref{eq:entr3T3} and \eqref{eq:entr4T3} into \eqref{eq:wsumT3}, the weighted sum becomes
\begin{align}
&\lambda_1 I_{\epsilon}(U_1;X_1|Z)+\lambda_2 I_{\epsilon}(U_2;X_2|Z)+\lambda_3 I_{\epsilon}(U_1;X_1,Y|X_2,Z) \notag\\
&= \lambda_1\left(I(U_1;X_1|Z)+\epsilon H_{\phi}(U_1,Z)-\epsilon H_{\phi}(U_1,X_1,Z)\right) \nonumber\\
&\hspace{12pt}+\lambda_2 I(U_2;X_2|Z) +\lambda_3 (I(U_1;X_1,Y|X_2,Z)\notag\\
&\hspace{1cm}+\epsilon H_{\phi}(U_1,X_2,Z)-\epsilon H_{\phi}(U_1,X_1,X_2,Y,Z)). \label{eq:wsum2T3}
\end{align}
Now we apply the first derivative condition \eqref{eq:extrempert} to \eqref{eq:wsum2T3}. This yields
\begin{align}
&\lambda_1\left(H_{\phi}(U_1,Z)-H_{\phi}(U_1,X_1,Z)\right)\nonumber\\
&\hspace{12pt}+\lambda_3 \left(H_{\phi}(U_1,X_2,Z)-H_{\phi}(U_1,X_1,X_2,Y,Z)\right)= 0. \label{eq:wsum5T3} 
\end{align}
Substituting \eqref{eq:wsum5T3} into \eqref{eq:wsum2T3}, we obtain
\begin{align*}
&\lambda_1 I_{\epsilon}(U_1;X_1|Z)+\lambda_2 I_{\epsilon}(U_2;X_2|Z)+\lambda_3 I_{\epsilon}(U_1;X_1,Y|X_2,Z) \nonumber\\
&\!= \!\lambda_1 I(U_1;X_1|Z)+\!\lambda_2 I(U_2;X_2|Z)+\!\lambda_3 I(U_1;X_1,Y|X_2,Z).
\end{align*}
Thus if $p(u_1,u_2,x_1,x_2,y,z)$ attains the minimum of the weighted sum rate, then the latter is preserved for any valid perturbation $p_{\epsilon}(u_1,u_2,x_1,x_2,y,z)$ that satisfies \eqref{eq:perturbT3}. Now we choose $\epsilon$ such that
\begin{align*}
\min_{u_1} \left(1+\epsilon \phi(u_1)\right) = 0,
\end{align*}
and let $u_1=u_1^{*}$ attain this minimum. Clearly, $p_{\epsilon}(u_1^{*})=0$, and hence there exists an $U_1$ with cardinality at most $|\mathcal{U}_1|-1$ such that $\lambda_1 I(U_1;X_1|Z)+\lambda_2 I(U_2;X_2|Z)+\lambda_3 I(U_1;X_1,Y|X_2,Z)$ is preserved. We can proceed by induction until $|\mathcal{U}_1|=|\mathcal{X}_1||\mathcal{X}_2||\mathcal{Y}||\mathcal{Z}|$. When this happens, we are no longer guaranteed the existence of a non-trivial $\phi(u_1)$ satisfying \eqref{eq:perturbT3}. Hence we can restrict the cardinality to $|\mathcal{U}_1| \leq |\mathcal{X}_1||\mathcal{X}_2||\mathcal{Y}||\mathcal{Z}|$.

We next perturb $U_2$. For a given $p(x_1,x_2,u_1,u_2,y,z)$, consider another perturbation defined by
\begin{align*}
p'_{\epsilon}(x_1,x_2,u_1,u_2,y,z)=p(x_1,x_2,u_1,u_2,y,z)\left(1+\epsilon \phi'(u_2)\right).
\end{align*}
We require that $\left(1+\epsilon \phi'(u_2)\right) \geq 0$ for all $u_2$, and $\sum_{u_2}p(u_2)\phi'(u_2)=0$. Furthermore, let $\phi'(u_2)$ be such that
\begin{align} \label{eq:perturbT3U2}
&\mathbb{E}\left[\phi'(U_2)|U_1=u_1,X_1=x_1,X_2=x_2,Y=y,Z=z\right] \nonumber\\
&= \sum_{u_2} p(u_2|u_1,x_1,x_2,y,z) \phi'(u_2) = 0, \forall \:\: u_1, x_1, x_2, y, z.
\end{align}
Such a non-zero perturbation satisfying \eqref{eq:perturbT3U2} exists as long as $|\mathcal{U}_2| > |\mathcal{U}_1||\mathcal{X}_1||\mathcal{X}_2||\mathcal{Y}||\mathcal{Z}|$.
Again, it can be verified that $p'_{\epsilon}(\cdot)$  preserves $p(x_1,x_2,y,z)$ as well as the structure of the p.m.f. in \eqref{p.m.f.structure1}.
We examine the weighted sum term under the perturbed distribution $p'_{\epsilon}(\cdot)$.
\begin{align}
&\lambda_1 I_{\epsilon}(U_1;X_1|Z)+\lambda_2 I_{\epsilon}(U_2;X_2|Z)+\lambda_3 I_{\epsilon}(U_1;X_1,Y|X_2,Z) \notag\\
&= \lambda_1\left(H(X_1,Z)-H(Z)+H_{\epsilon}(U_1,Z)-H_{\epsilon}(U_1,X_1,Z)\right)\nonumber\\
&+\lambda_2\left(H(X_2,Z)-H(Z)+H_{\epsilon}(U_2,Z)-H_{\epsilon}(U_2,X_2,Z)\right) \notag\\
& +\lambda_3 (H(X_1,X_2,Y,Z)-H(X_2,Z)+H_{\epsilon}(U_1,X_2,Z)\nonumber\\
&\hspace{12pt}-H_{\epsilon}(U_1,X_1,X_2,Y,Z)) \notag\\
&\stackrel{(a)}= \lambda_1\left(H(X_1,Z)-H(Z)+H_{\epsilon}(U_1,Z)-H_{\epsilon}(U_1,X_1,Z)\right)\nonumber\\
&+\lambda_2\left(I(U_2;X_2|Z)+\epsilon H_{\phi'}(U_2,Z)-\epsilon H_{\phi'}(U_2,X_2,Z)\right) \notag\\
&+\lambda_3(H(X_1,X_2,Y,Z)-H(X_2,Z)+H_{\epsilon}(U_1,X_2,Z)\nonumber\\
&\hspace{12pt}-H_{\epsilon}(U_1,X_1,X_2,Y,Z)), \label{eq:wsumT3U2} 
\end{align}
where in (a) we have defined $H_{\phi'}(U_2,X_2,Z) = -\sum_{u_2,x_2,z} p(u_2,x_2,z) \phi'(u_2) \log p(u_2,x_2,z)$. Consider the p.m.f. of $(U_1,X_1,X_2,Y,Z)$ under the perturbation $p'_{\epsilon}(\cdot)$.
\begin{align*}
&p'_{\epsilon}(u_1,x_1,x_2,y,z) = \sum_{u_2} p'_{\epsilon}(u_1,u_2,x_1,x_2,y,z)\nonumber\\
& = \sum_{u_2} p(u_1,u_2,x_1,x_2,y,z)\left(1+\epsilon \phi'(u_2)\right) \notag\\
&= p(u_1,x_1,x_2,y,z) \left\{1+\epsilon \sum_{u_2} p(u_2|u_1,x_1,x_2,y,z) \phi'(u_2) \right\} \nonumber\\
&\stackrel{(a)}= p(u_1,x_1,x_2,y,z),
\end{align*}
where (a) follows from \eqref{eq:perturbT3U2}. Hence it follows that the terms $H_{\epsilon}(U_1,X_1,X_2,Y,Z)$, $H_{\epsilon}(U_1,X_2,Z)$, $H_{\epsilon}(U_1,Z)$ and $H_{\epsilon}(U_1,X_1,Z)$ are preserved under $p'_{\epsilon}(\cdot)$. Hence, the weighted sum in \eqref{eq:wsumT3U2} becomes
\begin{align}
&\lambda_1 I_{\epsilon}(U_1;X_1|Z)+\lambda_2 I_{\epsilon}(U_2;X_2|Z)+\lambda_3 I_{\epsilon}(U_1;X_1,Y|X_2,Z) \notag\\
&= \lambda_1 I(U_1;X_1|Z)+\lambda_2 (I(U_2;X_2|Z)+\epsilon H_{\phi'}(U_2,Z)\nonumber\\
&\hspace{12pt}-\epsilon H_{\phi'}(U_2,X_2,Z)) +\lambda_3 I(U_1;X_1,Y|X_2,Z). \label{eq:wsum2Tq3U2}
\end{align}
Now we apply the first derivative condition to \eqref{eq:wsum2Tq3U2}. This yields
\begin{align}
&\lambda_2\left(H_{\phi'}(U_2,Z)- H_{\phi'}(U_2,X_2,Z)\right)= 0. \label{eq:wsum5Tq3U2} 
\end{align}
Substituting \eqref{eq:wsum5Tq3U2} into \eqref{eq:wsum2Tq3U2}, it follows that the weighted sum is preserved. Thus if $p(u_1,u_2,x_1,x_2,y,z)$ attains the minimum of the weighted sum rate, then the latter is preserved for any valid perturbation $p'_{\epsilon}(u_1,u_2,x_1,x_2,y,z)$ that satisfies \eqref{eq:perturbT3U2}. Now the proof is completed by choosing $\epsilon$ such that $
\min_{u_2} \left(1+\epsilon \phi'(u_2)\right) = 0$
and the cardinality of $U_2$ drops by $1$. We can proceed by induction until $|\mathcal{U}_2|=|\mathcal{U}_1||\mathcal{X}_1||\mathcal{X}_2||\mathcal{Y}||\mathcal{Z}|$. 
}

{\blue\section{Proof of Lemma~\ref{lem:cardbndU0}}\label{app:cardbndU0}
Since the arguments are along the same lines as Appendix~\ref{app:cardbnd}, we only outline the differences below. Again, the cardinality bound on $T$ can be obtained using the support lemma~\cite[Appendix C]{el2011network}, and we focus on cardinality bounds for $U_0$, $U_1$ and $U_2$. Let $\mathcal{C}'$ denote the region in $S^\prime_\epsilon$ with $|\mathcal{T}|=1$, i.e., the collection of $(R_1,R_2)$ such that:
\begin{align}
R_1 &\geq I(U_0,U_1;X_1|Z) \label{eq:thm5e1}\\
R_2 &\geq I(U_0,U_2;X_2|Z) \label{eq:thm5e2}\\
R_1+R_2 &\geq I(U_0,U_1,U_2;X_1,X_2|Z), \label{eq:thm5e3}
\end{align}
for some p.m.f. 
\begin{align}
&p(x_1,x_2,u_0,u_1,u_2,y,z)=p(z)p(x_1|z)p(x_2|z)p(u_0)\nonumber\\
&\hspace{24pt}\times p(u_1|x_1,u_0)p(u_2|x_2,u_0)p(y|u_1,u_2,z) \label{eq:pmfstruct2}
\end{align} 
such that 
\begin{align}
\lVert\sum\limits_{u_0,u_1,u_2}p(x_1,x_2,u_0,u_1,u_2,y,z)-q(x_1,x_2,y,z)\rVert\leq \epsilon. \label{eq:pmfmarg2}
\end{align}
As before, we use the perturbation argument of \cite{gohari2012evaluation}, and the proof is in two steps.
\begin{itemize}
\item \textbf{Step 1:} We prove that $\mathcal{C}' = \textup{Closure}\left(\underset{K_{0},K_{1},K_{2} \geq 0}{\bigcup} \mathcal{C}'^{K_{0},K_{1},K_{2}}\right),$ where the region $\mathcal{C}'^{K_{0},K_{1},K_{2}}$ for positive integers $K_0,K_1,K_2$ is given by the union of rate pairs $(R_1,R_2)$ satisfying \eqref{eq:thm5e1}--\eqref{eq:thm5e3} over $(U_0,U_1,U_2,X_1,X_2,Y,Z)$ with cardinality bounds $|\mathcal{U}_0| \leq K_{0}$, $|\mathcal{U}_1| \leq K_{1}$ $\&$ $|\mathcal{U}_2| \leq K_{2}$ and having a joint p.m.f. 
\begin{align*}
&p(z)p(x_1)p(x_2)p(u_0)p(u_1|x_1,u_0)p(u_2|x_2,u_0)\\
&\hspace{12pt}\times p(y|u_1,u_2,z)
\end{align*}
 such that $$\lVert p(x_1,x_2,y,z)-q(x_1,x_2,y,z)\rVert\leq\epsilon$$.
\item \textbf{Step 2:} If $|\mathcal{U}_0| \leq K_{0}$, $|\mathcal{U}_1| \leq K_{1}$ and $|\mathcal{U}_2| \leq K_{2}$ for some constants $K_{0},K_{1},K_{2}$, we show that the auxiliary cardinalities can be brought down to $|\mathcal{U}_0| \leq |\mathcal{X}_1||\mathcal{X}_2||\mathcal{Y}||\mathcal{Z}|$, $|\mathcal{U}_1| \leq |\mathcal{U}_0||\mathcal{X}_1||\mathcal{X}_2||\mathcal{Y}||\mathcal{Z}|$ and $|\mathcal{U}_2| \leq |\mathcal{U}_0||\mathcal{X}_1||\mathcal{X}_2||\mathcal{Y}||\mathcal{Z}|$.
\end{itemize}

\vspace{2mm}

\noindent \textbf{Proof Step 1:}\\
We show that any rate pair $(R_1,R_2) \in \mathcal{C}'$ is a limit point of the set $\underset{K_{0},K_{1},K_{2} \geq 0}{\bigcup} \mathcal{C}'^{K_{0},K_{1},K_{2}}$. Firstly, if $(R_1,R_2) \in \mathcal{C}'$, then random variables $(U_0,U_1,U_2,X_1,X_2,Y,Z)$ satisfying \eqref{eq:thm5e1}--\eqref{eq:thm5e3} exist such that their joint p.m.f. is of the form \eqref{eq:pmfstruct2} and \eqref{eq:pmfmarg2} holds. Assume that $\mathcal{U}_j = \{1,2,\cdots\}$ for $j=0,1,2$. Define a modified version $U_0'$ of the random variable $U_0$ with p.m.f. specified as:
%taking values in $\{1,2,\cdots,m\}$. 
%The alphabet cardinality of $U_0'$ is $m$.
%Let the p.m.f. of $U_0'$ be specified as:
\begin{align} 
p_{U_0'}(i) = \begin{cases}
p_{U_0}(i), &{i=\{1,2,\cdots,m\}}\\
\textup{Pr}(U_0>m), &{i=m+1,}
\end{cases}
\end{align}
where $m$ is an integer. The alphabet of $U_0'$ is $\mathcal{U}_0'=\{1,2,\cdots,m+1\}$.
Likewise, define modified versions $(U_1',U_2')$ of random variables $(U_1,U_2)$ taking values in $\{1,2,\cdots,m\} \cup \mathcal{X}_1 \cup (\mathcal{X}_1 \times \mathcal{U}_0^\prime)$ and $\{1,2,\cdots,m\} \cup \mathcal{X}_2 \cup (\mathcal{X}_2 \times \mathcal{U}_0^\prime)$ respectively. The alphabet cardinalities of $(U_1',U_2')$ are $m+|\mathcal{X}_1|+(m+1)|\mathcal{X}_1|$ and $m+|\mathcal{X}_2|+(m+1)|\mathcal{X}_2|$ respectively. 
%When $U_0>m$, let the conditional p.m.f. of $U_j'$ given $(X_j,U_0)$ for $j \in \{1,2\}$ be specified as in the proof of cardinality for Theorem~\ref{thm:indep} in Appendix~\ref{app:cardbnd}. On the other hand, when $U_0 \in \{1,2,\cdots,m\}$, let the conditional p.m.f. of $U_j'$ given $(X_j,U_0')$ for $j \in \{1,2\}$, $x_j \in \mathcal{X}_j$ be specified as:
Let the conditional p.m.f. of $U_j'$ given $(X_j,U_0')$ for $j \in \{1,2\}$, $x_j \in \mathcal{X}_j$ be specified as:
\begin{align} 
&p_{U_j'|X_j,U_0'}(i|x_j,u_0') \nonumber\\
&\!= p_{U_j|X_j,U_0}(i|x_j,u_0'),\! u_0' \in \{1,2,\cdots,m\}, \!i \in \{1,2,\cdots,m\}, \label{eq:truncpmfU01}\\
&p_{U_j'|X_j,U_0'}((x_j,u_0')|x_j,u_0') \nonumber\\
&= \sum_{i=m+1}^{\infty}p_{U_j|X_j,U_0}(i|x_j,u_0'), u_0' \in \{1,2,\cdots,m\}, \label{eq:truncpmfU03}\\
&p_{U_j'|X_j,U_0'}(x_j|x_j,m+1) = 1, \label{eq:truncpmfU02}\\
&p_{U_j'|X_j,U_0'}(\cdot|\cdot,\cdot) = 0, \: \textup{otherwise}. 
\end{align}

As before, to define $p_{Y|U_1',U_2',Z}$, consider new auxiliary random variables $(U_1'',U_2'')$ whose alphabets are the same as that of $(U_1,U_2)$. Let the conditional p.m.f. $p_{U_1'',U_2''|U_1',U_2'}$ be specified as follows:
\begin{itemize}
\item If ${U}_j^\prime\in\{1,2,\dots,m\}\cup (\mathcal{X}_j\times \mathcal{U}_0)$, $j=1,2$, then, we define
\begin{align}
P_{U_1^{\prime\prime},U_2^{\prime\prime}|U_1^\prime,U_2^\prime}&(u_1'',u_2''|u_1',u_2')\nonumber\\
&=P_{U_1^{\prime\prime}|U_1^\prime}(u_1''|u_1')\cdot P_{U_2^{\prime\prime}|U_2^\prime}(u_2''|u_2'),
\end{align}
where 
%\item If $U_j' \in \{1,2,\cdots,m\}$, then $U_j''=U_j'$, i.e.,
\begin{align}
p_{U_j''|U_j'}(i|i) = 1, i=1,2,\cdots,m.
\end{align}
%\item If $U_j' \in \mathcal{X}_j \times \mathcal{U}_0'$, then
\begin{align}
&p_{U_j''|U_j'}(k|(x_j,i)) = \nonumber\\
&\hspace{-12pt}\begin{cases}
\textup{Pr}(U_j=k|X_j=x_j,U_0=i,U_j>m), &\!\!\!{k\geq m+1}\\
0, &\!\!\!{k=1,2,\cdots,m}.
\end{cases} \label{eq:truncpmfU05}
\end{align}
\item Note that, otherwise, $U_0'=m+1$ and hence, $U_1^\prime\in\mathcal{X}_1$ and $U_2^\prime\in\mathcal{X}_2$. For this case, we define
\begin{align}\label{eq:truncpmfU04}
&p_{U_1'',U_2''|U_1',U_2'}(u_1,u_2|x_1,x_2) =\nonumber\\
& \sum_{u_0} \textup{Pr}(U_0=u_0|U_0>m)p_{U_1|X_1,U_0}(u_1|x_1,u_0)\cdot\nonumber\\
&\hspace{12pt}p_{U_2|X_2,U_0}(u_2|x_2,u_0).
\end{align}
\end{itemize}
%This completely defines the p.m.f. $p_{U_1'',U_2''|U_1',U_2'}$ noting that $U_1^\prime\in\mathcal{X}_1$ iff $U_0^\prime=m+1$ iff $U_2^\prime\in\mathcal{X}_2$. 
It is easy to verify that
\begin{align}
p_{U_1'',U_2''|X_1,X_2}(u_1,u_2|x_1,x_2)=p_{U_1,U_2|X_1,X_2}(u_1,u_2|x_1,x_2).
\end{align}
Further, let $$p_{Y|U_1'',U_2'',Z}(y|u_1,u_2,z) = p_{Y|U_1,U_2,Z}(y|u_1,u_2,z)$$. The conditional distribution $p_{Y|U_1',U_2',Z}$ is defined as
\begin{align}
&p_{Y|U_1',U_2',Z}(y|u_1',u_2',z) \nonumber\\
&= \sum_{u_1'',u_2''} p_{U_1'',U_2''|U_1',U_2'}(u_1'',u_2''|u_1',u_2') p_{Y|U_1'',U_2'',Z}(y|u_1'',u_2'',z) \label{eq:truncpmfU06}\\
&= \sum_{u_1'',u_2''} p_{U_1'',U_2''|U_1',U_2'}(u_1'',u_2''|u_1',u_2') p_{Y|U_1,U_2,Z}(y|u_1'',u_2'',z). 
\end{align}
The above definitions preserve all the Markov chains associated with the original p.m.f. As a result, the joint distribution of $(U_0',U_1',U_2',X_1,X_2,Y,Z)$ factors according to $p_{U_0',U_1',U_2',X_1,X_2,Y,Z}=p_{X_1,X_2,Z}p_{U_0'}p_{U_1'|X_1,U_0'}p_{U_2'|X_2,U_0'}p_{Y|U_1',U_2',Z}$. On marginalizing away $(U_0',U_1',U_2')$ from $p_{U_0',U_1',U_2',X_1,X_2,Y,Z}$, we may verify that we obtain the p.m.f. $p_{X_1,X_2,Y,Z}$.

It follows that the joint distribution of the random variables $(U_{0}',U_{1}',U_{2}',X_{1},X_{2},Y,Z)$ converges to the the joint distribution of $(U_0,U_{1},U_{2},X_{1},X_{2},Y,Z)$ in the limit $m \to \infty$. As a result, the mutual information terms $I(U_{0}',U_{1}';X_{1}|Z)$, $I(U_{0}',U_{2}';X_{2}|Z)$ and $I(U_{0}',U_1',U_{2}',;X_1,X_2|Z)$ converge to $I(U_{0},U_{1};X_{1}|Z)$, $I(U_{0},U_{2};X_{2}|Z)$ and $I(U_{0},U_1,U_{2},;X_1,X_2|Z)$ respectively. Hence, we conclude that the given rate pair $(R_1,R_2)$ is a limit point of the set $\underset{K_{0},K_{1},K_{2} \geq 0}{\bigcup} \mathcal{C}'^{K_{0},K_{1},K_{2}}$.

\vspace{2mm}

\noindent \textbf{Proof Step 2:}\\
As in Appendix~\ref{app:cardbnd}, it suffices to consider an optimization of the weighted sum term $\lambda_1 I(U_0,U_1;X_1|Z)+\lambda_2 I(U_0,U_2;X_2|Z)+\lambda_3 I(U_0,U_1,U_2;X_1,X_2|Z)$ for non-negative reals $\lambda_1,\lambda_2,\lambda_3$, and find new auxiliary random variables whose cardinalities are bounded while not increasing the weighted sum. For a given $p(x_1,x_2,u_0,u_1,u_2,y,z)$, consider the perturbation defined by
\begin{align*}
p_{\epsilon}(x_1,x_2,u_0,u_1,u_2,y,z)&=p(x_1,x_2,u_0,u_1,u_2,y,z)\nonumber\\
&\hspace{12pt}\times\left(1+\epsilon \phi(u_0)\right).
\end{align*}
For $p_{\epsilon}(x_1,x_2,u_0,u_1,u_2,y,z)$ to be a valid p.m.f., we require that $\left(1+\epsilon \phi(u_0)\right) \geq 0$ for all $u_0$, and $\sum_{u_0}p(u_0)\phi(u_0)=0$. Furthermore, we will consider perturbations $\phi(u_0)$ such that
\begin{align} \label{eq:perturbT3U0}
&\mathbb{E}\left[\phi(U_0)|X_1=x_1,X_2=x_2,Y=y,Z=z\right] \nonumber\\
&= \sum_{u_0} p(u_0|x_1,x_2,y,z) \phi(u_0) = 0, \:\: \forall \:\: x_1, x_2 , y, z.
\end{align}
Observe that such a non-zero perturbation satisfying \eqref{eq:perturbT3U0} exists as long as $|\mathcal{U}_0| > |\mathcal{X}_1||\mathcal{X}_2||\mathcal{Y}||\mathcal{Z}|$.
Similar to Appendix~\ref{app:cardbnd}, it can be shown verified that this perturbation preserves $p(x_1,x_2,y,z)$ as well as the structure of the p.m.f. in \eqref{eq:pmfstruct2}.
We examine the weighted sum term under $p_{\epsilon}(\cdot)$.
\begin{align}
&\lambda_1 I_{\epsilon}(U_0,U_1;X_1|Z)+\lambda_2 I_{\epsilon}(U_0,U_2;X_2|Z)\nonumber\\
&\hspace{12pt}+\lambda_3 I_{\epsilon}(U_0,U_1,U_2;X_1,X_2|Z) \notag\\
&= \lambda_1\left(H(X_1|Z)+H_{\epsilon}(U_0,U_1,Z)-H_{\epsilon}(U_0,U_1,X_1,Z)\right)\nonumber\\
&+\lambda_2\left(H(X_2|Z)+H_{\epsilon}(U_0,U_2,Z)-H_{\epsilon}(U_0,U_2,X_2,Z)\right) \notag\\
& +\lambda_3 (H(X_1,X_2|Z)+H_{\epsilon}(U_0,U_1,U_2,Z)\nonumber\\
&\hspace{12pt}-H_{\epsilon}(U_0,U_1,U_2,X_1,X_2,Z)) \notag\\
&\stackrel{(a)}= \lambda_1\Big(I(U_0,U_1;X_1|Z)+\epsilon H_{\phi}(U_0,U_1,Z)\nonumber\\
&\hspace{12pt}-\epsilon H_{\phi}(U_0,U_1,X_1,Z)\Big)\nonumber\\
&+\lambda_2\Big(I(U_0,U_2;X_2|Z)+\epsilon H_{\phi}(U_0,U_2,Z)\nonumber\\
&\hspace{12pt}-\epsilon H_{\phi}(U_0,U_2,X_2,Z)\Big) +\lambda_3 \Big(I(U_0,U_1,U_2;X_1,X_2|Z)\notag\\
&\hspace{12pt}+\epsilon H_{\phi}(U_0,U_1,U_2,Z)-\epsilon H_{\phi}(U_0,U_1,U_2,X_1,X_2,Z)\Big), \label{eq:wsumT3U0}
\end{align}
where in (a) we have defined 
\begin{align*}
H_{\phi}(U_0,&U_1,U_2,X_1,X_2,Z) \nonumber\\
&= -\sum_{u_0,u_1,u_2,x_1,x_2,z} p(u_0,u_1,u_2,x_1,x_2,z) \phi(u_0)\cdot\nonumber\\
&\hspace{36pt} \log p(u_0,u_1,u_2,x_1,x_2,z).
\end{align*}

Now we apply the first derivative condition to \eqref{eq:wsumT3U0}. This yields
\begin{align}
&\lambda_1\left(H_{\phi}(U_0,U_1,Z)- H_{\phi}(U_0,U_1,X_1,Z)\right)\nonumber\\
&+\lambda_2(H_{\phi}(U_0,U_2,Z)- H_{\phi}(U_0,U_2,X_2,Z)) \notag\\
&+\lambda_3 \left(H_{\phi}(U_0,U_1,U_2,Z)- H_{\phi}(U_0,U_1,U_2,X_1,X_2,Z)\right)= 0. \label{eq:wsum5T3U0} 
\end{align}
Substituting \eqref{eq:wsum5T3U0} into \eqref{eq:wsumT3U0}, it follows that the weighted sum is preserved. Thus if $p(u_0,u_1,u_2,x_1,x_2,y,z)$ attains the minimum of the weighted sum rate, then the latter is preserved for any valid perturbation $p_{\epsilon}(u_0,u_1,u_2,x_1,x_2,y,z)$ that satisfies \eqref{eq:perturbT3U0}. Now we choose $\epsilon$ such that $\min_{u_0} \left(1+\epsilon \phi(u_0)\right) = 0$ and let $u_0=u_0^{*}$ attain this minimum. This makes $p_{\epsilon}(u_0^{*})=0$, and the cardinality of $U_0$ can be reduced by one. We can proceed by induction until $|\mathcal{U}_0|=|\mathcal{X}_1||\mathcal{X}_2||\mathcal{Y}||\mathcal{Z}|$. Hence we can restrict the cardinality to $|\mathcal{U}_0| \leq |\mathcal{X}_1||\mathcal{X}_2||\mathcal{Y}||\mathcal{Z}|$.

We next perturb $U_1$. For a given $p(x_1,x_2,u_0,u_1,u_2,y,z)$, consider another perturbation defined by
\begin{align*}
p'_{\epsilon}(x_1,x_2,u_0,u_1,u_2,y,z)&=p(x_1,x_2,u_0,u_1,u_2,y,z)\nonumber\\
&\hspace{12pt}\times\left(1+\epsilon \phi'(u_1)\right).
\end{align*}
We require that $\left(1+\epsilon \phi'(u_1)\right) \geq 0$ for all $u_1$, and $\sum_{u_1}p(u_1)\phi'(u_1)=0$. Furthermore, let $\phi'(u_1)$ be such that
\begin{align} \label{eq:perturbT23U0}
&\mathbb{E}\left[\phi'(U_1)|U_0=u_0,X_1=x_1,X_2=x_2,Y=y,Z=z\right] \nonumber\\
&= \sum_{u_1} p(u_1|u_0,x_1,x_2,y,z) \phi'(u_1) = 0, \forall \:\: u_0, x_1, x_2, y, z.
\end{align}
Such a non-zero perturbation satisfying \eqref{eq:perturbT23U0} exists as long as $|\mathcal{U}_1| > |\mathcal{U}_0||\mathcal{X}_1||\mathcal{X}_2||\mathcal{Y}||\mathcal{Z}|$.
Again, it can be verified that $p'_{\epsilon}(\cdot)$  preserves $p(x_1,x_2,y,z)$ as well as the structure of the p.m.f. in \eqref{eq:pmfstruct2}.
We examine the weighted sum term under the perturbed distribution $p'_{\epsilon}(\cdot)$.
\begin{align}
&\lambda_1 I_{\epsilon}(U_0,U_1;X_1|Z)+\lambda_2 I_{\epsilon}(U_0,U_2;X_2|Z)\nonumber\\
&\hspace{12pt}+\lambda_3 I_{\epsilon}(U_0,U_1,U_2;X_1,X_2|Z) \notag\\
&= \lambda_1\left(H(X_1|Z)+H_{\epsilon}(U_0,U_1,Z)-H_{\epsilon}(U_0,U_1,X_1,Z)\right)\nonumber\\
&+\lambda_2\left(H(X_2|Z)+H_{\epsilon}(U_0,U_2,Z)-H_{\epsilon}(U_0,U_2,X_2,Z)\right) \notag\\
&+\lambda_3 (H(X_1,X_2|Z)+H_{\epsilon}(U_0,U_1,U_2,Z)\nonumber\\
&\hspace{12pt}-H_{\epsilon}(U_0,U_1,U_2,X_1,X_2,Z)) \notag\\
%&\stackrel{(a)}= \lambda_1\left(H(X_1|Z)+H(U_0,U_1,Z)+\epsilon H_{\phi'}(U_0,U_1,Z)-H(U_0,U_1,X_1,Z)-\epsilon H_{\phi'}(U_0,U_1,X_1,Z)\right) \notag\\
%&\phantom{w}+\lambda_2\left(H(X_2|Z)+H_{\epsilon}(U_0,U_2,Z)-H_{\epsilon}(U_0,U_2,X_2,Z)\right) \notag\\
%&\phantom{w}+\lambda_3\left(H(X_1,X_2|Z)+H(U_0,U_1,U_2,Z)+\epsilon H_{\phi'}(U_0,U_1,U_2,Z)-H(U_0,U_1,U_2,X_1,X_2,Z)-\epsilon H_{\phi'}(U_0,U_1,U_2,X_1,X_2,Z)\right) \notag\\
&\stackrel{(a)}= \lambda_1\Big(I(U_0,U_1;X_1|Z)+\epsilon H_{\phi'}(U_0,U_1,Z)\nonumber\\
&\hspace{12pt}-\epsilon H_{\phi'}(U_0,U_1,X_1,Z)\Big)\nonumber\\
&+\lambda_2\left(H(X_2|Z)+H_{\epsilon}(U_0,U_2,Z)-H_{\epsilon}(U_0,U_2,X_2,Z)\right) \notag\\
&+\lambda_3 \Big(I(U_0,U_1,U_2;X_1,X_2|Z)+\epsilon H_{\phi'}(U_0,U_1,U_2,Z)\nonumber\\
&\hspace{12pt}-\epsilon H_{\phi'}(U_0,U_1,U_2,X_1,X_2,Z)\Big), \label{eq:wsumTq3U0}
\end{align}
where in (a) we have defined 
\begin{align*}
H_{\phi'}&(U_0,U_1,U_2,X_1,X_2,Z) \nonumber\\
&= -\sum_{u_0,u_1,u_2,x_1,x_2,z} p(u_0,u_1,u_2,x_1,x_2,z) \phi'(u_1) \cdot\nonumber\\
&\hspace{36pt}\log p(u_0,u_1,u_2,x_1,x_2,z).
\end{align*}
 Consider the p.m.f. of $(U_0,U_2,X_2,Z)$ under the perturbation $p'_{\epsilon}(\cdot)$.
\begin{align*}
&p'_{\epsilon}(u_0,u_2,x_2,z) = \sum_{u_1,x_1,y} p'_{\epsilon}(u_0,u_1,u_2,x_1,x_2,y,z) \nonumber\\
&= \sum_{u_1,x_1,y} p(u_0,u_1,u_2,x_1,x_2,y,z)\left(1+\epsilon \phi'(u_1)\right) \notag\\
&= p(u_0,u_2,x_2,z) \left\{1+\epsilon \sum_{u_1} p(u_1|u_0,u_2,x_2,z) \phi'(u_1) \right\} \notag\\
&\stackrel{(a)}= p(u_0,u_2,x_2,z) \left\{1+\epsilon \sum_{u_1} p(u_1|x_2,u_0) \phi'(u_1) \right\} \nonumber\\
&\stackrel{(b)}= p(u_0,u_2,x_2,z),
\end{align*}
where (a) follows since $U_2 \to (X_2,U_0) \to (U_1,X_1,Z)$, while (b) follows from \eqref{eq:perturbT23U0}. Hence it follows that
\begin{align}
H_{\epsilon}(U_0,U_2,X_2,Z) &= H(U_0,U_2,X_2,Z), \label{eq:entr3T23U0}\\
H_{\epsilon}(U_0,U_2,Z) &= H(U_0,U_2,Z). \label{eq:entr4T23U0}
\end{align}
Substituting \eqref{eq:entr3T23U0} and \eqref{eq:entr4T23U0} into \eqref{eq:wsumTq3U0}, the weighted sum becomes
\begin{align}
&\lambda_1 I_{\epsilon}(U_0,U_1;X_1|Z)+\lambda_2 I_{\epsilon}(U_0,U_2;X_2|Z)\nonumber\\
&\hspace{12pt}+\lambda_3 I_{\epsilon}(U_0,U_1,U_2;X_1,X_2|Z) \notag\\
&= \lambda_1\Big(I(U_0,U_1;X_1|Z)+\epsilon H_{\phi'}(U_0,U_1,Z)\nonumber\\
&\hspace{12pt}-\epsilon H_{\phi'}(U_0,U_1,X_1,Z)\Big)\nonumber\\
& +\lambda_2 I(U_0,U_2;X_2|Z)+\lambda_3 \Big(I(U_0,U_1,U_2;X_1,X_2|Z) \notag\\
&+\epsilon H_{\phi'}(U_0,U_1,U_2,Z)-\epsilon H_{\phi'}(U_0,U_1,U_2,X_1,X_2,Z)\Big). \label{eq:wsum2Tq3U0}
\end{align}
Now we apply the first derivative condition to \eqref{eq:wsum2Tq3U0}. This yields
\begin{align}
&\lambda_1\left(H_{\phi'}(U_0,U_1,Z)- H_{\phi'}(U_0,U_1,X_1,Z)\right)+\nonumber\\
&\lambda_3 \left(H_{\phi'}(U_0,U_1,U_2,Z)- H_{\phi'}(U_0,U_1,U_2,X_1,X_2,Z)\right)= 0. \label{eq:wsum5Tq3U0} 
\end{align}
Substituting \eqref{eq:wsum5Tq3U0} into \eqref{eq:wsum2Tq3U0}, it follows that the weighted sum is preserved. Thus if $p(u_0,u_1,u_2,x_1,x_2,y,z)$ attains the minimum of the weighted sum rate, then the latter is preserved for any valid perturbation $p'_{\epsilon}(u_0,u_1,u_2,x_1,x_2,y,z)$ that satisfies \eqref{eq:perturbT23U0}. Now the proof is completed by choosing $\epsilon$ such that $
\min_{u_1} \left(1+\epsilon \phi'(u_1)\right) = 0$
and the cardinality of $U_1$ drops by $1$. We can proceed by induction until $|\mathcal{U}_1|=|\mathcal{U}_0||\mathcal{X}_1||\mathcal{X}_2||\mathcal{Y}||\mathcal{Z}|$. The same argument can be repeated to make $|\mathcal{U}_2| \leq |\mathcal{U}_0||\mathcal{X}_1||\mathcal{X}_2||\mathcal{Y}||\mathcal{Z}|$ as well.
}
\bibliographystyle{IEEEtran}
\bibliography{mylit}
\begin{IEEEbiographynophoto}
{Gowtham R. Kurri} (Member, IEEE) graduated from the International Institute of Information Technology, Hyderabad, India, with a B.\ Tech.\ degree in Electronics and Communication Engineering, in 2011. He received his M.Sc. and Ph.D. degrees from the Tata Institute of Fundamental Research, Mumbai, India in 2020. Since 2020, he has been a Post-Doctoral Researcher at the School of Electrical, Computer and Energy Engineering at Arizona State University. His research interests are in information theory and statistical machine learning.

From 2011-2012, he worked as an Associate Engineer at Qualcomm India Private Limited, Hyderabad, India. From July to October, 2019, he was a Research Intern in the Blockchain Technology Group at IBM Research, Bangalore, India. 
\end{IEEEbiographynophoto}
\begin{IEEEbiographynophoto}
{Viswanathan Ramachandran} was born in Kerala, India. He received the Ph.D. degree in electrical engineering from the Indian Institute of Technology Bombay in 2020. He worked as a visiting researcher at the Tata Institute of Fundamental Research during 2020. Since 2021, he has been a postdoctoral fellow with the Department of Electrical Engineering, Technical University of Eindhoven, the Netherlands. His research interests are in information theory and multi-user information theory with applications to wireless and optical channels. He was a recipient of the Naik and Rastogi Award for Excellence in Ph.D. research from the Indian Institute of Technology Bombay in 2021, and also a Best Paper Award at the 2019 25th National Conference on Communications (NCC) held at the Indian Institute of Science Bangalore.
\end{IEEEbiographynophoto}
\begin{IEEEbiographynophoto}
{Sibi Raj B. Pillai} (Member, IEEE) received the Ph.D. degree in computer science and communication systems from EPFL, Switzerland, in July 2007. From October 2007 to April 2009, he was a Research Fellow with the University of Melbourne. Since 2009, he has been a Faculty with the Department of Electrical Engineering, Indian Institute of Technology Bombay. His research interests are in network information theory, feedback communications, cross layer scheduling, biological information inheritance, and radar signal processing.
\end{IEEEbiographynophoto}
\begin{IEEEbiographynophoto}
{Vinod M. Prabhakaran} (Member, IEEE) received the M.E. degree from the Indian Institute of Science in 2001 and the Ph.D. degree from the University of California, Berkeley in 2007. He was a Post-Doctoral Researcher at the University of Illinois, Urbana-Champaign from 2008 to 2010 and the Ecole Polytechnique Fédérale de Lausanne, Switzerland in 2011. Since 2011, he has been at the School of Technology and Computer Science at the Tata Institute of Fundamental Research, Mumbai. His research interests are in information theory, communication, cryptography, and signal processing. He was an Associate Editor of Shannon Theory for the IEEE Transactions on Information Theory from 2016 to 2019. He is currently an Associate Editor of Security and Privacy for the IEEE Transactions on Information Theory.
\end{IEEEbiographynophoto}
%\nocite{}
\end{document}